\newcounter{mathseed}
\tikzset
{
	paper/.style =
	{
		draw = MyDarkBlue!10, blur shadow, every shadow/.style = { opacity = 1, MyDarkBlue }, shading = bilinear interpolation, lower left = MyDarkBlue!10, upper left = MyDarkBlue!5, upper right = GreenTeal!75, lower right = MyDarkBlue!5, fill=none
	},
	irregular cloudy border/.style =
	{
		decoration = { irregular fractal line, amplitude = 0.2 }, decorate,
	},
	irregular spiky border/.style =
	{
		decoration = { irregular fractal line, amplitude = -0.2 }, decorate,
	},
	ragged border/.style =
	{
		decoration = {random steps, segment length = 7mm, amplitude = 2mm }, decorate
	}
}
\def\tornpaper#1{%
	\ifthenelse{\isodd{\value{mathseed}}}
	{%
		\tikz
		{
			\node[inner sep = 1em] (A) {#1};		
			\begin{pgfonlayer}{background}			
				\fill[paper]						
				\pgfextra{\pgfmathsetseed{\arabic{mathseed}}\addtocounter{mathseed}{1}}%
				{decorate[irregular cloudy border]{decorate{decorate{decorate{decorate[ragged border]{
										(A.north west) -- (A.north east)
				}}}}}}
				-- (A.south east)
				\pgfextra{\pgfmathsetseed{\arabic{mathseed}}}%
				{decorate[irregular spiky border]{decorate{decorate{decorate{decorate[ragged border]{
										-- (A.south west)
				}}}}}}
				-- (A.north west);
			\end{pgfonlayer}
		}
	}
	{%
		\tikz{
			\node[inner sep=1em] (A) {#1};  
			\begin{pgfonlayer}{background}  
				\fill[paper] 
				\pgfextra{\pgfmathsetseed{\arabic{mathseed}}\addtocounter{mathseed}{1}}%
				{decorate[irregular spiky border]{decorate{decorate{decorate{decorate[ragged border]{
										(A.north east) -- (A.north west)
				}}}}}}
				-- (A.south west)
				\pgfextra{\pgfmathsetseed{\arabic{mathseed}}}%
				{decorate[irregular cloudy border]{decorate{decorate{decorate{decorate[ragged border]{
										-- (A.south east)
				}}}}}}
				-- (A.north east);
		\end{pgfonlayer}}
	}
}
\numberwithin{equation}{section}
\definecolor{MyLightRed}{RGB}{244, 213, 245}
\definecolor{WordRed}{RGB}{255, 0, 102}
\definecolor{RedDarkLightest}{HTML}{ff0088}
\definecolor{RedDarkLight}{HTML}{ea005f}
\definecolor{RedPurple}{HTML}{aa007f}
\definecolor{Purple}{HTML}{911146}
\definecolor{WordLightGreen}{RGB}{140, 214, 192}
\definecolor{WordGreen}{RGB}{0, 176, 80}
\definecolor{GreenLightest}{HTML}{00ffa0}
\definecolor{GreenLighter1}{HTML}{00b383}
\definecolor{GreenLighter2}{HTML}{00aa7f}
\definecolor{GreenDark}{HTML}{225522}
\definecolor{GreenTeal}{HTML}{008080}
\definecolor{WordIceBlue}{RGB}{223, 227, 229}
\definecolor{MyVeryLightBlue}{RGB}{211, 245, 247}
\definecolor{WordBlueVeryLight}{RGB}{0, 176, 240}
\definecolor{WordBlueLight}{RGB}{0, 112, 192}
\definecolor{WordBlueDark}{RGB}{46, 116, 181}
\definecolor{WordBlueDarker}{RGB}{31, 78, 121}
\definecolor{WordBlueDarker25}{RGB}{54, 96, 146}
\definecolor{WordBlueDarker50}{RGB}{36, 64, 98}
\definecolor{WordBlueDarkest}{RGB}{0, 32, 96}
\definecolor{WordBlue}{RGB}{19, 65, 99}
\definecolor{MyBlue}{RGB}{0, 64, 128}
\definecolor{MyDarkBlue}{RGB}{0, 51, 102}
\definecolor{BlueVeryDark}{HTML}{222255}
\definecolor{MagentaVeryLight}{RGB}{178, 162, 201}
\definecolor{MagentaLighter}{RGB}{161, 106, 221}
\definecolor{MagentaLight}{RGB}{128, 100, 162}
\definecolor{MagentaDark}{RGB}{106, 65, 152}
\definecolor{MagentaVeryDark}{RGB}{97, 75, 128}
\definecolor{WordAquaLighter80}{RGB}{218, 238, 243}
\definecolor{WordAquaLighter60}{RGB}{183, 222, 232}
\definecolor{WordAquaLighter40}{RGB}{146, 205, 220}
\definecolor{WordAquaDarker25}{RGB}{49, 134, 155}
\definecolor{WordAquaDarker50}{RGB}{33, 89, 103}
\definecolor{WordVeryLightTeal}{RGB}{223, 236, 235}
\definecolor{WordLightTeal}{RGB}{160, 199, 197}
\definecolor{WordDarkTealLighter80}{RGB}{207, 223, 234}
\definecolor{WordDarkTeal}{RGB}{72, 123, 119}
\definecolor{WordDarkerTeal}{RGB}{48, 82, 80}
\definecolor{WordTurquoiseLighter80}{RGB}{209, 238, 249}
\definecolor{Brown}{HTML}{666633}
\newcommand{\orcidicon}[1]{\href{https://orcid.org/#1}{\includegraphics[height=\fontcharht\font`\B]{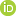}}}
\newtheorem{definition}{Definition}[section]
\newtheorem{theorem}{Theorem}[section]
\newtheorem{lemma}[theorem]{Lemma}
\newtheorem{proposition}[theorem]{Proposition}
\newtheorem{corollary}[theorem]{Corollary}
\newtheorem{example}{Example}[section]
\title
	{
		A Quantum Detectable Byzantine Agreement Protocol using only EPR pairs
	}
\author
	{
		Theodore Andronikos$^1$\orcidicon{0000-0002-3741-1271}
		and
		Alla Sirokofskich$^2$\\
		\\
		$^1$ \ Department of Informatics, Ionian University, \\
		7 Tsirigoti Square, 49100 Corfu, Greece; \\
		andronikos@ionio.gr \\
		$^2$ \ Department of History and Philosophy of Sciences, \\
		National and Kapodistrian University of Athens, \\
		Athens 15771, Greece; \\
		asirokof@math.uoa.gr
	}
\begin{document}

\maketitle

\begin{abstract}
	In this paper, we introduce a new quantum protocol for Detectable Byzantine Agreement. What distinguishes the proposed protocol among similar quantum protocols, is the fact that it uses only EPR pairs, and, in particular, $\ket{ \Psi^{ + } }$ pairs. There are many sophisticated quantum protocols that guarantee Detectable Byzantine Agreement, but they do not easily lend themselves to practical implementations, due to present-day technological limitations. For a large number $n$ of players, $\ket{ GHZ }$ $n$-tuples, or other more exotic entangled states, are not easy to produce, a fact which might complicate the scalability of such protocols. In contrast, Bell states are, undoubtedly, the easiest to generate among maximally entangled states. This will, hopefully, facilitate the scalability of the proposed protocol, as only EPR pairs are required, irrespective of the number $n$ of players. Finally, we mention that, even for arbitrary many players $n$, our protocol always completes in a constant number of rounds, namely $4$.
	\\
	\\
\textbf{Keywords:}: Byzantine Agreement, Quantum Detectable Byzantine Agreement, quantum entanglement, Bell states, quantum games.
\end{abstract}
\section{Introduction} \label{sec:Introduction}

The term ``Byzantine Agreement'' has its origins in the landmark paper \cite{Lamport1982}. The paper studied the problem of using a playful terminology, involving Byzantium, a commanding general and many lieutenant generals, some loyal and some traitors. In another seminal work \cite{Pease1980} the same authors had previously tackled the same problem but in a more formal way. The approach in \cite{Pease1980} was perfectly distributed and symmetrical, whereas the approach in \cite{Lamport1982} has an slight asymmetry because the commanding general assumes the role of coordinating the lieutenant generals.

Currently, probably the most important application of Byzantine Agreement protocols would be in facilitating the implementation of a distributed consensus mechanism that enforces the consistency of data. Byzantine Agreement protocols are critical for successfully tackling distributed fault scenarios in the presence of adversaries (traitors) and enhancing the security of blockchain networks.

Today, it is clear that we have entered the quantum era. This is much more than a buzzword, as it promises to bring fundamental changes in our capabilities, indicated by the impressive progress in the construction of new, more powerful than before, quantum computers, such as IBM's $127$ qubit processor Eagle \cite{IBMEagle} and the more recent $433$ qubit Osprey \cite{IBMOsprey} quantum processor. At the same time, caution is warranted because these new powerful quantum computers bring us closer to the practical implementation of the quantum algorithms developed by Peter Shor and Lov Grover \cite{Shor1994, Grover1996}, that can compromise the security of the classical world.

It is fitting that the quantum paradigm can also be employed to provide us with novel quantum algorithms that adequately protect our critical information and achieve uncompromised security. The unique quantum phenomena are exploited in the design of secure protocols, e.g., for key distribution, as in \cite{Bennett1984, Ekert1991, Gisin2004, inoue2002differential, guan2015experimental, waks2006security, Ampatzis2021}, for secret sharing, as in \cite{Ampatzis2022, Ampatzis2023}, for cloud storage \cite{attasena2017secret, ermakova2013secret} or blockchain \cite{cha2021blockchain, Sun2020, Qu2023}. The emerging quantum setting has inspired researchers to approach the problem of Byzantine Agreement from a quantum perspective.

In an influential paper \cite{Fitzi2001} in 2001, the authors introduced a variant of Byzantine Agreement, called ``Detectable Byzantine Agreement'' and gave the first quantum protocol for Detectable Byzantine Agreement. Their protocol, which tackled only the case of $3$ players, used entangled qutrits in the Aharonov state. Since then, the concept of Detectable Byzantine Agreement has become widely accepted and has generated a plethora of similar works. Other early efforts include \cite{Cabello2003}, where another protocol was proposed utilizing a four-qubit singlet state, and \cite{Neigovzen2008} that introduced a continuous variable solution to the Byzantine Agreement problem with multipartite entangled Gaussian states. More recently, notable contributions were made by \cite{Feng2019}, presenting a quantum protocol based on tripartite GHZ-like states and homodyne measurements in the continuous variable scenario, by \cite{Sun2020}, devising a protocol without entanglement, with the additional attractive feature of achieving agreement in only three rounds, and by \cite{Qu2023}, introducing a general $n$-party quantum protocol based on GHZ states, with possible use in a quantum blockchain network.

In this work, we introduce a novel quantum protocol, named EPRQDBA, and formally prove that EPRQDBA achieves detectable byzantine agreement. Almost all previous analogous protocols have been cast in the form of a game, and, we, too, follow this tradition. In particular, in our protocol, the familiar protagonists Alice, Bob and Charlie make one more appearance. It is instructive to mention that the pedagogical nature of games often makes expositions of difficult and technical concepts easier to understand and appreciate. Ever since their introduction in 1999 \cite{Meyer1999, Eisert1999}, quantum games, have offered additional insight because quite often quantum strategies seem to achieve better results than classical ones \cite{Andronikos2018, Andronikos2021, Andronikos2022a}. The famous prisoners' dilemma game provides the most prominent example \cite{Eisert1999, Giannakis2019}, which also applies to other abstract quantum games \cite{Giannakis2015a}. The quantization of many classical systems can even apply to political structures, as was shown in \cite{Andronikos2022}.

\textbf{Contribution}. This paper presents a new quantum protocol for Detectable Byzantine Agreement, called EPRQDBA. Indeed there exist many sophisticated quantum protocols guaranteed to achieve Detectable Byzantine Agreement, so why a new one? The distinguishing feature of the EPRQDBA protocol, which does not utilize a quantum signature scheme, is the fact that it only relies on EPR pairs. Exotic multi-particle entangled states employed in other protocol are not so easy to produce with our current quantum apparatus. This increases preparation time and complicates their use in situations where the number $n$ of players is large. Similarly, although for small values of $n$ $\ket{ GHZ_{ n } }$ states are easily generated by contemporary quantum computers, when the number $n$ of players increases, it becomes considerably more difficult to prepare and distribute $\ket{ GHZ_{ n } }$ tuples. Thus, protocols that require $\ket{ GHZ_{ n } }$ tuples for $n$ players dot not facilitate scalability. In contrast, Bell states are, without a doubt the easiest to generate among maximally entangled states. The EPRQDBA only requires EPR pairs, specifically $\ket{ \Psi^{ + } }$ pairs, irrespective of the number $n$ of players. This leads to a reduction in preparation time, increases scalability, and, ultimately, practicability. Another strong point of the EPRQDBA protocol is the fact that it completes in a constant number of rounds, specifically $4$ rounds. Thus, the completion time of the protocol is independent of the number $n$ of players, which also enhances its scalability.

\subsection*{Organization} \label{subsec:Organization}

The paper is organized as follows. Section \ref{sec:Introduction} contains an introduction to the subject along with bibliographic pointers to related works. Section \ref{sec:Preliminaries} provides a brief exposition to all the concepts necessary for the understanding of our protocol. Section \ref{sec:The $3$ Player EPRQDBA Protocol} contains a detailed exposition of the EPRQDBA protocol for the special case of $3$ players, namely Alice, Bob and Charlie. Section \ref{sec:The $n$ Player EPRQDBA Protocol} gives a formal presentation of the EPRQDBA protocol in the general case of $n$ players. Finally, Section \ref{sec:Discussion and Conclusions} contains a summary and a discussion on some of the finer points of this protocol.

\section{Preliminaries} \label{sec:Preliminaries}

\subsection{$\ket{ \Psi^{ + } }$ EPR pairs} \label{subsec:Psi^{ + } EPR Pairs}

Quantum entanglement is one of the most celebrated properties of quantum mechanics and, without exaggeration, serves as the core of the majority of quantum protocols. Mathematically, entangled states of composite systems must be described as a linear combination of two or more product states of their subsystems, as a single product state will not suffice. The famous Bell states are special quantum states of two qubits, also called EPR pairs, that represent the simplest form of maximal entanglement. These states are succinctly described by the next formula from \cite{Nielsen2010}.

\begin{align} \label{eq:Bell States General Equation}
	\ket{ \beta_{ x, y } } = \frac { \ket{ 0 } \ket{ y } + (-1)^x \ket{ 1 } \ket{ \Bar{ y } } } { \sqrt{ 2 } } \ ,
\end{align}

where $\ket{\Bar{y}}$ is the negation of $\ket{y}$.

There are four Bell states and their specific mathematical expression is given below. The subscripts $A$ and $B$ are used to emphasize the subsystem to which the corresponding qubit belongs, that is, qubits $\ket{ \cdot }_{ A }$ belong to Alice and qubits $\ket{ \cdot }_{ B }$ belong to Bob.

\begin{tcolorbox}
	[
		grow to left by = 1.50 cm,
		grow to right by = 0.00 cm,
		colback = white,			
		enhanced jigsaw,			
		sharp corners,
		toprule = 0.1 pt,
		bottomrule = 0.1 pt,
		leftrule = 0.1 pt,
		rightrule = 0.1 pt,
		sharp corners,
		center title,
		fonttitle = \bfseries
	]
	\begin{minipage}[b]{0.475 \textwidth}
		\begin{align} \label{eq:Bell State Phi +}
			\ket{ \Phi^{ + } } = \ket{ \beta_{ 00 } } = \frac { \ket{ 0 }_{ A } \ket{ 0 }_{ B } + \ket{ 1 }_{ A } \ket{ 1 }_{ B } } { \sqrt{ 2 } }
		\end{align}
	\end{minipage} 
	\hfill
	\begin{minipage}[b]{0.45 \textwidth}
		\begin{align} \label{eq:Bell State Phi -}
			\ket{ \Phi^{ - } } = \ket{ \beta_{ 10 } } = \frac { \ket{ 0 }_{ A } \ket{ 0 }_{ B } - \ket{ 1 }_{ A } \ket{ 1 }_{ B } } { \sqrt{ 2 } }
		\end{align}
	\end{minipage}
	\begin{minipage}[b]{0.475 \textwidth}
		\begin{align} \label{eq:Bell State Psi +}
			\ket{ \Psi^{ + } } = \ket{ \beta_{ 01 } } = \frac { \ket{ 0 }_{ A } \ket{ 1 }_{ B } + \ket{ 1 }_{ A } \ket{ 0 }_{ B } } { \sqrt{ 2 } }
		\end{align}
	\end{minipage} 
	\hfill
	\begin{minipage}[b]{0.45 \textwidth}
		\begin{align} \label{eq:Bell State Psi -}
			\ket{ \Psi^{ - } } = \ket{ \beta_{ 11 } } = \frac { \ket{ 0 }_{ A } \ket{ 1 }_{ B } - \ket{ 1 }_{ A } \ket{ 0 }_{ B } } { \sqrt{ 2 } }
		\end{align}
	\end{minipage}
	
\end{tcolorbox}

For existing quantum computers that use the circuit model, it is quite easy to generate Bell states. The proposed protocol relies on $\ket{ \Psi^{ + } } = \frac { \ket{ 0 }_{ A } \ket{ 1 }_{ B } + \ket{ 1 }_{ A } \ket{ 0 }_{ B } } { \sqrt{ 2 } }$ pairs. Apart from $\ket{ \Psi^{ + } }$ pairs, we shall make use of qubits in another well-known state, namely $\ket{+}$. For convenience, we recall the definition of $\ket{+}$

\begin{align}
	\ket{ + } = H \ket{ 0 } = \frac { \ket{ 0 } + \ket{ 1 } } { \sqrt{ 2 } }
	\label{eq:Ket +}
	\ ,
\end{align}

which can be immediately produced by applying the Hadamard transform on $\ket{ 0 }$. Let us also clarify that all quantum registers are measured with respect to the computational basis $B = \{ \ket{ 0 }, \ket{ 1 } \}$.

\subsection{Detectable Byzantine Agreement} \label{subsec:Detectable Byzantine Agreement}

In the landmark paper \cite{Lamport1982}, the authors used the term the ``Byzantine Generals Problem'' to formulate the problem we study in this paper. A little later, two other influential works \cite{Dolev1983, Fischer1986} changed the terminology a bit and refereed to the same problem as the ``Byzantine Agreement Problem,'' a term that has been consistently used in the literature ever since. The problem itself involves $n$ generals of the Byzantine empire, one of them being the commanding general and the rest $n - 1$ being lieutenant generals. The commanding general must communicate his order to the lieutenant generals who are in different geographical locations. The overall situation is further complicated by the existence of traitors among the generals (possibly including even the commanding general). The notion of Byzantine Agreement was defined in \cite{Lamport1982} as follows.

\begin{definition} [Byzantine Agreement] \label{def:Byzantine Agreement} \
	A protocol achieves \emph{Byzantine Agreement} (BA) if it satisfies the following conditions.
\end{definition}

\begin{enumerate} [ left = 0.75 cm, labelsep = 0.50 cm ]
	\renewcommand\labelenumi{(\textbf{BA}$_\theenumi$)}
	\item	All loyal lieutenant generals follow the same order.
	\item	When the commanding general is loyal, all loyal lieutenant generals follow the commanding general's order.
\end{enumerate}

The authors in \cite{Lamport1982} not only presented an algorithm that, under certain assumptions, achieves BA, but also obtained the significant result that, under the assumption of pairwise authenticated channels among the generals, BA can be attained if and only if $t < \frac { n } { 3 }$, where $n$ is the number of generals and $t$ is the number of traitors. Later it was established that this bound can't be improved even if additional resources are assumed (see \cite{Fitzi2001a}). An important special illustration of this fact, is the case of $n = 3$ generals, exactly one of which is a traitor because it is impossible to achieve BA in such a case, by any classical protocol.

In the seminal paper \cite{Fitzi2001} a variant of BA called Detectable Byzantine Agreement (DBA from now on) was introduced. By slightly relaxing the requirements of the BA protocol, and, in particular, allowing the loyal generals to abort the protocol, an action that is not permitted in the original protocol, the authors were able to give the first quantum protocol for DBA. Their protocol relied on Aharonov states and improved the previous bound by achieving agreement in the special case of one traitor among three generals. Although \cite{Fitzi2001} considered only the case of $3$ generals, later works introduced DBA protocols for $n$ parties. In this work we use the Definition \ref{def:Detectable Byzantine Agreement} for DBA.

\begin{definition} [Detectable Byzantine Agreement] \label{def:Detectable Byzantine Agreement} \
	A protocol achieves \emph{Detectable Byzantine Agreement} (DBA) if it satisfies the following conditions.
\end{definition}

\begin{enumerate} [ left = 1.00 cm, labelsep = 0.50 cm ]
	\renewcommand\labelenumi{(\textbf{DBA}$_\theenumi$)}
	\item	If all generals are loyal, the protocol achieves Byzantine Agreement.
	\item	\textbf{Consistency}. All loyal generals either follow the same order or abort the protocol.
	\item	\textbf{Validity}. If the commanding general is loyal, then either all loyal lieutenant generals follow the commanding general’s order or abort the protocol.
\end{enumerate}

A comparison of Definition \ref{def:Byzantine Agreement} with Definition \ref{def:Detectable Byzantine Agreement} shows that the critical difference between BA and DBA lies in the extra capability of the generals to abort the protocol in the DBA case. As is common practice, we shall assume that the commanding general's orders are either $0$ or $1$, and we shall use the symbol $\bot$ to signify that the decision to abort.

\subsection{Assumptions and setting} \label{subsec:Assumptions and Setting}

In this section we begin the presentation of our EPR-based protocol for Detectable Byzantine Agreement, or EPRQDBA for short. For the sake of completeness, we explicitly state the assumptions that underlie the execution of the EPRQDBA protocol.

\begin{enumerate} [ left = 0.50 cm, labelsep = 0.50 cm ]
	\renewcommand\labelenumi{(\textbf{A}$_\theenumi$)}
	\item	There is a ``quantum source'' responsible for generating single qubits in the $\ket{ + }$ state and EPR pairs entangled in the $\ket{ \Psi^{ + } }$ state. The source distributes these qubits to each general through a quantum channel.
	\item	A complete network of pairwise authenticated classical channels connects all generals.
	\item	The classical network is synchronous coordinated by a global clock.
	\item	The protocol unfolds in rounds. In every round each general may receive messages, perform computations, and send messages. The messages sent during the current round are guaranteed to have arrived to their intended recipients by the beginning of the next round.
	\item	All measurements are performed with respect to the computational basis $B = \{ \ket{ 0 }, \ket{ 1 } \}$.
\end{enumerate}

We follow the tradition and describe the EPRQDBA protocol as a game. The players in this game are the $n$ spatially distributed generals. It will be convenient to divide them into two groups. The commanding general alone comprises the first group. From now on the famous Alice will play the role of the commanding general. The $n - 1$ lieutenant generals make up the second group. In the special case of $n = 3$ that we shall examine shortly, the venerable Bob and Charlie will play the $2$ lieutenant generals. In the general case, where $n > 3$, we will mostly employ generic names such LT$_0$, \dots, LT$_{n - 2}$, LT standing of course for ``lieutenant general.'' Occasionally, to emphasize some point, we may use again Bob or Charlie. To make the presentation easier to follow, we shall first show how the EPRQDBA protocol works in the special case of only $3$ players, namely Alice, Bod and Charlie.

\section{The $3$ player EPRQDBA protocol} \label{sec:The $3$ Player EPRQDBA Protocol}

\subsection{Entanglement distribution phase} \label{subsec:Entanglement Distribution Phase}

The EPRQDBA protocol can be conceptually organized into $3$ distinct phases. The first is the entanglement distribution phase, which refers to the generation and distribution of qubits and entangled EPR pairs. As we have briefly explained in assumption ($\mathbf{A}_{ 1 }$), we assume the existence of a trusted quantum source that undertakes this role. It is a relatively easy task, in view of the capabilities of modern quantum apparatus. Hence, the quantum source will have no difficulty in producing

\begin{itemize}
	\item	$2 m$ qubits in the $\ket{+}$ state by simply applying the Hadamard transform on $\ket{ 0 }$, and
	\item	$2 m$ EPR pairs in the $\ket{ \Psi^{ + } }$ state, which can be easily generated by contemporary quantum computers.
\end{itemize}

In the above scheme, the parameter $m$ is a properly chosen positive integer. Further suggestions regarding appropriate values of $m$ are provided in the mathematical analysis of the protocol.

Afterwards, the source transmits the qubits to the intended recipients through the quantum channels following the pattern outlined below.

\begin{enumerate} [ left = 0.70 cm, labelsep = 0.50 cm ]
	\renewcommand\labelenumi{(\textbf{$3$D}$_\theenumi$)}
	\item	The $2 m$ EPR pairs are numbered from $0$ to $2 m - 1$.
	\item	From every EPR pair $k$, where $0 \leq k \leq 2 m - 1$, the first qubit is sent to Alice.
	\item	The source sends to Bob $2 m$ qubits: $q'_{ 0 }$, $q'_{ 1 }$, \dots, $q'_{ 2 m - 2 }$, $q'_{ 2 m - 1 }$. The transmission alternates qubits in the $\ket{+}$ state (in the even positions of this sequence) with the second qubit of the odd-numbered EPR pairs (in the odd positions of this sequence).
	\item	Symmetrically, the source sends to Charlie $2 m$ qubits: $q''_{ 0 }$, $q''_{ 1 }$, \dots, $q''_{ 2 m - 2 }$, $q''_{ 2 m - 1 }$. In the even positions of this sequence, the source inserts the second qubit of the even-numbered EPR pairs and in the odd positions qubits in the $\ket{+}$ state.
\end{enumerate}

\begin{tcolorbox}
	[
		grow to left by = 1.25 cm,
		grow to right by = 1.25 cm,
		colback = MagentaVeryLight!12,			
		enhanced jigsaw,						
		sharp corners,
		toprule = 1.0 pt,
		bottomrule = 1.0 pt,
		leftrule = 0.1 pt,
		rightrule = 0.1 pt,
		sharp corners,
		center title,
		fonttitle = \bfseries
	]
	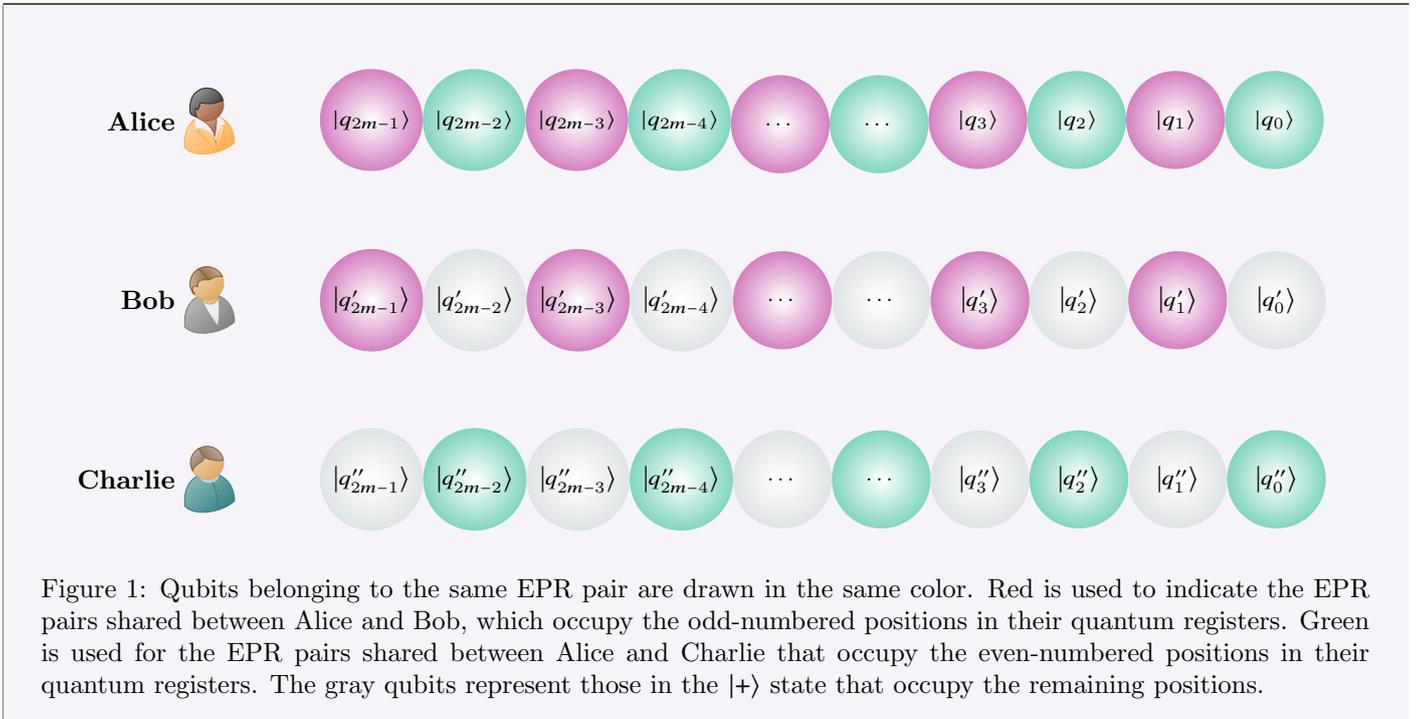
\begin{figure}[H]
		\centering
		\begin{tikzpicture} [ scale = 0.50 ]
			\node
			[
			alice,
			scale = 1.75,
			anchor = center,
			label = { [ label distance = 0.00 cm ] west: \textbf{Alice} }
			]
			(Alice) { };
			\matrix
			[
			matrix of nodes, nodes in empty cells,
			column sep = 0.000 mm, right = 1.00 of Alice,
			nodes = { circle, minimum size = 13 mm, semithick, font = \footnotesize },
			]
			(mat)
			{
				\node [ shade, outer color = RedPurple!50, inner color = white ] { $\ket{ q_{ 2 m - 1 } }$ }; &
				\node [ shade, outer color = GreenLighter2!50, inner color = white ] { $\ket{ q_{ 2 m - 2 } }$ }; &
				\node [ shade, outer color = RedPurple!50, inner color = white ] { $\ket{ q_{ 2 m - 3 } }$ }; &
				\node [ shade, outer color = GreenLighter2!50, inner color = white ] { $\ket{ q_{ 2 m - 4 } }$ }; &
				\node [ shade, outer color = RedPurple!50, inner color = white ] { \dots }; &
				\node [ shade, outer color = GreenLighter2!50, inner color = white ] { \dots }; &
				\node [ shade, outer color = RedPurple!50, inner color = white ] { $\ket{ q_{ 3 } }$ }; &
				\node [ shade, outer color = GreenLighter2!50, inner color = white ] { $\ket{ q_{ 2 } }$ }; &
				\node [ shade, outer color = RedPurple!50, inner color = white ] { $\ket{ q_{ 1 } }$ }; &
				\node [ shade, outer color = GreenLighter2!50, inner color = white ] { $\ket{ q_{ 0 } }$ };
				\\
			};
			\node
			[
			bob,
			scale = 1.75,
			anchor = center,
			below = 1.50 cm of Alice,
			label = { [ label distance = 0.00 cm ] west: \textbf{Bob} }
			]
			(Bob) { };
			\matrix
			[
			column sep = 0.000 mm, right = 1.00 of Bob,
			nodes = { circle, minimum size = 13 mm, semithick, font = \footnotesize },
			]
			(BobMatrix)
			{
				\node [ shade, outer color = RedPurple!50, inner color = white ] { $\ket{ q'_{ 2 m - 1 } }$ }; &
				\node [ shade, outer color = WordIceBlue, inner color = white ] { $\ket{ q'_{ 2 m - 2 } }$ }; &
				\node [ shade, outer color = RedPurple!50, inner color = white ] { $\ket{ q'_{ 2 m - 3 } }$ }; &
				\node [ shade, outer color = WordIceBlue, inner color = white ] { $\ket{ q'_{ 2 m - 4 } }$ }; &
				\node [ shade, outer color = RedPurple!50, inner color = white ] { \dots }; &
				\node [ shade, outer color = WordIceBlue, inner color = white ] { \dots }; &
				\node [ shade, outer color = RedPurple!50, inner color = white ] { $\ket{ q'_{ 3 } }$ }; &
				\node [ shade, outer color = WordIceBlue, inner color = white ] { $\ket{ q'_{ 2 } }$ }; &
				\node [ shade, outer color = RedPurple!50, inner color = white ] { $\ket{ q'_{ 1 } }$ }; &
				\node [ shade, outer color = WordIceBlue, inner color = white ] { $\ket{ q'_{ 0 } }$ }; \\
			};
			\node
			[
			charlie,
			scale = 1.75,
			anchor = center,
			below = 1.50 cm of Bob,
			label = { [ label distance = 0.00 cm ] west: \textbf{Charlie} }
			]
			(Charlie) { };
			\matrix
			[
			column sep = 0.000 mm, right = 1.00 of Charlie,
			nodes = { circle, minimum size = 13 mm, semithick, font = \footnotesize },
			]
			{
				\node [ shade, outer color = WordIceBlue, inner color = white ] { $\ket{ q''_{ 2 m - 1 } }$ }; &
				\node [ shade, outer color = GreenLighter2!50, inner color = white ] { $\ket{ q''_{ 2 m - 2 } }$ }; &
				\node [ shade, outer color = WordIceBlue, inner color = white ] { $\ket{ q''_{ 2 m - 3 } }$ }; &
				\node [ shade, outer color = GreenLighter2!50, inner color = white ] { $\ket{ q''_{ 2 m - 4 } }$ }; &
				\node [ shade, outer color = WordIceBlue, inner color = white ] { \dots }; &
				\node [ shade, outer color = GreenLighter2!50, inner color = white ] { \dots }; &
				\node [ shade, outer color = WordIceBlue, inner color = white ] { $\ket{ q''_{ 3 } }$ }; &
				\node [ shade, outer color = GreenLighter2!50, inner color = white ] { $\ket{ q''_{ 2 } }$ }; &
				\node [ shade, outer color = WordIceBlue, inner color = white ] { $\ket{ q''_{ 1 } }$ }; &
				\node [ shade, outer color = GreenLighter2!50, inner color = white ] { $\ket{ q''_{ 0 } }$ }; \\
			};
		\end{tikzpicture}
		\caption{Qubits belonging to the same EPR pair are drawn in the same color. Red is used to indicate the EPR pairs shared between Alice and Bob, which occupy the odd-numbered positions in their quantum registers. Green is used for the EPR pairs shared between Alice and Charlie that occupy the even-numbered positions in their quantum registers. The gray qubits represent those in the $\ket{+}$ state that occupy the remaining positions.}
		\label{fig:Alice Bob and Charlie's Quantum Registers}
	\end{figure}
\end{tcolorbox}

The end result is that the  quantum registers of Alice, Bob and Charlie are populated as shown in Figure \ref{fig:Alice Bob and Charlie's Quantum Registers}. Qubits of the same EPR pair are shown in the same color. Red is used to indicate the EPR pairs shared between Alice and Bob, which occupy the odd-numbered positions in their quantum registers. Analogously, green is used for the EPR pairs shared between Alice and Charlie that occupy the even-numbered positions in their quantum registers. The gray qubits represent those in the $\ket{+}$ state that occupy the remaining positions in Bob and Charlie's registers.

\subsection{Entanglement verification phase} \label{subsec:Entanglement Verification Phase}

This phase is crucial because the entire protocol is based on entanglement. If entanglement is not guaranteed, then agreement cannot not be guaranteed either. Obviously, the verification process can lead to two dramatically different outcomes. If entanglement verification is successfully established, then the EPRQDBA protocol is certain to achieve agreement. Failure of verification implies absence of the necessary entanglement. This~could be attributed either to noisy quantum channels or insidious sabotage by an active adversary. Whatever the true reason is, the only viable solution is to abort the current execution of the protocol, and initiate the whole procedure again from scratch, after taking some corrective measures.

This phase is very important because if entanglement is broken, then agreement can't be guaranteed. For this reason, the entanglement verification phase has been extensively analyzed in the relative literature. The~EPRQDBA protocol adheres to the previously established methods that have been introduced in previous works, such as~\cite{Fitzi2001,Cabello2003,Neigovzen2008,Feng2019,Qu2023}. In order to avoid repeating well-known techniques, we refer the reader to the these papers that describe in detail the implementation of this phase.

\subsection{Agreement phase} \label{subsec:Agreement Phase}

The EPRQDBA protocol achieves detectable agreement during its third and last phase, aptly named \textbf{agreement phase}. Alice, Bob and Charlie initiate the agreement phase by measuring their quantum registers. The distribution scheme, as analyzed in Subsection \ref{subsec:Entanglement Distribution Phase}, leads to some crucial correlations among the contents of Alice, Bob and Charlie's registers.

\begin{definition} \label{def:Measured Contents of the Registers} \
	Let the bit vectors $\mathbf { a }$, $\mathbf { b }$, and $\mathbf { c }$ denote the contents of Alice, Bob and Charlie's registers after the measurement, and assume that their explicit form is as given below:
	\begin{align}
		\mathbf { a }
		&=
		\underbrace { a_{ 2 m - 1 } a_{ 2 m - 2 } }_{ \text{ pair } m - 1 } \
		\underbrace { a_{ 2 m - 3 } a_{ 2 m - 4 } }_{ \text{ pair } m - 2 } \
		\dots \
		\underbrace { a_{ 3 } a_{ 2 } }_{ \text{ pair } 1 } \
		\underbrace { a_{ 1 } a_{ 0 } }_{ \text{ pair } 0 }
		\ , \label{eq:Alice's Bit Vector a - 1}
		\\
		\mathbf { b }
		&=
		\underbrace { b_{ 2 m - 1 } b_{ 2 m - 2 } }_{ \text{ pair } m - 1 } \
		\underbrace { b_{ 2 m - 3 } b_{ 2 m - 4 } }_{ \text{ pair } m - 2 } \
		\dots \
		\underbrace { b_{ 3 } b_{ 2 } }_{ \text{ pair } 1 } \
		\underbrace { b_{ 1 } b_{ 0 } }_{ \text{ pair } 0 }
		\ , \label{eq:Bob's Bit Vector b - 1}
		\\
		\mathbf { c }
		&=
		\underbrace { c_{ 2 m - 1 } c_{ 2 m - 2 } }_{ \text{ pair } m - 1 } \
		\underbrace { c_{ 2 m - 3 } c_{ 2 m - 4 } }_{ \text{ pair } m - 2 } \
		\dots \
		\underbrace { c_{ 3 } c_{ 2 } }_{ \text{ pair } 1 } \
		\underbrace { c_{ 1 } c_{ 0 } }_{ \text{ pair } 0 }
		\ . \label{eq:Charlie's Bit Vector c - 1}
	\end{align}
	The $k^{ th }$ pair of $\mathbf { a }$, $0 \leq k \leq m - 1$, is the pair of bits $a_{ 2 k + 1 } a_{ 2 k }$, and is designated by $\mathbf { a }_{ k }$. Similarly, the $k^{ th }$ pairs of $\mathbf { b }$ and $\mathbf { c }$ are designated by $\mathbf { b }_{ k }$ and $\mathbf { c }_{ k }$. Hence, $\mathbf { a }$, $\mathbf { b }$, and $\mathbf { c }$ can be written succinctly as:
	\begin{align}
		\mathbf { a }
		&=
		\mathbf { a }_{ m - 1 } \ \mathbf { a }_{ m - 2 } \ \dots \ \mathbf { a }_{ 1 } \ \mathbf { a }_{ 0 }
		\ , \label{eq:Alice's Bit Vector a - 2}
		\\
		\mathbf { b }
		&=
		\mathbf { b }_{ m - 1 } \ \mathbf { b }_{ m - 2 } \ \dots \ \mathbf { b }_{ 1 } \ \mathbf { b }_{ 0 }
		\ , \label{eq:Bob's Bit Vector b - 2}
		\\
		\mathbf { c }
		&=
		\mathbf { c }_{ m - 1 } \ \mathbf { c }_{ m - 2 } \ \dots \ \mathbf { c }_{ 1 } \ \mathbf { c }_{ 0 }
		\ . \label{eq:Charlie's Bit Vector c - 2}
	\end{align}
	We also define a special type of pair, termed the \emph{uncertain} pair, that is denoted by $\mathbf { u } = \sqcup \sqcup$, where $\sqcup$ is a new symbol, different from $0$ and $1$. In contrast, pairs consisting of bits $0$ and/or $1$ are called \emph{definite} pairs.
\end{definition}

When we want refer to either Bob's or Charlie's bit vector, but without specifying precisely which one, we will designate it by
\begin{align}
	\mathbf { l }
	&=
	\mathbf { l }_{ m - 1 } \ \mathbf { l }_{ m - 2 } \ \dots \ \mathbf { l }_{ 1 } \ \mathbf { l }_{ 0 }
	\nonumber \\
	&=
	\underbrace { l_{ 2 m - 1 } l_{ 2 m - 2 } }_{ \text{ pair } m - 1 } \
	\underbrace { l_{ 2 m - 3 } l_{ 2 m - 4 } }_{ \text{ pair } m - 2 } \
	\dots \
	\underbrace { l_{ 3 } l_{ 2 } }_{ \text{ pair } 1 } \
	\underbrace { l_{ 1 } l_{ 0 } }_{ \text{ pair } 0 }
	\ . \label{eq:Bob or Charlie's Bit Vector}
\end{align}

According to the distribution scheme, Alice and Bob share the odd-numbered $\ket{ \Psi^{ + } } = \frac { \ket{ 0 }_{ A } \ket{ 1 }_{ B } + \ket{ 1 }_{ A } \ket{ 0 }_{ B } } { \sqrt{ 2 } }$ pairs, and Alice and Charlie share the even-numbered $\ket{ \Psi^{ + } } = \frac { \ket{ 0 }_{ A } \ket{ 1 }_{ C } + \ket{ 1 }_{ A } \ket{ 0 }_{ C } } { \sqrt{ 2 } }$ pairs. Therefore, the next Lemma \ref{lem:Pair Differentiation Property} holds. Its proof is trivial and is omitted.

\begin{lemma} [Pair Differentiation Property] \label{lem:Pair Differentiation Property}
	The next property, termed \emph{pair differentiation} property, characterizes the corresponding bits and pairs of the bit vectors $\mathbf { a }$, $\mathbf { b }$, and $\mathbf { c }$.
\end{lemma}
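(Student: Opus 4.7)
The plan is to unfold the lemma by matching each bit position inside pair $k$ with its origin in the distribution scheme of Subsection \ref{subsec:Entanglement Distribution Phase} and then applying two elementary measurement rules. The first rule is that measuring a $\ket{\Psi^{+}} = \frac{\ket{0}_A\ket{1}_B + \ket{1}_A\ket{0}_B}{\sqrt{2}}$ EPR pair in the computational basis yields the outcome $01$ or $10$, each with probability $\tfrac{1}{2}$, so the two outcome bits are always complementary. The second rule is that measuring a qubit in state $\ket{+}$ in the computational basis yields $0$ or $1$ each with probability $\tfrac{1}{2}$, independently of any qubit with which it is not entangled.

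Next I would apply these two rules position by position inside pair $k$. Position $2k+1$ in Alice's register and position $2k+1$ in Bob's register are the two halves of the $(2k+1)$-th (odd-numbered) EPR pair, which by the first rule forces $b_{2k+1} = \overline{a_{2k+1}}$. Position $2k$ in Alice's register and position $2k$ in Charlie's register are the two halves of the $2k$-th (even-numbered) EPR pair, which likewise forces $c_{2k} = \overline{a_{2k}}$. The remaining two bits of the pair, namely $b_{2k}$ (Bob's even position) and $c_{2k+1}$ (Charlie's odd position), originate from $\ket{+}$ qubits and are therefore uniformly random and independent of Alice's register entries.

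Collecting these observations for every pair index $k$, $0 \le k \le m-1$, delivers exactly the promised pair differentiation statement: inside every pair $k$, the higher-order bit of $\mathbf{a}_k$ is the bit-wise complement of the higher-order bit of $\mathbf{b}_k$; the lower-order bit of $\mathbf{a}_k$ is the bit-wise complement of the lower-order bit of $\mathbf{c}_k$; and the other two bits of $\mathbf{b}_k$ and $\mathbf{c}_k$ are independent fair coin flips.

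I expect no real obstacle here, which is consistent with the author's own remark that the proof is trivial. The only point requiring a modicum of care is the index bookkeeping, because one must consistently track the parity convention whereby Alice--Bob entanglement lives on odd bit positions while Alice--Charlie entanglement lives on even bit positions; this is precisely why pair $k$ is defined to collect bits $2k+1$ and $2k$ from each player, so that each pair contains exactly one Alice--Bob entangled slot and exactly one Alice--Charlie entangled slot.
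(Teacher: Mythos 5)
Your argument is correct and is precisely the routine verification the authors had in mind when they declared the proof trivial and omitted it: anticorrelation of computational-basis outcomes on a $\ket{\Psi^{+}}$ pair gives $b_{2k+1} = \overline{a_{2k+1}}$ and $c_{2k} = \overline{a_{2k}}$, while the $\ket{+}$ qubits are unentangled and hence irrelevant. The only line worth adding explicitly is the final step for item 3 of the lemma, namely that since $\mathbf{a}_{k}$ and $\mathbf{b}_{k}$ are already guaranteed to differ in their odd-position bit, and $\mathbf{a}_{k}$ and $\mathbf{c}_{k}$ in their even-position bit, the pair inequalities $\mathbf{a}_{k} \neq \mathbf{b}_{k}$ and $\mathbf{a}_{k} \neq \mathbf{c}_{k}$ follow immediately.
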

\begin{enumerate}
	\item	For every odd-numbered bit $b_{ k }$ in $\mathbf { b }$, where $k = 1, 3, \dots, 2 m - 1$, it holds that
	\begin{align}
		b_{ k } &= \overline{ a_{ k } }, \ k = 1, 3, \dots, 2 m - 1
		\label{eq:Alice - Bod Odd Bit Relation}
		\ .
	\end{align}
	\item	For every even-numbered bit $c_{ k }$ in $\mathbf { c }$, where $k = 0, 2, \dots, 2 m - 2$, it holds that
	\begin{align}
		c_{ k } &= \overline{ a_{ k } }, \ k = 0, 2, \dots, 2 m - 2
		\label{eq:Alice - Charlie Odd Bit Relation}
		\ .
	\end{align}
	\item	Every $\mathbf { a }_{ k }$ pair, where $0 \leq k \leq m - 1$, differs from the corresponding pairs $\mathbf { b }_{ k }$ and $\mathbf { c }_{ k }$.
	\begin{align}
		\mathbf { a }_{ k } \neq \mathbf { b }_{ k }
		\ \text{ and } \
		\mathbf { a }_{ k } \neq \mathbf { c }_{ k }
		\ , \text{ for every } k, \ 0 \leq k \leq m - 1 \ .
		\label{eq:Alice - Bob - Charlie Pair Differentiation Property}
	\end{align}
\end{enumerate}

In the above relations (\ref{eq:Alice - Bod Odd Bit Relation}) and (\ref{eq:Alice - Charlie Odd Bit Relation}), $\overline{ a_{ k } }$ is the negation of $a_{ k }$.

Visually, the situation regarding the correlations among corresponding bits is shown in Figure \ref{fig:Alice Bob and Charlie's Quantum Registrs After Measurement}. Again bits resulting from the same EPR pair are shown in the same color.

\begin{tcolorbox}
	[
		grow to left by = 1.25 cm,
		grow to right by = 1.25 cm,
		colback = Brown!03,				
		enhanced jigsaw,				
		sharp corners,
		toprule = 1.0 pt,
		bottomrule = 1.0 pt,
		leftrule = 0.1 pt,
		rightrule = 0.1 pt,
		sharp corners,
		center title,
		fonttitle = \bfseries
	]
	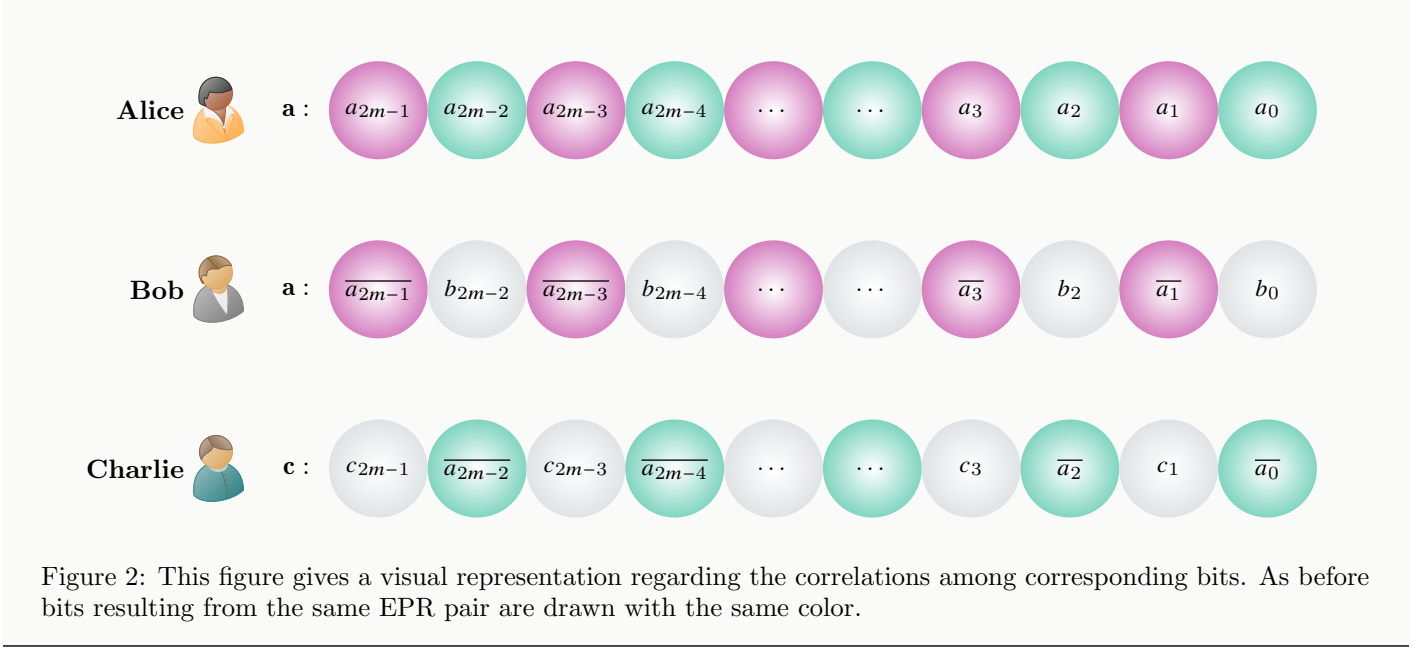
\begin{figure}[H]
		\centering
		\begin{tikzpicture} [ scale = 0.50 ]
			\node
			[
			alice,
			scale = 1.75,
			anchor = center,
			label = { [ label distance = 0.00 cm ] west: \textbf{Alice} }
			]
			(Alice) { };
			\matrix
			[
			column sep = 0.000 mm, right = 1.00 of Alice,
			nodes = { circle, minimum size = 13 mm, font = \small },
			label = { [ label distance = 0.00 cm ] west: $\mathbf{a}:$ }
			]
			{
				\node [ shade, outer color = RedPurple!50, inner color = white ] { $a_{ 2 m - 1 }$ }; &
				\node [ shade, outer color = GreenLighter2!50, inner color = white ] { $a_{ 2 m - 2 }$ }; &
				\node [ shade, outer color = RedPurple!50, inner color = white ] { $a_{ 2 m - 3 }$ }; &
				\node [ shade, outer color = GreenLighter2!50, inner color = white ] { $a_{ 2 m - 4 }$ }; &
				\node [ shade, outer color = RedPurple!50, inner color = white ] { \dots }; &
				\node [ shade, outer color = GreenLighter2!50, inner color = white ] { \dots }; &
				\node [ shade, outer color = RedPurple!50, inner color = white ] { $a_{ 3 }$ }; &
				\node [ shade, outer color = GreenLighter2!50, inner color = white ] { $a_{ 2 }$ }; &
				\node [ shade, outer color = RedPurple!50, inner color = white ] { $a_{ 1 }$ }; &
				\node [ shade, outer color = GreenLighter2!50, inner color = white ] { $a_{ 0 }$ }; &
				\\
			};
			\node
			[
			bob,
			scale = 1.75,
			anchor = center,
			below = 1.50 cm of Alice,
			label = { [ label distance = 0.00 cm ] west: \textbf{Bob} }
			]
			(Bob) { };
			\matrix
			[
			column sep = 0.000 mm, right = 1.00 of Bob,
			nodes = { circle, minimum size = 13 mm, font = \small },
			label = { [ label distance = 0.00 cm ] west: $\mathbf{a}:$ }
			]
			{
				\node [ shade, outer color = RedPurple!50, inner color = white ] { $\overline{ a_{ 2 m - 1 } }$ }; &
				\node [ shade, outer color = WordIceBlue, inner color = white ] { $b_{ 2 m - 2 }$ }; &
				\node [ shade, outer color = RedPurple!50, inner color = white ] { $\overline{ a_{ 2 m - 3 } }$ }; &
				\node [ shade, outer color = WordIceBlue, inner color = white ] { $b_{ 2 m - 4 }$ }; &
				\node [ shade, outer color = RedPurple!50, inner color = white ] { \dots }; &
				\node [ shade, outer color = WordIceBlue, inner color = white ] { \dots }; &
				\node [ shade, outer color = RedPurple!50, inner color = white ] { $\overline{ a_{ 3 } }$ }; &
				\node [ shade, outer color = WordIceBlue, inner color = white ] { $b_{ 2 }$ }; &
				\node [ shade, outer color = RedPurple!50, inner color = white ] { $\overline{ a_{ 1 } }$ }; &
				\node [ shade, outer color = WordIceBlue, inner color = white ] { $b_{ 0 }$ }; &
				\\
			};
			\node
			[
			charlie,
			scale = 1.75,
			anchor = center,
			below = 1.50 cm of Bob,
			label = { [ label distance = 0.00 cm ] west: \textbf{Charlie} }
			]
			(Charlie) { };
			\matrix
			[
			column sep = 0.000 mm, right = 1.00 of Charlie,
			nodes = { circle, minimum size = 13 mm, font = \small },
			label = { [ label distance = 0.00 cm ] west: $\mathbf{c}:$ }
			]
			{
				\node [ shade, outer color = WordIceBlue, inner color = white ] { $c_{ 2 m - 1 }$ }; &
				\node [ shade, outer color = GreenLighter2!50, inner color = white ] { $\overline{ a_{ 2 m - 2 } }$ }; &
				\node [ shade, outer color = WordIceBlue, inner color = white ] { $c_{ 2 m - 3 }$ }; &
				\node [ shade, outer color = GreenLighter2!50, inner color = white ] { $\overline{ a_{ 2 m - 4 } }$ }; &
				\node [ shade, outer color = WordIceBlue, inner color = white ] { \dots }; &
				\node [ shade, outer color = GreenLighter2!50, inner color = white ] { \dots }; &
				\node [ shade, outer color = WordIceBlue, inner color = white ] { $c_{ 3 }$ }; &
				\node [ shade, outer color = GreenLighter2!50, inner color = white ] { $\overline{ a_{ 2 } }$ }; &
				\node [ shade, outer color = WordIceBlue, inner color = white ] { $c_{ 1 }$ }; &
				\node [ shade, outer color = GreenLighter2!50, inner color = white ] { $\overline{ a_{ 0 } }$ }; &
				\\
			};
		\end{tikzpicture}
		\caption{This figure gives a visual representation regarding the correlations among corresponding bits. As before bits resulting from the same EPR pair are drawn with the same color.}
		\label{fig:Alice Bob and Charlie's Quantum Registrs After Measurement}
	\end{figure}
\end{tcolorbox}

\begin{definition} [Command Vectors] \label{def:Command Vectors} \
	Alice sends to Bob and Charlie either the command $0$ or the command $1$. Besides her command, as a ``proof'', she also sends an appropriate \emph{command vector}. The idea is that the command vector for Bob is always different from the command vector for Charlie, even when the command is the same. For the command $0$, the command vectors for Bob and Charlie are $\vmathbb{ 0 }_{ B }$ and $\vmathbb{ 0 }_{ C }$, respectively, whereas for the command $1$ the corresponding command vectors are $\mathds{ 1 }_{ B }$ and $\mathds{ 1 }_{ C }$. The explicit form of the command vectors is the following.
	\begin{align}
		\vmathbb{ 0 }_{ B }
		&=
		\mathbf { v }_{ m - 1 } \ \mathbf { v }_{ m - 2 } \ \dots \ \mathbf { v }_{ 1 } \ \mathbf { v }_{ 0 }
		\ ,
		\ \text{ where } \
		\mathbf { v }_{ k }
		=
		\left\{
		\begin{matrix*}[l]
			\mathbf { a }_{ k } = a_{ 2 k + 1 } a_{ 2 k } & \text{ if } a_{ 2 k + 1 } = 0 \\
			\mathbf { u } = \sqcup \sqcup & \text{ if } a_{ 2 k + 1 } \neq 0
		\end{matrix*}
		\right.
		\ , \ 0 \leq k \leq m - 1 \ .
		\label{eq:Bob's Command Vector 0}
		\\
		\vmathbb{ 0 }_{ C }
		&=
		\mathbf { v }_{ m - 1 } \ \mathbf { v }_{ m - 2 } \ \dots \ \mathbf { v }_{ 1 } \ \mathbf { v }_{ 0 }
		\ ,
		\ \text{ where } \
		\mathbf { v }_{ k }
		=
		\left\{
		\begin{matrix*}[l]
			\mathbf { a }_{ k } = a_{ 2 k + 1 } a_{ 2 k } & \text{ if } a_{ 2 k } = 0 \\
			\mathbf { u } = \sqcup \sqcup & \text{ if } a_{ 2 k } \neq 0
		\end{matrix*}
		\right.
		\ , \ 0 \leq k \leq m - 1 \ .
		\label{eq:Charlie's Command Vector 0}
		\\
		\mathds{ 1 }_{ B }
		&=
		\mathbf { v }_{ m - 1 } \ \mathbf { v }_{ m - 2 } \ \dots \ \mathbf { v }_{ 1 } \ \mathbf { v }_{ 0 }
		\ ,
		\ \text{ where } \
		\mathbf { v }_{ k }
		=
		\left\{
		\begin{matrix*}[l]
			\mathbf { a }_{ k } = a_{ 2 k + 1 } a_{ 2 k } & \text{ if } a_{ 2 k + 1 } = 1 \\
			\mathbf { u } = \sqcup \sqcup & \text{ if } a_{ 2 k + 1 } \neq 1
		\end{matrix*}
		\right.
		\ , \ 0 \leq k \leq m - 1 \ .
		\label{eq:Bob's Command Vector 1}
		\\
		\mathds{ 1 }_{ C }
		&=
		\mathbf { v }_{ m - 1 } \ \mathbf { v }_{ m - 2 } \ \dots \ \mathbf { v }_{ 1 } \ \mathbf { v }_{ 0 }
		\ ,
		\ \text{ where } \
		\mathbf { v }_{ k }
		=
		\left\{
		\begin{matrix*}[l]
			\mathbf { a }_{ k } = a_{ 2 k + 1 } a_{ 2 k } & \text{ if } a_{ 2 k } = 1 \\
			\mathbf { u } = \sqcup \sqcup & \text{ if } a_{ 2 k } \neq 1
		\end{matrix*}
		\right.
		\ , \ 0 \leq k \leq m - 1 \ .
		\label{eq:Charlie's Command Vector 1}
	\end{align}
	A command vector, besides pairs containing $0$ and $1$ bits, also contains an approximately equal number of pairs consisting of $\sqcup$ characters. When we want refer to a command vector, but without providing further details, we will designate it by
	\begin{align}
		\mathbf { v }
		&=
		\mathbf { v }_{ m - 1 } \ \mathbf { v }_{ m - 2 } \ \dots \ \mathbf { v }_{ 1 } \ \mathbf { v }_{ 0 }
		\nonumber \\
		&=
		\underbrace { v_{ 2 m - 1 } v_{ 2 m - 2 } }_{ \text{ pair } m - 1 } \
		\underbrace { v_{ 2 m - 3 } v_{ 2 m - 4 } }_{ \text{ pair } m - 2 } \
		\dots \
		\underbrace { v_{ 3 } v_{ 2 } }_{ \text{ pair } 1 } \
		\underbrace { v_{ 1 } v_{ 0 } }_{ \text{ pair } 0 }
		\ . \label{eq:Generic Command Vector for Bob and Charlie}
	\end{align}
\end{definition}

Given a command vector or a bit vector, we define the set containing the positions of the pairs that consist of a given combination of bits.

\begin{definition} [Pair Designation] \label{def:Pair Designation} \
	Given a command vector $\mathbf { v }$ or a bit vector $\mathbf { l }$, we define the set $\mathbb{ P }_{ x, y } ( \mathbf { v } )$ and $\mathbb{ P }_{ x, y } ( \mathbf { l } )$, respectively, of the positions of those pairs consisting precisely of the bits $x, y$.
\end{definition}

\begin{example} [Illustrating the concepts] \label{xmp:Illustration of Concepts}
	This first example is designed to illustrate all the previous concepts. For practical purposes, i.e., to fit in a page, we take $m = 12$.
	\begin{tcolorbox}
		[
			grow to left by = 0.00 cm,
			grow to right by = 0.00 cm,
			colback = Brown!03,			
			enhanced jigsaw,			
			sharp corners,
			toprule = 1.0 pt,
			bottomrule = 1.0 pt,
			leftrule = 0.1 pt,
			rightrule = 0.1 pt,
			sharp corners,
			center title,
			fonttitle = \bfseries
		]
		\begin{figure}[H]
			\centering
			\begin{tikzpicture} [ scale = 0.40 ]
				\node
				[
				shade, top color = GreenTeal, bottom color = black, rectangle, text width = 4.45 cm, align = center
				]
				(Label)
				{ \color{white} Alice, Bob \& Charlie's registers after measurement };
				\node
				[
				alice,
				scale = 1.50,
				anchor = center,
				below = 1.00 cm of Label,
				label = { [ label distance = 0.00 cm ] north: \textbf{Alice} }
				]
				(Alice) { };
				\node
				[
				bob,
				scale = 1.50,
				anchor = center,
				left = 3.50 cm of Alice,
				label = { [ label distance = 0.00 cm ] north: \textbf{Bob} }
				]
				(Bob) { };
				\node
				[
				charlie,
				scale = 1.50,
				anchor = center,
				right = 3.50 cm of Alice,
				label = { [ label distance = 0.00 cm ] north: \textbf{Charlie} }
				]
				(Charlie) { };
				\matrix
				[
				column sep = 0.00 mm, below = 0.50 cm of Alice,
				nodes = { draw = black, fill = none, minimum size = 7 mm, semithick, align = center, font = \scriptsize }
				]
				(AREG)
				{
					\node [ shade, outer color = GreenLighter2!50, inner color = white ] { 1 }; \\
					\node [ shade, outer color = RedPurple!50, inner color = white ] { 0 }; \\
					\node [ shade, outer color = GreenLighter2!50, inner color = white ] { 0 }; \\
					\node [ shade, outer color = RedPurple!50, inner color = white ] { 1 }; \\
					\node [ shade, outer color = GreenLighter2!50, inner color = white ] { 0 }; \\
					\node [ shade, outer color = RedPurple!50, inner color = white ] { 0 }; \\
					\node [ shade, outer color = GreenLighter2!50, inner color = white ] { 1 }; \\
					\node [ shade, outer color = RedPurple!50, inner color = white ] { 1 }; \\
					\node [ shade, outer color = GreenLighter2!50, inner color = white ] { 0 }; \\
					\node [ shade, outer color = RedPurple!50, inner color = white ] { 0 }; \\
					\node [ shade, outer color = GreenLighter2!50, inner color = white ] { 0 }; \\
					\node [ shade, outer color = RedPurple!50, inner color = white ] { 0 }; \\
					\node [ shade, outer color = GreenLighter2!50, inner color = white ] { 1 }; \\
					\node [ shade, outer color = RedPurple!50, inner color = white ] { 1 }; \\
					\node [ shade, outer color = GreenLighter2!50, inner color = white ] { 1 }; \\
					\node [ shade, outer color = RedPurple!50, inner color = white ] { 0 }; \\
					\node [ shade, outer color = GreenLighter2!50, inner color = white ] { 0 }; \\
					\node [ shade, outer color = RedPurple!50, inner color = white ] { 1 }; \\
					\node [ shade, outer color = GreenLighter2!50, inner color = white ] { 0 }; \\
					\node [ shade, outer color = RedPurple!50, inner color = white ] { 0 }; \\
					\node [ shade, outer color = GreenLighter2!50, inner color = white ] { 1 }; \\
					\node [ shade, outer color = RedPurple!50, inner color = white ] { 1 }; \\
					\node [ shade, outer color = GreenLighter2!50, inner color = white ] { 0 }; \\
					\node [ shade, outer color = RedPurple!50, inner color = white ] { 1 }; \\
				};
				\matrix
				[
				column sep = 0.000 mm, left = 0.05 cm of AREG,
				nodes = { fill = none, minimum size = 7.20 mm, semithick, align = center, font = \scriptsize }
				]
				(Index)
				{
					\node { 0 }; \\
					\node { 1 }; \\
					\node { 2 }; \\
					\node { 3 }; \\
					\node { 4 }; \\
					\node { 5 }; \\
					\node { 6 }; \\
					\node { 7 }; \\
					\node { 8 }; \\
					\node { 9 }; \\
					\node { 10 }; \\
					\node { 11 }; \\
					\node { 12 }; \\
					\node { 13 }; \\
					\node { 14 }; \\
					\node { 15 }; \\
					\node { 16 }; \\
					\node { 17 }; \\
					\node { 18 }; \\
					\node { 19 }; \\
					\node { 20 }; \\
					\node { 21 }; \\
					\node { 22 }; \\
					\node { 23 }; \\
				};
				\matrix
				[
				column sep = 0.00 mm, below = 0.50 cm of Bob,
				nodes = { draw = black, fill = none, minimum size = 7 mm, semithick, align = center, font = \scriptsize }
				]
				(BREG)
				{
					\node [ shade, outer color = WordIceBlue, inner color = white ] { 1 }; \\
					\node [ shade, outer color = RedPurple!50, inner color = white ] { 1 }; \\
					\node [ shade, outer color = WordIceBlue, inner color = white ] { 0 }; \\
					\node [ shade, outer color = RedPurple!50, inner color = white ] { 0 }; \\
					\node [ shade, outer color = WordIceBlue, inner color = white ] { 0 }; \\
					\node [ shade, outer color = RedPurple!50, inner color = white ] { 1 }; \\
					\node [ shade, outer color = WordIceBlue, inner color = white ] { 1 }; \\
					\node [ shade, outer color = RedPurple!50, inner color = white ] { 0 }; \\
					\node [ shade, outer color = WordIceBlue, inner color = white ] { 1 }; \\
					\node [ shade, outer color = RedPurple!50, inner color = white ] { 1 }; \\
					\node [ shade, outer color = WordIceBlue, inner color = white ] { 0 }; \\
					\node [ shade, outer color = RedPurple!50, inner color = white ] { 1 }; \\
					\node [ shade, outer color = WordIceBlue, inner color = white ] { 1 }; \\
					\node [ shade, outer color = RedPurple!50, inner color = white ] { 0 }; \\
					\node [ shade, outer color = WordIceBlue, inner color = white ] { 0 }; \\
					\node [ shade, outer color = RedPurple!50, inner color = white ] { 1 }; \\
					\node [ shade, outer color = WordIceBlue, inner color = white ] { 1 }; \\
					\node [ shade, outer color = RedPurple!50, inner color = white ] { 0 }; \\
					\node [ shade, outer color = WordIceBlue, inner color = white ] { 1 }; \\
					\node [ shade, outer color = RedPurple!50, inner color = white ] { 1 }; \\
					\node [ shade, outer color = WordIceBlue, inner color = white ] { 1 }; \\
					\node [ shade, outer color = RedPurple!50, inner color = white ] { 0 }; \\
					\node [ shade, outer color = WordIceBlue, inner color = white ] { 0 }; \\
					\node [ shade, outer color = RedPurple!50, inner color = white ] { 0 }; \\
				};
				\matrix
				[
				column sep = 0.00 mm, below = 0.50 cm of Charlie,
				nodes = { draw = black, fill = none, minimum size = 7 mm, semithick, align = center, font = \scriptsize }
				]
				(CREG)
				{
					\node [ shade, outer color = GreenLighter2!50, inner color = white ] { 0 }; \\
					\node [ shade, outer color = WordIceBlue, inner color = white ] { 1 }; \\
					\node [ shade, outer color = GreenLighter2!50, inner color = white ] { 1 }; \\
					\node [ shade, outer color = WordIceBlue, inner color = white ] { 0 }; \\
					\node [ shade, outer color = GreenLighter2!50, inner color = white ] { 1 }; \\
					\node [ shade, outer color = WordIceBlue, inner color = white ] { 1 }; \\
					\node [ shade, outer color = GreenLighter2!50, inner color = white ] { 0 }; \\
					\node [ shade, outer color = WordIceBlue, inner color = white ] { 0 }; \\
					\node [ shade, outer color = GreenLighter2!50, inner color = white ] { 1 }; \\
					\node [ shade, outer color = WordIceBlue, inner color = white ] { 1 }; \\
					\node [ shade, outer color = GreenLighter2!50, inner color = white ] { 1 }; \\
					\node [ shade, outer color = WordIceBlue, inner color = white ] { 0 }; \\
					\node [ shade, outer color = GreenLighter2!50, inner color = white ] { 0 }; \\
					\node [ shade, outer color = WordIceBlue, inner color = white ] { 1 }; \\
					\node [ shade, outer color = GreenLighter2!50, inner color = white ] { 0 }; \\
					\node [ shade, outer color = WordIceBlue, inner color = white ] { 0 }; \\
					\node [ shade, outer color = GreenLighter2!50, inner color = white ] { 1 }; \\
					\node [ shade, outer color = WordIceBlue, inner color = white ] { 1 }; \\
					\node [ shade, outer color = GreenLighter2!50, inner color = white ] { 1 }; \\
					\node [ shade, outer color = WordIceBlue, inner color = white ] { 1 }; \\
					\node [ shade, outer color = GreenLighter2!50, inner color = white ] { 0 }; \\
					\node [ shade, outer color = WordIceBlue, inner color = white ] { 0 }; \\
					\node [ shade, outer color = GreenLighter2!50, inner color = white ] { 1 }; \\
					\node [ shade, outer color = WordIceBlue, inner color = white ] { 0 }; \\
				};
			\end{tikzpicture}
			\caption{This figure shows the contents of Alice, Bob and Charlie's registers after the measurement.}
			\label{fig: Example of Register Contents after Measurement}
		\end{figure}
	\end{tcolorbox}

	\noindent In a real implementation $m$ should certainly be greater (see Table \refeq{tbl:Numerical Characteristics of the EPRQDBA Protocol}). Let us assume that the contents of Alice, Bob and Charlie's registers are those shown in Figure \ref{fig: Example of Register Contents after Measurement}. By Definition \ref{def:Measured Contents of the Registers}, we may write that
	{\small
		\begin{align}
			\mathbf { a }
			&=
			\underbrace { 1 0 }_{ \text{ pair } 11 } \
			\underbrace { 1 1 }_{ \text{ pair } 10 } \
			\underbrace { 0 0 }_{ \text{ pair } 9 } \
			\underbrace { 1 0 }_{ \text{ pair } 8 } \
			\underbrace { 0 1 }_{ \text{ pair } 7 } \
			\underbrace { 1 1 }_{ \text{ pair } 6 } \
			\underbrace { 0 0 }_{ \text{ pair } 5 } \
			\underbrace { 0 0 }_{ \text{ pair } 4 } \
			\underbrace { 1 1 }_{ \text{ pair } 3 } \
			\underbrace { 0 0 }_{ \text{ pair } 2 } \
			\underbrace { 1 0 }_{ \text{ pair } 1 } \
			\underbrace { 0 1 }_{ \text{ pair } 0 } \
			\ , \label{eq:Alice's Bit Vector a - Example 1 - 1}
			\\
			\mathbf { b }
			&=
			\underbrace { 0 0 }_{ \text{ pair } 11 } \
			\underbrace { 0 1 }_{ \text{ pair } 10 } \
			\underbrace { 1 1 }_{ \text{ pair } 9 } \
			\underbrace { 0 1 }_{ \text{ pair } 8 } \
			\underbrace { 1 0 }_{ \text{ pair } 7 } \
			\underbrace { 0 1 }_{ \text{ pair } 6 } \
			\underbrace { 1 0 }_{ \text{ pair } 5 } \
			\underbrace { 1 1 }_{ \text{ pair } 4 } \
			\underbrace { 0 1 }_{ \text{ pair } 3 } \
			\underbrace { 1 0 }_{ \text{ pair } 2 } \
			\underbrace { 0 0 }_{ \text{ pair } 1 } \
			\underbrace { 1 1 }_{ \text{ pair } 0 } \
			\ , \label{eq:Bob's Bit Vector b - Example 1}
			\\
			\mathbf { c }
			&=
			\underbrace { 0 1 }_{ \text{ pair } 11 } \
			\underbrace { 0 0 }_{ \text{ pair } 10 } \
			\underbrace { 1 1 }_{ \text{ pair } 9 } \
			\underbrace { 1 1 }_{ \text{ pair } 8 } \
			\underbrace { 0 0 }_{ \text{ pair } 7 } \
			\underbrace { 1 0 }_{ \text{ pair } 6 } \
			\underbrace { 0 1 }_{ \text{ pair } 5 } \
			\underbrace { 1 1 }_{ \text{ pair } 4 } \
			\underbrace { 0 0 }_{ \text{ pair } 3 } \
			\underbrace { 1 1 }_{ \text{ pair } 2 } \
			\underbrace { 0 1 }_{ \text{ pair } 1 } \
			\underbrace { 1 0 }_{ \text{ pair } 0 } \
			\ , \label{eq:Charlie's Bit Vector c - Example 1}
		\end{align}
	}
	where the bit vectors $\mathbf { a }$, $\mathbf { b }$, and $\mathbf { c }$ denote the contents of Alice, Bob and Charlie's registers after the measurement.

	In this figure, the correlated pairs of bits are drawn with the same color. Specifically, as in all previous figures, red is used to indicate the pairs of bits that originated from the measurement of the EPR pairs shared between Alice and Bob, and which occupy the odd-numbered positions in their registers. Analogously, green depicts the pairs of bits that originated from the measurement of the EPR pairs shared between Alice and Charlie, which occupy the even-numbered positions in their registers. The remaining gray bits that appear in Bob and Charlie's registers come from qubits that were initially in the $\ket{+}$ state.

	By a simple observation of the red and green pairs of bits, we immediately see that Lemma \ref{lem:Pair Differentiation Property} is verified. In every red pair, the two bits are complementary, i.e., the second bit is the negation of the first. This is due to the fact that their values are the result of the measurement of a $\ket{ \Psi^{ + } }$ pair. The red pairs correlate Alice and Bob's registers. Exactly the same holds for the green pairs, which correlate Alice and Charlie's registers. The initial presence of the $\ket{ \Psi^{ + } }$ pairs in the entanglement distribution phase, has manifested into classical correlation among the contents of Alice, Bob, and Charlie's registers during the agreement phase.

	There are $12$ pairs in $\mathbf { a }$, $\mathbf { b }$, and $\mathbf { c }$. In each of them, the $k^{ th }$ pair, $0 \leq k \leq 11$, contains the bits that occupy positions $2 k + 1$ and $2 k$, as shown in Figure \ref{fig: Example of Register Contents after Measurement}. The bit in position $2 k + 1$ is the most significant bit of the $k^{ th }$ pair and the bit in position $2 k$ is the least significant bit.

	Let us suppose that Alice sends the order $0$. According to Definition \ref{def:Command Vectors}, Alice must prove her order by sending the command vectors $\vmathbb{ 0 }_{ B }$ and $\vmathbb{ 0 }_{ C }$ to Bob and Charlie, respectively. The command vector $\vmathbb{ 0 }_{ B }$ is constructed by including the pairs in $\mathbf { a }$ with most significant bit $0$ (irrespective of what the least significant bit is) and filling the gaps with the uncertain pair. Symmetrically, $\vmathbb{ 0 }_{ C }$ is constructed by including the pairs in $\mathbf { a }$ with least significant bit $0$ (irrespective of what the most significant bit is) and filling the gaps with the uncertain pair. Thus, $\vmathbb{ 0 }_{ B }$ and $\vmathbb{ 0 }_{ C }$ are as given below (we have also repeated $\mathbf { a }$ to facilitate the comparison).
	{\small
		\begin{align}
			\mathbf { a }
			&=
			\underbrace { 1 0 }_{ \text{ pair } 11 } \
			\underbrace { 1 1 }_{ \text{ pair } 10 } \
			\underbrace { 0 0 }_{ \text{ pair } 9 } \
			\underbrace { 1 0 }_{ \text{ pair } 8 } \
			\underbrace { 0 1 }_{ \text{ pair } 7 } \
			\underbrace { 1 1 }_{ \text{ pair } 6 } \
			\underbrace { 0 0 }_{ \text{ pair } 5 } \
			\underbrace { 0 0 }_{ \text{ pair } 4 } \
			\underbrace { 1 1 }_{ \text{ pair } 3 } \
			\underbrace { 0 0 }_{ \text{ pair } 2 } \
			\underbrace { 1 0 }_{ \text{ pair } 1 } \
			\underbrace { 0 1 }_{ \text{ pair } 0 } \
			\ , \label{eq:Alice's Bit Vector a - Example 1 - 2}
			\\
			\vmathbb{ 0 }_{ B }
			&=
			\underbrace { \sqcup \sqcup }_{ \text{ pair } 11 } \
			\underbrace { \sqcup \sqcup }_{ \text{ pair } 10 } \
			\underbrace { 0 0 }_{ \text{ pair } 9 } \
			\underbrace { \sqcup \sqcup }_{ \text{ pair } 8 } \
			\underbrace { 0 1 }_{ \text{ pair } 7 } \
			\underbrace { \sqcup \sqcup }_{ \text{ pair } 6 } \
			\underbrace { 0 0 }_{ \text{ pair } 5 } \
			\underbrace { 0 0 }_{ \text{ pair } 4 } \
			\underbrace { \sqcup \sqcup }_{ \text{ pair } 3 } \
			\underbrace { 0 0 }_{ \text{ pair } 2 } \
			\underbrace { \sqcup \sqcup }_{ \text{ pair } 1 } \
			\underbrace { 0 1 }_{ \text{ pair } 0 } \
			\ , \label{eq:Alice's Command Vector 0_B - Example 1}
			\\
			\vmathbb{ 0 }_{ C }
			&=
			\underbrace { 1 0 }_{ \text{ pair } 11 } \
			\underbrace { \sqcup \sqcup }_{ \text{ pair } 10 } \
			\underbrace { 0 0 }_{ \text{ pair } 9 } \
			\underbrace { 1 0 }_{ \text{ pair } 8 } \
			\underbrace { \sqcup \sqcup }_{ \text{ pair } 7 } \
			\underbrace { \sqcup \sqcup }_{ \text{ pair } 6 } \
			\underbrace { 0 0 }_{ \text{ pair } 5 } \
			\underbrace { 0 0 }_{ \text{ pair } 4 } \
			\underbrace { \sqcup \sqcup }_{ \text{ pair } 3 } \
			\underbrace { 0 0 }_{ \text{ pair } 2 } \
			\underbrace { 1 0 }_{ \text{ pair } 1 } \
			\underbrace { \sqcup \sqcup }_{ \text{ pair } 0 } \
			\ . \label{eq:Bob's Command Vector 0_C - Example 1}
		\end{align}
	}
	Recalling Definition \ref{def:Pair Designation}, we see that the sets $\mathbb{ P }_{ x, y } ( \vmathbb{ 0 }_{ B } )$ and $\mathbb{ P }_{ x, y } ( \vmathbb{ 0 }_{ C } )$ contain the positions in $\vmathbb{ 0 }_{ B }$ and $\vmathbb{ 0 }_{ C }$, respectively, of the pairs consisting of the bits $x, y$. In view of (\ref{eq:Alice's Command Vector 0_B - Example 1}) and (\ref{eq:Bob's Command Vector 0_C - Example 1}), we derive that

	\begin{minipage}{0.485 \textwidth}
		\begin{itemize}
			\item	$\mathbb{ P }_{ 0, 0 } ( \vmathbb{ 0 }_{ B } ) = \{ 2, 4, 5, 9 \}$
			\item	$\mathbb{ P }_{ 0, 1 } ( \vmathbb{ 0 }_{ B } ) = \{ 0, 7 \}$
		\end{itemize}
	\end{minipage}
	\begin{minipage}{0.485 \textwidth}
		\begin{itemize}
			\item	$\mathbb{ P }_{ 0, 0 } ( \vmathbb{ 0 }_{ C } ) = \{ 2, 4, 5, 9 \}$
			\item	$\mathbb{ P }_{ 1, 0 } ( \vmathbb{ 0 }_{ C } ) = \{ 1, 8, 11 \}$
		\end{itemize}
	\end{minipage}

	Let us now suppose that Alice sends the order $1$. In this case, Alice must prove her order by sending the command vectors $\mathds{ 1 }_{ B }$ and $\mathds{ 1 }_{ C }$ to Bob and Charlie, respectively. The command vector $\mathds{ 1 }_{ B }$ is constructed by including the pairs in $\mathbf { a }$ with most significant bit $1$ (irrespective of what the least significant bit is) and filling the gaps with the uncertain pair. Symmetrically, $\mathds{ 1 }_{ C }$ is constructed by including the pairs in $\mathbf { a }$ with least significant bit $1$ (irrespective of what the most significant bit is) and filling the gaps with the uncertain pair. The resulting $\mathds{ 1 }_{ B }$ and $\mathds{ 1 }_{ C }$ are shown below (again, we have also repeated $\mathbf { a }$ to facilitate the comparison).
	{\small
		\begin{align}
			\mathbf { a }
			&=
			\underbrace { 1 0 }_{ \text{ pair } 11 } \
			\underbrace { 1 1 }_{ \text{ pair } 10 } \
			\underbrace { 0 0 }_{ \text{ pair } 9 } \
			\underbrace { 1 0 }_{ \text{ pair } 8 } \
			\underbrace { 0 1 }_{ \text{ pair } 7 } \
			\underbrace { 1 1 }_{ \text{ pair } 6 } \
			\underbrace { 0 0 }_{ \text{ pair } 5 } \
			\underbrace { 0 0 }_{ \text{ pair } 4 } \
			\underbrace { 1 1 }_{ \text{ pair } 3 } \
			\underbrace { 0 0 }_{ \text{ pair } 2 } \
			\underbrace { 1 0 }_{ \text{ pair } 1 } \
			\underbrace { 0 1 }_{ \text{ pair } 0 } \
			\ , \label{eq:Alice's Bit Vector a - Example 1 - 3}
			\\
			\mathds{ 1 }_{ B }
			&=
			\underbrace { 1 0 }_{ \text{ pair } 11 } \
			\underbrace { 1 1 }_{ \text{ pair } 10 } \
			\underbrace { \sqcup \sqcup }_{ \text{ pair } 9 } \
			\underbrace { 1 0 }_{ \text{ pair } 8 } \
			\underbrace { \sqcup \sqcup }_{ \text{ pair } 7 } \
			\underbrace { 1 1 }_{ \text{ pair } 6 } \
			\underbrace { \sqcup \sqcup }_{ \text{ pair } 5 } \
			\underbrace { \sqcup \sqcup }_{ \text{ pair } 4 } \
			\underbrace { 1 1 }_{ \text{ pair } 3 } \
			\underbrace { \sqcup \sqcup }_{ \text{ pair } 2 } \
			\underbrace { 1 0 }_{ \text{ pair } 1 } \
			\underbrace { \sqcup \sqcup }_{ \text{ pair } 0 } \
			\ , \label{eq:Alice's Command Vector 1_B - Example 1}
			\\
			\mathds{ 1 }_{ C }
			&=
			\underbrace { \sqcup \sqcup }_{ \text{ pair } 11 } \
			\underbrace { 1 1 }_{ \text{ pair } 10 } \
			\underbrace { \sqcup \sqcup }_{ \text{ pair } 9 } \
			\underbrace { \sqcup \sqcup }_{ \text{ pair } 8 } \
			\underbrace { 0 1 }_{ \text{ pair } 7 } \
			\underbrace { 1 1 }_{ \text{ pair } 6 } \
			\underbrace { \sqcup \sqcup }_{ \text{ pair } 5 } \
			\underbrace { \sqcup \sqcup }_{ \text{ pair } 4 } \
			\underbrace { 1 1 }_{ \text{ pair } 3 } \
			\underbrace { \sqcup \sqcup }_{ \text{ pair } 2 } \
			\underbrace { \sqcup \sqcup }_{ \text{ pair } 1 } \
			\underbrace { 0 1 }_{ \text{ pair } 0 } \
			\ . \label{eq:Bob's Command Vector 1_C - Example 1}
		\end{align}
	}
	The sets $\mathbb{ P }_{ x, y } ( \mathds{ 1 }_{ B } )$ and $\mathbb{ P }_{ x, y } ( \mathds{ 1 }_{ C } )$ contain the positions in $\mathds{ 1 }_{ B }$ and $\mathds{ 1 }_{ C }$, respectively, of the $x, y$ pairs. Taking into account (\ref{eq:Alice's Command Vector 1_B - Example 1}) and (\ref{eq:Bob's Command Vector 1_C - Example 1}), we see that

	\begin{minipage}{0.485 \textwidth}
		\begin{itemize}
			\item	$\mathbb{ P }_{ 1, 0 } ( \mathds{ 1 }_{ B } ) = \{ 1, 8, 11 \}$
			\item	$\mathbb{ P }_{ 1, 1 } ( \mathds{ 1 }_{ B } ) = \{ 3, 6, 10 \}$
		\end{itemize}
	\end{minipage}
	\begin{minipage}{0.485 \textwidth}
		\begin{itemize}
			\item	$\mathbb{ P }_{ 0, 1 } ( \mathds{ 1 }_{ C } ) = \{ 0, 7 \}$
			\item	$\mathbb{ P }_{ 1, 1 } ( \mathds{ 1 }_{ C } ) = \{ 3, 6, 10 \}$
		\end{itemize}
	\end{minipage}
	
	In the present example, the value of the parameter $m$ is $12$, which implies that $\frac { m } { 4 } = 3$. The cardinality of all the sets $\mathbb{ P }_{ 0, 0 } ( \vmathbb{ 0 }_{ B } )$, $\mathbb{ P }_{ 0, 1 } ( \vmathbb{ 0 }_{ B } )$, $\mathbb{ P }_{ 0, 0 } ( \vmathbb{ 0 }_{ C } )$, $\mathbb{ P }_{ 1, 0 } ( \vmathbb{ 0 }_{ C } )$, $\mathbb{ P }_{ 1, 0 } ( \mathds{ 1 }_{ B } )$, $\mathbb{ P }_{ 1, 1 } ( \mathds{ 1 }_{ B } )$, $\mathbb{ P }_{ 0, 1 } ( \mathds{ 1 }_{ C } )$, and $\mathbb{ P }_{ 1, 1 } ( \mathds{ 1 }_{ C } )$ is $\approx 3 = \frac { m } { 4 }$, which verifies the next Lemma \ref{lem:Loyal Command Vector Properties}. Of course, the exact cardinality of some of these sets deviates slightly from the expected value $3$, but this is not unexpected. As $m$ increases, the ratio of the deviation to $m$ will tend to $0$.
	\hfill $\triangleleft$
\end{example}

Based on Definition \ref{def:Pair Designation}, it is trivial to see that the next Lemma \ref{lem:Loyal Command Vector Properties} holds. The notation $| S |$ is employed to designate the cardinality, i.e., the number of elements, of an arbitrary set $S$.

\begin{lemma} [Loyal Command Vector Properties] \label{lem:Loyal Command Vector Properties} \
	If Alice is loyal, her command vectors $\vmathbb{ 0 }_{ B }$, $\vmathbb{ 0 }_{ C }$, $\mathds{ 1 }_{ B }$ and $\mathds{ 1 }_{ C }$ satisfy the following properties.
\end{lemma}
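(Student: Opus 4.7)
The plan is to verify each listed property by unfolding the definitions in Definition \ref{def:Command Vectors}, combining them with the Pair Differentiation Property (Lemma \ref{lem:Pair Differentiation Property}), and invoking a straightforward concentration argument whose stochastic setup is already provided by the distribution phase. The argument is almost entirely structural; probability enters only to control the sizes of the sets $\mathbb{P}_{x,y}(\cdot)$ introduced in Definition \ref{def:Pair Designation}.

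First I would observe that, because Alice's odd-indexed bits $a_{2k+1}$ ($0 \le k \le m-1$) arise from measuring her halves of $m$ independent $\ket{\Psi^+}$ pairs shared with Bob in the computational basis, they form an iid Bernoulli($1/2$) sequence; similarly for the even-indexed bits $a_{2k}$ coming from the pairs shared with Charlie. This reduces the analysis to elementary combinatorics on random bit strings. With this in hand, the purely structural claims fall out directly from the case analysis in the definition: the definite pairs of $\vmathbb{0}_B$ are indexed exactly by $\{k : a_{2k+1}=0\}$ and each lies in $\{00,01\}$; symmetrically, those of $\mathds{1}_B$ lie in $\{10,11\}$, those of $\vmathbb{0}_C$ in $\{00,10\}$, and those of $\mathds{1}_C$ in $\{01,11\}$. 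Thus each command vector partitions $\{0,\dots,m-1\}$ cleanly into definite-pair positions and uncertain-pair positions, with the form of the definite pairs tightly constrained.

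For any numerical statements about $|\mathbb{P}_{x,y}(\cdot)|$, I would apply a Chernoff bound: the count of definite pairs in each command vector is a sum of $m$ iid Bernoulli($1/2$) indicators and hence concentrates around $m/2$ with exponentially small deviation, while $|\mathbb{P}_{0,0}(\vmathbb{0}_B)|$, $|\mathbb{P}_{0,1}(\vmathbb{0}_B)|$ and the analogous subcounts each concentrate around $m/4$. This is precisely the role of "$m$ sufficiently large" that was flagged but left unspecified in Subsection \ref{subsec:Entanglement Distribution Phase}. In parallel, the differentiation property transfers from bit vectors to command vectors: every definite pair $\mathbf{v}_k$ of $\vmathbb{0}_B$ or $\mathds{1}_B$ equals $\mathbf{a}_k$, and Lemma \ref{lem:Pair Differentiation Property} gives $\mathbf{a}_k \neq \mathbf{b}_k$, so Bob cannot honestly reproduce any definite pair of a command vector from his own register; symmetrically for Charlie.

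The main obstacle is conceptual rather than computational: the reduction to iid fair-coin flips is legitimate only under the hypothesis that the quantum source truly delivered $\ket{\Psi^+}$ pairs and $\ket{+}$ singletons, which in turn relies on the successful completion of the entanglement verification phase of Subsection \ref{subsec:Entanglement Verification Phase}. I would therefore preface the proof by explicitly invoking assumption ($\mathbf{A}_1$) together with the successful verification, so that the subsequent enumeration of properties is just bookkeeping. Once the Bernoulli model is in place, each property of the lemma should follow in a few lines from the definitions, the Pair Differentiation Property, and a one-line Chernoff estimate.
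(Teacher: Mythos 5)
Your proposal is correct and follows the same route the paper implicitly takes: the paper omits the proof entirely, declaring the lemma "trivial to see" from Definition \ref{def:Pair Designation}, and your argument—uniform, independent measurement outcomes on Alice's halves of the $\ket{\Psi^{+}}$ pairs, hence each pair $a_{2k+1}a_{2k}$ uniform on $\{00,01,10,11\}$, hence each of the eight subcounts concentrating near $m/4$—is exactly the bookkeeping being elided. Your added remarks on the Chernoff-type concentration (making "$\approx$" precise) and on the dependence of the iid model on assumption ($\mathbf{A}_1$) and the verification phase are sound refinements rather than deviations.
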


\begin{minipage}{0.485 \textwidth}
	\begin{itemize}
		\item	$| \mathbb{ P }_{ 0, 0 } ( \vmathbb{ 0 }_{ B } ) | \approx \frac { m } { 4 }$.
		\item	$| \mathbb{ P }_{ 0, 1 } ( \vmathbb{ 0 }_{ B } ) | \approx \frac { m } { 4 }$.
		\item	$| \mathbb{ P }_{ 1, 0 } ( \mathds{ 1 }_{ B } ) | \approx \frac { m } { 4 }$.
		\item	$| \mathbb{ P }_{ 1, 1 } ( \mathds{ 1 }_{ B } ) | \approx \frac { m } { 4 }$.
	\end{itemize}
\end{minipage}
\begin{minipage}{0.485 \textwidth}
	\begin{itemize}
		\item	$| \mathbb{ P }_{ 0, 0 } ( \vmathbb{ 0 }_{ C } ) | \approx \frac { m } { 4 }$.
		\item	$| \mathbb{ P }_{ 1, 0 } ( \vmathbb{ 0 }_{ C } ) | \approx \frac { m } { 4 }$.
		\item	$| \mathbb{ P }_{ 0, 1 } ( \mathds{ 1 }_{ C } ) | \approx \frac { m } { 4 }$.
		\item	$| \mathbb{ P }_{ 1, 1 } ( \mathds{ 1 }_{ C } ) | \approx \frac { m } { 4 }$.
	\end{itemize}
\end{minipage}

The agreement phase of the EPRQDBA protocol evolves in rounds. The actions taken by the Alice, Bob and Charlie in each round are described below.

\begin{tcolorbox}
	[
		enhanced,
		breakable,
		grow to left by = 0.00 cm,
		grow to right by = 0.00 cm,
		colback = WordIceBlue,			
		enhanced jigsaw,				
		sharp corners,
		toprule = 0.01 pt,
		bottomrule = 0.01 pt,
		leftrule = 0.1 pt,
		rightrule = 0.1 pt,
		sharp corners,
		center title,
		fonttitle = \bfseries
	]
	\begin{algorithm}[H]
		\SetAlgorithmName{Protocol}{ }{ }
		\setcounter{algocf}{0}
		\caption { \textsc { The $3$ Player EPRQDBA Protocol } }
		\label{alg:The $3$ Player EPRQDBA Protocol}
	\end{algorithm}
	\begin{enumerate} [ left = 0.90 cm, labelsep = 0.50 cm ]
		\renewcommand\labelenumi{(\textbf{Round}$_\theenumi$)}
		\item	Alice sends to Bob and Charlie her order and the appropriate command vector as proof.
		\item	Bob uses the \textsc{CheckAlice} Algorithm \ref{alg:The $3$ Player CheckAlice Algorithm} to check the received command vector against his own bit vector $\mathbf { b }$ for inconsistencies.
		\begin{itemize}
			\item[$\star$]	If no inconsistencies are found, Bob preliminary accepts the received order.
			\item[$\star$]	Otherwise, Bob's preliminary decision is to abort ($\bot$).
		\end{itemize}
		Upon completion of the verification procedure, Bob sends to Charlie his preliminary decision together with the received command vector.
		
		Symmetrically and in parallel, Charlie performs the same actions.
		\item	Bob compares his own initial decision with that of Charlie's.
		\begin{itemize}
			\item[$\star$]	\colorbox{WordLightGreen}{\textbf{Rule}$_{ 3, 1 }$}: If Charlie's decisions coincides with Bob's preliminary decision, then Bob accepts as final his preliminary decision and terminates the protocol on his end.
			\item[$\star$]	In case Charlie's initial decision is different, then the following cases are considered.
			\begin{itemize}
				\item[$\square$]	\colorbox{WordLightGreen}{\textbf{Rule}$_{ 3, 2 }$}: If Bob's decision is $0$ ($1$) and Charlie's decision is to abort, then Bob sticks to his initial decision and terminates the protocol.
				\item[$\square$]	If Bob's decision is $0$ ($1$) and Charlie's decision is $1$, then Bob uses the \textsc{CheckWCV} Algorithm \ref{alg:The $3$ Player CheckWCV Algorithm} to check Charlie's command vector against his own command vector for inconsistencies.
				\begin{itemize}
					\item[$\diamond$]	\colorbox{WordLightGreen}{\textbf{Rule}$_{ 3, 3 }$}: If no inconsistencies are detected, then Bob aborts and terminates the protocol.
					\item[$\diamond$]	\colorbox{WordLightGreen}{\textbf{Rule}$_{ 3, 4 }$}: Otherwise, Bob sticks to his preliminary decision and terminates the protocol.
				\end{itemize}
				\item[$\square$]	If Bob's decision is to abort and Charlie's decision is $0$ (or $1$), then Bob uses the \textsc{CheckWBV} Algorithm \ref{alg:The $3$ Player CheckWBV Algorithm} to check Charlie's command vector against his own bit vector for inconsistencies.
				\begin{itemize}
					\item[$\diamond$]	\colorbox{WordLightGreen}{\textbf{Rule}$_{ 3, 5 }$}: If no inconsistencies are found, then Bob changes his final decision to $0$ ($1$) and terminates the protocol.
					\item[$\diamond$]	\colorbox{WordLightGreen}{\textbf{Rule}$_{ 3, 6 }$}: Otherwise, Bob sticks to his initial decision to abort and terminates the protocol.
				\end{itemize}
			\end{itemize}
		\end{itemize}
		Charlie's actions mirror Bob's actions.
	\end{enumerate}
\end{tcolorbox}

Algorithms \ref{alg:The $3$ Player CheckAlice Algorithm} -- \ref{alg:The $3$ Player CheckWBV Algorithm} apply Lemmata \ref{lem:Pair Differentiation Property} and \ref{lem:Loyal Command Vector Properties} to determine whether a command vector is consistent, in which case they return TRUE. If they determine inconsistencies, they terminate and return FALSE. The names of Algorithms \ref{alg:The $3$ Player CheckWCV Algorithm} and \ref{alg:The $3$ Player CheckWBV Algorithm} are mnemonic abbreviations for ``Check with Command Vector'' and ``Check with Bit Vector.'' It is important to explain their difference.

Without loss of generality, let us suppose that Bob received the order $0$ and a consistent command vector $\mathbf { v }_{ A }$ from Alice, while Charlie claims that he received the order $1$. Charlie must convince Bob by sending Bob the command vector $\mathbf { v }$ he claims he received from Alice. Bob checks $\mathbf { v }$ for inconsistencies against his own $\mathbf { v }_{ A }$ by invoking the Algorithm \ref{alg:The $3$ Player CheckWCV Algorithm}. Now, imagine that Bob received an inconsistent command vector $\mathbf { v }_{ A }$ from Alice and his initial decision is to abort, while Charlie claims that he received the order $1$. Charlie must convince Bob by sending Bob the command vector $\mathbf { v }$ he claims he received from Alice. In this situation, Bob does not have a consistent command vector $\mathbf { v }_{ A }$ that he could use, so Bob must check $\mathbf { v }$ for inconsistencies against his own bit vector $\mathbf { b }$ by invoking the Algorithm \ref{alg:The $3$ Player CheckWBV Algorithm}.

In Algorithms \ref{alg:The $3$ Player CheckAlice Algorithm}, \ref{alg:The $3$ Player CheckWCV Algorithm}, and \ref{alg:The $3$ Player CheckWBV Algorithm}, we employ the following notation.
\begin{itemize}
	\item	$i, j \in \{ 0, 1 \}$ are the indices of Bob ($1$) and Charlie ($0$).
	\item	$c$, where $c = 0$ or $c = 1$, is the command being checked for consistency.
	\item	$\mathbf { v }_{ A } = v_{ 2 m - 1 } v_{ 2 m - 2 } \ \dots \ v_{ 1 } v_{ 0 }$ is the command vector sent by Alice.
	\item	$\mathbf { v }$, is the command vector sent by Bod to Charlie (or vice versa).
	\item	$\mathbf { l } = l_{ 2 m - 1 } l_{ 2 m - 2 } \ \dots \ l_{ 1 } l_{ 0 }$ is the bit vector of Bod (Charlie) who does the consistency checking.
	\item	$\bigtriangleup$ is the symmetric difference of two sets, i.e., $ S \bigtriangleup S' = \left( S \setminus S' \right) \bigcup \left( S' \setminus S \right)$, for given sets $S$ and $S'$.
\end{itemize}

\begin{algorithm}[H]
	\setcounter{algocf}{0}
	\SetKw{Break}{break}
	\caption{ \textsc{ CheckAlice } ( i, c, $\mathbf { v }_{ A }$, $\mathbf { l }$ ) }
	\label{alg:The $3$ Player CheckAlice Algorithm}
	\# When invoked by Bob $i = 1$
	\\
	\# When invoked by Charlie $i = 0$
	\\
	\If
	{
		$! \ ( \ | \ \mathbb{ P }_{ c, c } ( \mathbf { v }_{ A } ) \ | \approx \frac { m } { 4 } \ )$
	}
	{
		\Return FALSE
	}
	\If
	{
		$! \ ( \ | \ \mathbb{ P }_{ \overline{ c } \oplus i, c \oplus i } ( \mathbf { v }_{ A } ) \ | \approx \frac { m } { 4 } \ )$
	}
	{
		\Return FALSE
	}
	\For {$k = 0$ \KwTo $m - 1$}
	{
		\If { $( v_{ 2 k + i } == l_{ 2 k + i } )$ }
		{
			\Return FALSE
		}
	}
	\Return TRUE
\end{algorithm}

\begin{algorithm}[H]
	\SetKw{Break}{break}
	\caption{ \textsc{CheckWCV} (i, j, c, $\mathbf { v }$, $\mathbf { v }_{ A }$) }
	\label{alg:The $3$ Player CheckWCV Algorithm}
	\# When invoked by Bob to check Charlie $i = 1, j = 0$
	\\
	\# When invoked by Charlie to check Bob $i = 0, j = 1$
	\\
	\If
	{
		$! \ ( \ | \ \mathbb{ P }_{ c, c } ( \mathbf { v } ) \ | \approx \frac { m } { 4 } \ )$
	}
	{
		\Return FALSE
	}
	\If
	{
		$! \ ( \ | \ \mathbb{ P }_{ \overline{ c } \oplus j, c \oplus j } ( \mathbf { v } ) \ | \approx \frac { m } { 4 } \ )$
	}
	{
		\Return FALSE
	}
	\If
	{
		$! \ ( \ | \
		\mathbb{ P }_{ \overline{ c } \oplus j, c \oplus j } ( \mathbf { v }_{ A } )
		\bigtriangleup
		\mathbb{ P }_{ \overline{ c } \oplus j, c \oplus j } ( \mathbf { v } )
		\ | \approx 0 \ )$
	}
	{
		\Return FALSE
	}
	\Return TRUE
\end{algorithm}

\begin{algorithm}[H]
	\SetKw{Break}{break}
	\caption{ \textsc{CheckWBV} (i, j, c, $\mathbf { v }$, $\mathbf { l }$) }
	\label{alg:The $3$ Player CheckWBV Algorithm}
	\# When invoked by Bob to check Charlie $i = 1, j = 0$
	\\
	\# When invoked by Charlie to check Bob $i = 0, j = 1$
	\\
	\If
	{
		$! \ ( \ | \ \mathbb{ P }_{ c, c } ( \mathbf { v } ) \ | \approx \frac { m } { 4 } \ )$
	}
	{
		\Return FALSE
	}
	\If
	{
		$! \ ( \ | \ \mathbb{ P }_{ \overline{ c } \oplus j, c \oplus j } ( \mathbf { v } ) \ | \approx \frac { m } { 4 } \ )$
	}
	{
		\Return FALSE
	}
	\For {$k = 0$ \KwTo $m - 1$}
	{
		\If { $( v_{ 2 k + i } == l_{ 2 k + i } )$ }
		{
			\Return FALSE
		}
	}
	\Return TRUE
\end{algorithm}

In order to streamline the exposition of the text, all lengthy proofs are relocated in the Appendix \ref{appsec:Appendix - Main Results Proofs}.

In the literature, all works focusing on the $3$ player setting, assume that there is exactly one traitor among the $3$, who can either be the commanding general or one of the lieutenant generals. If we assume that there is precisely one traitor among the $3$, we can prove the following Propositions \ref{prp:Loyal Alice} and \ref{prp:Traitor Alice}.

\begin{proposition} [Loyal Alice] \label{prp:Loyal Alice}
	If Alice is loyal, the $3$ player EPRQDBA protocol will enable the loyal lieutenant general to agree with Alice. Specifically, if Alice and Bob are loyal and Charlie is a traitor, Bob will follow Alice's order. Symmetrically, if Alice and Charlie are loyal and Bob is a traitor, Charlie will follow Alice's order.
\end{proposition}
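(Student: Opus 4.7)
The plan is to show that, whatever strategy Charlie adopts, Bob finalizes Alice's order $c$. Throughout, let $\mathbf{v}_A \in \{\vmathbb{0}_B, \mathds{1}_B\}$ be the command vector Alice honestly sends to Bob.

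First I would verify that Bob's preliminary decision at the close of (R$_2$) is exactly $c$. Because Alice is loyal, the cardinality predicates at the top of \textsc{CheckAlice} succeed by Lemma \ref{lem:Loyal Command Vector Properties}. The bit-level loop cannot return FALSE either: by Lemma \ref{lem:Pair Differentiation Property} the odd-indexed bit $v_{2k+1}$ of $\mathbf{v}_A$, when definite, equals $a_{2k+1} = \overline{b_{2k+1}}$, while it equals $\sqcup$ otherwise; in either case $v_{2k+1} \neq b_{2k+1}$. Thus \textsc{CheckAlice} returns TRUE and Bob's preliminary decision is $c$.

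Next I would enumerate Charlie's possible Round (R$_2$) messages $(d_C, \mathbf{v}_C)$ to Bob. If $d_C = c$, Rule $3.1$ finalizes Bob at $c$; if $d_C = \bot$, Rule $3.2$ keeps Bob at $c$. The only substantive case is $d_C = \overline{c}$, in which Bob runs \textsc{CheckWCV}; the task is to argue it returns FALSE, so that Rule $3.4$ holds Bob at $c$.

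The heart of the proof, and the step I anticipate to be the main obstacle, is an information-theoretic ignorance bound on Charlie. A genuine command vector for order $\overline{c}$ addressed to Charlie would have definite pairs exactly at the positions $\{k : a_{2k} = \overline{c}\}$, with first bits $a_{2k+1}$. These odd-indexed bits come from the Alice--Bob EPR pairs, so Charlie has no quantum access to them; the only classical leak he has received (his authentic $\vmathbb{0}_C$ or $\mathds{1}_C$) discloses $a_{2k+1}$ only at positions $\{k : a_{2k} = c\}$, which is disjoint from the positions the fake must fill. Hence any $\mathbf{v}_C$ Charlie submits is, on the coordinates where the protocol expects definite bits, a commitment to values statistically independent of the true $a_{2k+1}$. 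I would then combine this independence with Lemmata \ref{lem:Pair Differentiation Property} and \ref{lem:Loyal Command Vector Properties} to show by a standard concentration argument that at least one of the three set-cardinality predicates imposed by \textsc{CheckWCV} must fall outside its $\approx m/4$ tolerance window, except on an event whose probability decays exponentially in $m$. Choosing $m$ sufficiently large therefore drives Bob to Rule $3.4$ with overwhelming probability.

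Finally, the mirror-image case (loyal Alice and Charlie, traitor Bob) follows by interchanging the roles of Bob and Charlie throughout: the parity indices in Lemma \ref{lem:Pair Differentiation Property} swap, the cardinality claims of Lemma \ref{lem:Loyal Command Vector Properties} are applied to $\vmathbb{0}_C$ and $\mathds{1}_C$, and the same ignorance bound applies to Bob. The principal technical difficulty in both directions is making the probabilistic gap between a loyal vector's exact cardinalities and a forger's concentrated-but-wrong cardinalities precise enough to allow a union bound over all three \textsc{CheckWCV} predicates simultaneously, which is what the appendix proof will have to pin down.
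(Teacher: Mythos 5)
Your overall architecture matches the paper's proof in Appendix~\ref{appsec:Appendix - Main Results Proofs}: \textsc{CheckAlice} accepts a loyal Alice's vector, the only nontrivial case is Charlie claiming $\overline{c}$ with a fabricated command vector, and the engine of the argument is Charlie's information-theoretic ignorance of the odd-indexed bits $a_{2k+1}$ at exactly the positions $\{k : a_{2k} = \overline{c}\}$ where his forgery must be definite. That identification of what Charlie does and does not know is precisely the paper's argument, and your treatment of the easy cases (Rules $3.1$, $3.2$) and of the symmetric Bob-as-traitor case is fine.

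The step that fails is your closing claim that a concentration bound forces at least one of the three cardinality predicates of \textsc{CheckWCV} outside its $\approx m/4$ window. It does not. Charlie knows from his own bit vector (since $c_{2k} = \overline{a_{2k}}$) exactly which positions must carry definite pairs in a vector for order $\overline{c}$, and he knows from Lemma~\ref{lem:Loyal Command Vector Properties} that he must place $\approx m/4$ zeros and $\approx m/4$ ones in the leading coordinates of those pairs; he can therefore match every marginal cardinality exactly, and a uniformly random choice of which positions receive $0$ even makes the symmetric difference against Bob's reference set concentrate at $m/4$ (the expected overlap of two random $m/4$-subsets of an $m/2$-set is $m/8$). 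Concentration works \emph{for} the forger here, not against him. What actually catches Charlie, and what the paper's proof uses, is that he cannot reproduce the \emph{set} $\mathbb{ P }_{ 0, 1 } ( \vmathbb{ 0 }_{ B } )$ itself, only its size: the probability that his guessed set coincides with Bob's is $\binom{ \ m/2 \ }{ \ m/4 \ }^{ - 1 }$, so the symmetric difference is nonzero (indeed $\Theta ( m )$) except with negligible probability, and it is this deviation of the symmetric difference from the value $0$ attained by genuine vectors --- not a miscount in any marginal cardinality --- that constitutes the detected inconsistency. Be aware that the listing of \textsc{CheckWCV} as printed demands the symmetric difference be $\approx m/4$, whereas the appendix proof (and the consistency analysis in Proposition~\ref{prp:Traitor Alice}) only make sense if the test is that the symmetric difference be small; your argument must target that criterion, and as proposed it targets the wrong one.
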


\begin{proposition} [Traitor Alice] \label{prp:Traitor Alice}
	If Bob and Charlie are loyal and Alice is a traitor, the $3$ player EPRQDBA protocol will enable Bob and Charlie to reach agreement, in the sense of both following the same order or both aborting.
\end{proposition}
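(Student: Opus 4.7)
I would prove Proposition \ref{prp:Traitor Alice} by exhaustive case analysis on the pair $(P_B, P_C)$ of preliminary decisions Bob and Charlie hold at the end of Round $2$. Of the nine a-priori combinations, the Bob/Charlie symmetry of rules (\textbf{R}$_3$) leaves six essentially distinct cases: $(0,0)$, $(1,1)$, $(\bot,\bot)$, $(0,1)$, $(0,\bot)$, and $(1,\bot)$. My goal in each is to identify the subset of Round $3$ rules that actually fires and to argue that the resulting final decisions coincide (both accept the same order, or both abort).

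The three diagonal cases are immediate consequences of Rule$_{3,1}$, which asks each party to ratify his preliminary decision whenever the other party's preliminary decision agrees with it. No property of the traitor Alice is invoked.

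For the asymmetric case $(0, \bot)$ (and its symmetric $(1, \bot)$), Rule$_{3,2}$ pins Bob's final decision at $0$, and the entire question reduces to whether Charlie's invocation of \textsc{CheckWBV} on Bob's received vector $\mathbf{v}_A^B$ returns TRUE, so that Rule$_{3,5}$ makes Charlie adopt $0$. The argument leans on Lemma \ref{lem:Pair Differentiation Property}: Bob's earlier acceptance of $\mathbf{v}_A^B$ via \textsc{CheckAlice} has already forced the first bits of every definite pair to match $a_{2k+1}$ and the relevant pair-type counts to land at $\approx m/4$; the second bits are the only leverage Alice retains, and any deviation from $a_{2k}$ at a definite position is immediately caught by the comparison $v_{2k} = c_{2k}$ in Charlie's loop, because $c_{2k} = \overline{a_{2k}}$. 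Thus, for an accepted-by-Bob $\mathbf{v}_A^B$, the success of \textsc{CheckWBV} reduces to a statistical concentration statement on the pair-type counts.

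The case $(0, 1)$ (and its mirror $(1, 0)$) is the most delicate. Both lieutenants have preliminarily accepted well-formed command vectors for opposite orders, so each invokes \textsc{CheckWCV}. I would show that both invocations return TRUE, so that Rule$_{3,3}$ fires symmetrically and both abort, producing agreement on $\bot$. The first two statistical tests of \textsc{CheckWCV} replicate the diagnostics of the other party's \textsc{CheckAlice} and are inherited from its preliminary acceptance. The third test compares the symmetric-difference statistic against the $\approx m/4$ threshold; using Lemma \ref{lem:Pair Differentiation Property} together with Lemma \ref{lem:Loyal Command Vector Properties}, one pins down both index sets in terms of the underlying measured bits $a_{2k+1}, a_{2k}$ produced by the entanglement distribution phase, so that the expected size of the symmetric difference becomes predictable.

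The main obstacle across the asymmetric and mixed cases is controlling, by a union bound over the statistical ``$\approx m/4$'' tests, the probability that the traitor Alice jointly passes \textsc{CheckAlice} on one side while driving \textsc{CheckWBV} or \textsc{CheckWCV} toward the wrong answer on the other. Quantifying the slack encoded in ``$\approx$'' and invoking Chernoff/Hoeffding concentration on the measurement outcomes of the $m$ independent EPR pairs on each side should reduce this failure probability to $2^{-\Omega(m)}$, which is the reason the entanglement distribution phase of Subsection \ref{subsec:Entanglement Distribution Phase} demands $m$ ``sufficiently large''.
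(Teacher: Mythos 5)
Your decomposition is the paper's own proof in a different parametrization: the paper splits on what the traitor Alice sent (consistent vector to one or both, same or different orders), you split on the resulting pair of preliminary decisions, and since a loyal lieutenant preliminarily accepts exactly when his received vector passes \textsc{CheckAlice}, the two case analyses coincide case for case, with the same rules (\textbf{Rule}$_{3,1}$ for the diagonal, \textbf{Rule}$_{3,2}$ with \textbf{Rule}$_{3,5}$ for the mixed accept/abort case, \textbf{Rule}$_{3,3}$ for opposite orders) doing the work. The Chernoff/union-bound layer you add is not present in the paper's argument for this proposition, which simply asserts the outcomes of the checks.

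There is, however, one step that does not go through as you state it, namely the mixed case $(0,\bot)$. You correctly observe that Bob's \textsc{CheckAlice} constrains only the odd-indexed bits $v_{2k+1}$ (via the comparison with $b_{2k+1}$) and two aggregate pair counts, so the even-indexed bits are ``the only leverage Alice retains,'' and that any tampering with $v_{2k}$ in a definite pair is detected by Charlie's loop because $c_{2k}=\overline{a_{2k}}$. But detection means \textsc{CheckWBV} returns FALSE, in which case \textbf{Rule}$_{3,6}$ leaves Charlie at $\bot$ while \textbf{Rule}$_{3,2}$ has already pinned Bob at $0$ --- that branch is a disagreement between two loyal lieutenants, not an agreement. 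So the success of \textsc{CheckWBV} does not ``reduce to a statistical concentration statement on the pair-type counts''; you must either show that Alice cannot alter even-indexed bits of a handful of definite pairs while still passing Bob's \textsc{CheckAlice} (as the algorithms are written, she can, since Bob never inspects those bits and the counts move only by $O(1)$), or explicitly dispose of the branch in which Charlie's check fails. The paper's proof avoids the issue only by declaring Bob's vector ``consistent'' in a global sense and jumping directly to \textbf{Rule}$_{3,5}$, so your sketch reproduces rather than fills this gap; as a self-contained argument, the $(0,\bot)$ case is incomplete.
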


\begin{example} [Illustrating the $3$ player protocol] \label{xmp:Illustration of the $3$ Player EPRQDBA Protocol}
	This second example is a continuation of the first. It aims to illustrate the operation of the $3$ Player EPRQDBA Protocol and give an intuitive explanation of why Propositions \ref{prp:Loyal Alice} and \ref{prp:Traitor Alice} are true.
	
	Let us continue Example \ref{xmp:Illustration of Concepts}, and examine first the scenario where a loyal Alice sends order $0$ to both Bob and Charlie accompanied by the command vectors $\vmathbb{ 0 }_{ B }$ and $\vmathbb{ 0 }_{ C }$ as given by (\ref{eq:Alice's Command Vector 0_B - Example 1}) and (\ref{eq:Bob's Command Vector 0_C - Example 1}), respectively. We may distinguish the following cases.
	\begin{itemize}
		\item	\emph{Both Bob and Charlie are loyal}. Then, according to \textbf{Rule}$_{ 3, 1 }$ of the $3$ player EPRQDBA Protocol \ref{alg:The $3$ Player EPRQDBA Protocol}, all $3$ players agree to execute order $0$.
		\item	\emph{There is one traitor among Bob and Charlie}. Without loss of generality we may assume that Charlie is the traitor, who tries to sabotage the agreement. What can he do?
		\begin{itemize}
			\item[$\diamond$]	Charlie may claim that he decided to abort because Alice sent him an inconsistent command vector. Then, according to \textbf{Rule}$_{ 3, 2 }$ of the $3$ player EPRQDBA Protocol \ref{alg:The $3$ Player EPRQDBA Protocol}, Bob will stick to his initial decision, and, together with Alice, will agree to execute order $0$. In this way, assumption (\textbf{DBA}$_{ 3 }$) of the detectable Byzantine agreement protocol, stipulating that commanding general is loyal, then either all loyal lieutenant generals follow the commanding general’s order or abort the protocol (recall Definition \ref{def:Detectable Byzantine Agreement}), is satisfied. The rationale behind this rule is that Bob, having already received a consistent command vector from Alice, is suspicious of Charlie's decision to abort, and decides to cling to his preliminary decision.
			\item[$\diamond$]	Charlie may claim that Alice gave the order $1$, along with a consistent command vector $\mathds{ 1 }_{ C }$. Bob will then use the \textsc{CheckWCV} Algorithm \ref{alg:The $3$ Player CheckWCV Algorithm} to check Charlie's command vector against his own command vector $\vmathbb{ 0 }_{ B }$ for inconsistencies. If Bob finds no inconsistencies, then, according to \textbf{Rule}$_{ 3, 3 }$ of the $3$ player EPRQDBA Protocol \ref{alg:The $3$ Player EPRQDBA Protocol}, he will abort. Obviously, in such a case, Charlie will have won in sabotaging the agreement. Let's analyze the probability of this event. First, let's recall from (\ref{eq:Bob's Command Vector 0_C - Example 1}) what Charlie knows with certainty.
			{\small
				\begin{align}
					\vmathbb{ 0 }_{ C }
					&=
					\underbrace { 1 0 }_{ \text{ pair } 11 } \
					\underbrace { \sqcup \sqcup }_{ \text{ pair } 10 } \
					\underbrace { 0 0 }_{ \text{ pair } 9 } \
					\underbrace { 1 0 }_{ \text{ pair } 8 } \
					\underbrace { \sqcup \sqcup }_{ \text{ pair } 7 } \
					\underbrace { \sqcup \sqcup }_{ \text{ pair } 6 } \
					\underbrace { 0 0 }_{ \text{ pair } 5 } \
					\underbrace { 0 0 }_{ \text{ pair } 4 } \
					\underbrace { \sqcup \sqcup }_{ \text{ pair } 3 } \
					\underbrace { 0 0 }_{ \text{ pair } 2 } \
					\underbrace { 1 0 }_{ \text{ pair } 1 } \
					\underbrace { \sqcup \sqcup }_{ \text{ pair } 0 } \
					\ . \label{eq:Bob's Command Vector 0_C - Example 2}
				\end{align}
			}
			He partially knows $\vmathbb{ 0 }_{ B }$ given by (\ref{eq:Alice's Command Vector 0_B - Example 1}), and in particular only the pairs consisting of $0 0$,
			{\small
				\begin{align}
					\vmathbb{ 0 }_{ B }
					&=
					\underbrace { \sqcup \sqcup }_{ \text{ pair } 11 } \
					\underbrace { \sqcup \sqcup }_{ \text{ pair } 10 } \
					\underbrace { 0 0 }_{ \text{ pair } 9 } \
					\underbrace { \sqcup \sqcup }_{ \text{ pair } 8 } \
					\underbrace { 0 1 }_{ \text{ pair } 7 } \
					\underbrace { \sqcup \sqcup }_{ \text{ pair } 6 } \
					\underbrace { 0 0 }_{ \text{ pair } 5 } \
					\underbrace { 0 0 }_{ \text{ pair } 4 } \
					\underbrace { \sqcup \sqcup }_{ \text{ pair } 3 } \
					\underbrace { 0 0 }_{ \text{ pair } 2 } \
					\underbrace { \sqcup \sqcup }_{ \text{ pair } 1 } \
					\underbrace { 0 1 }_{ \text{ pair } 0 } \
					\ , \label{eq:Alice's Command Vector 0_B - Example 2}
				\end{align}
			}
			and his problem is to construct something resembling the real $\mathds{ 1 }_{ C }$, as given from (\ref{eq:Bob's Command Vector 1_C - Example 1}).
			{\small
				\begin{align}
					\mathds{ 1 }_{ C }
					&=
					\underbrace { \sqcup \sqcup }_{ \text{ pair } 11 } \
					\underbrace { 1 1 }_{ \text{ pair } 10 } \
					\underbrace { \sqcup \sqcup }_{ \text{ pair } 9 } \
					\underbrace { \sqcup \sqcup }_{ \text{ pair } 8 } \
					\underbrace { 0 1 }_{ \text{ pair } 7 } \
					\underbrace { 1 1 }_{ \text{ pair } 6 } \
					\underbrace { \sqcup \sqcup }_{ \text{ pair } 5 } \
					\underbrace { \sqcup \sqcup }_{ \text{ pair } 4 } \
					\underbrace { 1 1 }_{ \text{ pair } 3 } \
					\underbrace { \sqcup \sqcup }_{ \text{ pair } 2 } \
					\underbrace { \sqcup \sqcup }_{ \text{ pair } 1 } \
					\underbrace { 0 1 }_{ \text{ pair } 0 } \
					\ . \label{eq:Bob's Command Vector 1_C - Example 2}
				\end{align}
			}
			Charlie knows that the least significant bit of the uncertain pairs in $\vmathbb{ 0 }_{ C }$ is $1$ and the most significant bit is $0$ or $1$, with equal probability $0.5$. However, Charlie cannot know with certainty if the most significant bit of a specific uncertain pair is $0$ or $1$. Therefore, when guessing $\mathds{ 1 }_{ C }$, Charlie can make two detectable mistakes: (i) place a $0$ in a wrong pair that is not actually contained in $\mathbb{ P }_{ 0, 1 } ( \vmathbb{ 0 }_{ B } )$, e.g., set the pair $10$ as $01$, or (ii) place a $1$ in a wrong pair that appears in $\mathbb{ P }_{ 0, 1 } ( \vmathbb{ 0 }_{ B } )$, e.g., set pair 7 as $11$. The situation from a probabilistic point of view resembles the probability of picking the one correct configuration out of many. The total number of configurations is equal to the number of ways to place $2$ or $3$ identical objects ($0$) into $5$ distinguishable boxes. The probability that Charlie places all the $0$s correctly is
			\begin{align}
				P( \text{ Charlie places all $0$ correctly } )
				=
				\frac { 1 }
				{ \binom { \ 5 \ } { \ 3 \ } }
				=
				0.1
				\ .
			\end{align}
			Thus, the probability that Charlie cheats successfully is just $0.1$, even in this toy scale example with $m = 12$. Consequently, Bob is bound to detect inconsistencies, which will lead him to apply \textbf{Rule}$_{ 3, 4 }$ of the $3$ player EPRQDBA Protocol \ref{alg:The $3$ Player EPRQDBA Protocol}, cling to his initial decision, and, ultimately, achieve agreement with Alice.
		\end{itemize}
	\end{itemize}
	
	The situation where a loyal Alice sends order $1$ to both Bob and Charlie is identical.
	
	Let us now examine the second major scenario where a traitor Alice intends to prevent the loyal Bob and Charlie from recanting agreement. In view of Definition \ref{def:Detectable Byzantine Agreement}, this precludes the cases where Alice sends to both of them inconsistent command vectors, or the same order, accompanied by appropriate consistent command vectors. In reviewing her remaining options, we distinguish the following cases.
	\begin{itemize}
		\item	\emph{Alice sends an order, say $0$, along with a consistent command vector to Bob and an inconsistent one to Charlie}. In this case, Bob will stick to his initial decision, but Charlie will use the \textsc{CheckWBV} Algorithm \ref{alg:The $3$ Player CheckWBV Algorithm} to verify Bob's command vector. In doing so, Charlie will crosscheck $\vmathbb{ 0 }_{ B }$, given by (\ref{eq:Alice's Command Vector 0_B - Example 1}), with his own bit vector $\mathbf { c }$ from (\ref{eq:Charlie's Bit Vector c - Example 1}).
		{\small
			\begin{align}
				\vmathbb{ 0 }_{ B }
				&=
				\underbrace { \sqcup \sqcup }_{ \text{ pair } 11 } \
				\underbrace { \sqcup \sqcup }_{ \text{ pair } 10 } \
				\underbrace { 0 0 }_{ \text{ pair } 9 } \
				\underbrace { \sqcup \sqcup }_{ \text{ pair } 8 } \
				\underbrace { 0 1 }_{ \text{ pair } 7 } \
				\underbrace { \sqcup \sqcup }_{ \text{ pair } 6 } \
				\underbrace { 0 0 }_{ \text{ pair } 5 } \
				\underbrace { 0 0 }_{ \text{ pair } 4 } \
				\underbrace { \sqcup \sqcup }_{ \text{ pair } 3 } \
				\underbrace { 0 0 }_{ \text{ pair } 2 } \
				\underbrace { \sqcup \sqcup }_{ \text{ pair } 1 } \
				\underbrace { 0 1 }_{ \text{ pair } 0 } \
				\ , \label{eq:Alice's Command Vector 0_B - Example 2 - 2}
				\\
				\mathbf { c }
				&=
				\underbrace { 0 1 }_{ \text{ pair } 11 } \
				\underbrace { 0 0 }_{ \text{ pair } 10 } \
				\underbrace { 1 1 }_{ \text{ pair } 9 } \
				\underbrace { 1 1 }_{ \text{ pair } 8 } \
				\underbrace { 0 0 }_{ \text{ pair } 7 } \
				\underbrace { 1 0 }_{ \text{ pair } 6 } \
				\underbrace { 0 1 }_{ \text{ pair } 5 } \
				\underbrace { 1 1 }_{ \text{ pair } 4 } \
				\underbrace { 0 0 }_{ \text{ pair } 3 } \
				\underbrace { 1 1 }_{ \text{ pair } 2 } \
				\underbrace { 0 1 }_{ \text{ pair } 1 } \
				\underbrace { 1 0 }_{ \text{ pair } 0 } \
				\ . \label{eq:Charlie's Bit Vector c - Example 2}
			\end{align}
		}
		Charlie will count the $00$ and $01$ pairs in $\vmathbb{ 0 }_{ B }$ to find that they are $4$ and $2$, pretty close to the expected value $3$, and in all these pairs the least significant bit is the negation of his corresponding bit, as dictated by the original $\ket{ \Psi^{ + } }$ entanglement. Hence, Charlie will successfully verify Bob's command vector, and, according to the \textbf{Rule}$_{ 3, 5 }$ of the $3$ player EPRQDBA Protocol \ref{alg:The $3$ Player EPRQDBA Protocol}, he will change his decision to that of Bob, thereby reaching the desired agreement. The rationale behind this rule is that Charlie, having already received an inconsistent command vector from Alice, is suspicious of Alice. His suspicion is confirmed when he verifies Bob's command vector, considering that it high unlikely for Bob to have forge a consistent command vector. Thus, he decides to change his preliminary decision to that of Bob.
		\item	\emph{Alice sends to Bob and Charlie different orders with consistent command vectors}. Let us assume, without loss of generality, that Alice sends the order $0$ together with a consistent command vector $\vmathbb{ 0 }_{ B }$ to Bob and the order $1$ together with a consistent command vector $\mathds{ 1 }_{ C }$ to Charlie. Bob knows the positions of almost all $00$ and $01$ pairs in $\mathbf { a }$. If Alice had forged even a single $10$ or $11$ pair, claiming to be either $00$ or $01$, then Bob, when using the \textsc{CheckAlice} Algorithm \ref{alg:The $3$ Player CheckAlice Algorithm}, he would have immediately detected the inconsistency. Symmetrically, Charlie knows the positions of almost all $01$ and $11$ pairs in $\mathbf { a }$. According to the rules of the protocol, Bob will sent to Charlie the command vector $\vmathbb{ 0 }_{ B }$, and, simultaneously, Charlie will send to Bob the command vector $\mathds{ 1 }_{ C }$. Both will use the \textsc{CheckWCV} Algorithm \ref{alg:The $3$ Player CheckWCV Algorithm} to verify the consistency of the others' command vector. Then, according to \textbf{Rule}$_{ 3, 3 }$, both Bob and Charlie will abort, again fulfilling the requirements of DBA.\hfill $\triangleleft$
	\end{itemize}
\end{example}

Let us ponder the question of what will happen if there are $2$ traitors and only one loyal general. It might be that the $2$ traitors are the $2$ lieutenant generals, or the commanding general and one of the lieutenant generals. This case is actually very easy as explained in the next Corollary \ref{crl:The Case of $2$ Traitors}.


\begin{corollary} [The case of $2$ traitors] \label{crl:The Case of $2$ Traitors}
	The $3$ player EPRQDBA protocol achieves detectable Byzantine agreement even when there are $2$ traitors among the $3$ generals.
\end{corollary}
\begin{proof}
	The fundamental assumption (\textbf{DBA}$_{ 2 }$) of any detectable Byzantine agreement protocol, stipulates that all loyal generals either follow the same order or abort the protocol (recall Definition \ref{def:Detectable Byzantine Agreement}). When there is only one loyal general, no matter whether he is the commanding general or a lieutenant general, whatever he decides is perfectly fine, as there is no other loyal party that he must agree with.
\end{proof}

The next theorem is an obvious consequence of Propositions \ref{prp:Loyal Alice} and \ref{prp:Traitor Alice}, and Corollary \ref{crl:The Case of $2$ Traitors}.

\begin{theorem} [$3$ Player Detectable Byzantine Agreement] \label{thr:$3$ Player Detectable Byzantine Agreement}
	The $3$ player EPRQDBA protocol achieves detectable Byzantine agreement in any eventuality.
\end{theorem}

\section{The $n$ player EPRQDBA protocol} \label{sec:The $n$ Player EPRQDBA Protocol}

In this section we present the general form of the EPRQDBA protocol that can handle any number $n > 3$ generals who are spatially distributed. To be precise, we assume that Alice is the unique commanding general, who tries to coordinate her $n - 1$ lieutenant generals LT$_0$, \dots, LT$_{n - 2}$, LT being an abbreviation for ``lieutenant general.'' As in the $3$ player case, the $n$ player form of the EPRQDBA protocol can be conceptually divided into $3$ distinct phases, which we describe below.

\subsection{The $n$ player entanglement distribution phase} \label{subsec:The $n$ Player Entanglement Distribution Phase}

The first phase is the entanglement distribution phase that involves the generation and distribution of qubits and entangled EPR pairs. As we have mentioned in assumption ($\mathbf{A}_{ 1 }$), we assume the existence of a trusted quantum source that undertakes this role. It is a relatively easy task, in view of the capabilities of modern quantum apparatus. Hence, the quantum source will have no difficulty in producing

\begin{itemize}
	\item	$( n - 2 ) ( n - 1 ) m$ qubits in the $\ket{+}$ state by simply applying the Hadamard transform on $\ket{ 0 }$, and
	\item	$( n - 1 ) m$ EPR pairs in the $\ket{ \Psi^{ + } }$ state, which can be easily generated by modern quantum computers.
\end{itemize}

The parameter $m$ plays an important role and should be a sufficiently large positive integer. The forthcoming mathematical analysis of the protocol contains additional discussion regarding $m$. The source distributes the produced qubits according to the next scheme.

\begin{enumerate} [ left = 0.70 cm, labelsep = 0.50 cm ]
	\renewcommand\labelenumi{(\textbf{$n$D}$_\theenumi$)}
	\item	The $( n - 1 ) m$ EPR pairs are numbered from $0$ to $( n - 1 ) m - 1$.
	\item	The source sends a sequence
	$q_{ 0 }^{ A }$, $q_{ 1 }^{ A }$, \dots, $q_{ ( n - 1 ) m - 2 }^{ A }$, $q_{ ( n - 1 ) m - 1 }^{ A }$ of $( n - 1 ) m$ qubits to Alice
	and a sequence
	$q_{ 0 }^{ i }$, $q_{ 1 }^{ i }$, \dots, $q_{ ( n - 1 ) m - 2 }^{ i }$, $q_{ ( n - 1 ) m - 1 }^{ i }$ of $( n - 1 ) m$ qubits to lieutenant general LT$_i$, $0 \leq i \leq n - 2$.
	\item	Alice's sequence is constructed by inserting in position $k$ the first qubit of the EPR pair $k$, $0 \leq k \leq ( n - 1 ) m - 1$.
	\item	LT$_i$'s sequence, $0 \leq i \leq n - 2$, is constructed as follows. Position $k$, $0 \leq k \leq ( n - 1 ) m - 1$, contains the second qubit of the EPR pair $k$ if and only if $k \equiv i \bmod ( n - 1 )$; otherwise position $k$ contains a qubit in the $\ket{+}$ state.
\end{enumerate}

The end result is that the quantum registers of Alice and her $n - 1$ lieutenant generals LT$_0$, \dots, LT$_{n - 2}$, are populated as shown in Figure \ref{fig:Alice's and her LTs' Quantum Registers}. Qubits of the same EPR pair are shown in the same color. Green is used to indicate the EPR pairs shared between Alice and LT$_0$, which occupy those positions $k$ that satisfy the relation $k \equiv 0 \bmod ( n - 1 )$ in their quantum registers. Analogously, red is used for the EPR pairs shared between Alice and LT$_1$ occupying positions $k$ that satisfy the relation $k \equiv 1 \bmod ( n - 1 )$ in their quantum registers, and blue indicates EPR pairs shared between Alice and LT$_{ n - 2 }$ occupying positions $k$ that satisfy the relation $k \equiv n - 2 \bmod ( n - 1 )$. The gray qubits represent those in the $\ket{+}$ state that occupy the remaining positions in the quantum registers of the lieutenant generals.

\begin{tcolorbox}
	[
		grow to left by = 1.25 cm,
		grow to right by = 1.25 cm,
		colback = MagentaVeryLight!12,			
		enhanced jigsaw,						
		sharp corners,
		toprule = 1.0 pt,
		bottomrule = 1.0 pt,
		leftrule = 0.1 pt,
		rightrule = 0.1 pt,
		sharp corners,
		center title,
		fonttitle = \bfseries
	]
	\begin{figure}[H]
		\centering
		\begin{tikzpicture} [ scale = 0.30 ]
			\node
			[
			alice,
			scale = 1.75,
			anchor = center,
			label = { [ label distance = 0.00 cm ] west: \textbf{Alice} }
			]
			(Alice) { };
			\matrix
			[
			matrix of nodes, nodes in empty cells,
			column sep = 0.000 mm, right = 1.00 of Alice,
			nodes = { circle, minimum size = 10 mm, semithick, font = \footnotesize },
			]
			{
				\node [ shade, outer color = WordBlueVeryLight!50, inner color = white ] { }; &
				\node [ shade, outer color = orange!50, inner color = white ] { \dots }; &
				\node [ shade, outer color = RedPurple!50, inner color = white ] { }; &
				\node [ shade, outer color = GreenLighter2!50, inner color = white ] { }; &
				\node { \dots }; &
				\node [ shade, outer color = WordBlueVeryLight!50, inner color = white ] { }; &
				\node [ shade, outer color = orange!50, inner color = white ] { \dots }; &
				\node [ shade, outer color = RedPurple!50, inner color = white ] { }; &
				\node [ shade, outer color = GreenLighter2!50, inner color = white ] { }; &
				\node [ shade, outer color = WordBlueVeryLight!50, inner color = white ] { }; &
				\node [ shade, outer color = orange!50, inner color = white ] { \dots }; &
				\node [ shade, outer color = RedPurple!50, inner color = white ] { }; &
				\node [ shade, outer color = GreenLighter2!50, inner color = white ] { };
				\\
			}; %
			\node
			[
			businessman,
			scale = 1.75,
			anchor = center,
			below = 1.00 cm of Alice,
			label = { [ label distance = 0.00 cm ] west: \textbf{LT$_{ n - 2 }$} }
			]
			(Bob) { };
			\matrix
			[
			column sep = 0.000 mm, right = 1.00 of Bob,
			nodes = { circle, minimum size = 10 mm, semithick, font = \footnotesize },
			]
			{
				\node [ shade, outer color = WordBlueVeryLight!50, inner color = white ] { }; &
				\node [ shade, outer color = WordIceBlue, inner color = white ] { \dots }; &
				\node [ shade, outer color = WordIceBlue, inner color = white ] { }; &
				\node [ shade, outer color = WordIceBlue, inner color = white ] { }; &
				\node { \dots }; &
				\node [ shade, outer color = WordBlueVeryLight!50, inner color = white ] { }; &
				\node [ shade, outer color = WordIceBlue, inner color = white ] { \dots }; &
				\node [ shade, outer color = WordIceBlue, inner color = white ] { }; &
				\node [ shade, outer color = WordIceBlue, inner color = white ] { }; &
				\node [ shade, outer color = WordBlueVeryLight!50, inner color = white ] { }; &
				\node [ shade, outer color = WordIceBlue, inner color = white ] { \dots }; &
				\node [ shade, outer color = WordIceBlue, inner color = white ] { }; &
				\node [ shade, outer color = WordIceBlue, inner color = white ] { };
				\\
			};
			\node
			[
			anchor = center,
			below = 1.00 cm of Bob
			]
			(Dots) { \Large \dots };
			\matrix
			[
			column sep = 0.000 mm, right = 1.00 of Dots,
			nodes = { circle, minimum size = 10 mm, semithick, font = \footnotesize },
			]
			(BobMatrix)
			{
				\node { }; &
				\node { }; &
				\node { }; &
				\node { }; &
				\node { \Large \dots }; &
				\node { }; &
				\node { }; &
				\node { }; &
				\node { \Large \dots }; &
				\node { }; &
				\node { }; &
				\node { }; &
				\node { };
				\\
			};
			\node
			[
			criminal,
			scale = 1.75,
			anchor = center,
			below = 1.00 cm of Dots,
			label = { [ label distance = 0.00 cm ] west: \textbf{LT$_{ 1 }$} }
			]
			(Charlie) { };
			\matrix
			[
			column sep = 0.000 mm, right = 1.00 of Charlie,
			nodes = { circle, minimum size = 10 mm, semithick, font = \footnotesize },
			]
			{
				\node [ shade, outer color = WordIceBlue, inner color = white ] { }; &
				\node [ shade, outer color = WordIceBlue, inner color = white ] { \dots }; &
				\node [ shade, outer color = RedPurple!50, inner color = white ] { }; &
				\node [ shade, outer color = WordIceBlue, inner color = white ] { }; &
				\node { \dots }; &
				\node [ shade, outer color = WordIceBlue, inner color = white ] { }; &
				\node [ shade, outer color = WordIceBlue, inner color = white ] { \dots }; &
				\node [ shade, outer color = RedPurple!50, inner color = white ] { }; &
				\node [ shade, outer color = WordIceBlue, inner color = white ] { }; &
				\node [ shade, outer color = WordIceBlue, inner color = white ] { }; &
				\node [ shade, outer color = WordIceBlue, inner color = white ] { \dots }; &
				\node [ shade, outer color = RedPurple!50, inner color = white ] { }; &
				\node [ shade, outer color = WordIceBlue, inner color = white ] { };
				\\
			};
			\node
			[
			maninblack,
			scale = 1.75,
			anchor = center,
			below = 1.00 cm of Charlie,
			label = { [ label distance = 0.00 cm ] west: \textbf{LT$_{ 0 }$} }
			]
			(Dave) { };
			\matrix
			[
			column sep = 0.000 mm, right = 1.00 of Dave,
			nodes = { circle, minimum size = 10 mm, semithick, font = \footnotesize },
			]
			{
				\node [ shade, outer color = WordIceBlue, inner color = white ] { }; &
				\node [ shade, outer color = WordIceBlue, inner color = white ] { \dots }; &
				\node [ shade, outer color = WordIceBlue, inner color = white ] { }; &
				\node [ shade, outer color = GreenLighter2!50, inner color = white ] { }; &
				\node { \dots }; &
				\node [ shade, outer color = WordIceBlue, inner color = white ] { }; &
				\node [ shade, outer color = WordIceBlue, inner color = white ] { \dots }; &
				\node [ shade, outer color = WordIceBlue, inner color = white ] { }; &
				\node [ shade, outer color = GreenLighter2!50, inner color = white ] { }; &
				\node [ shade, outer color = WordIceBlue, inner color = white ] { }; &
				\node [ shade, outer color = WordIceBlue, inner color = white ] { \dots }; &
				\node [ shade, outer color = WordIceBlue, inner color = white ] { }; &
				\node [ shade, outer color = GreenLighter2!50, inner color = white ] { };
				\\
			};
		\end{tikzpicture}
		\caption{Qubits of the same EPR pair are shown in the same color. Green is used to indicate the EPR pairs shared between Alice and LT$_0$, which occupy those positions $k$ that satisfy the relation $k \equiv 0 \bmod ( n - 1 )$ in their quantum registers. Analogously, red is used for the EPR pairs shared between Alice and LT$_1$ occupying positions $k$ that satisfy the relation $k \equiv 1 \bmod ( n - 1 )$ in their quantum registers, and blue indicates EPR pairs shared between Alice and LT$_{ n - 2 }$ occupying positions $k$ that satisfy the relation $k \equiv n - 2 \bmod ( n - 1 )$. The gray qubits represent those in the $\ket{+}$ state that occupy the remaining positions in the quantum registers of the lieutenant generals.}
		\label{fig:Alice's and her LTs' Quantum Registers}
	\end{figure}
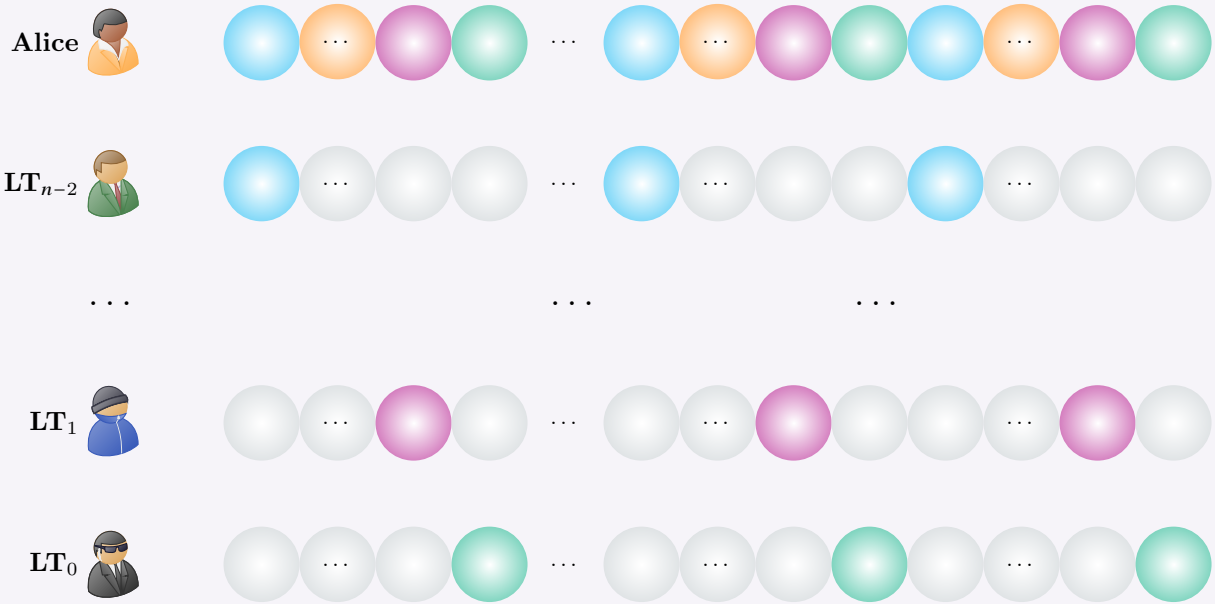
\end{tcolorbox}

\subsection{The $n$ player entanglement verification phase} \label{subsec:The $n$ Player Entanglement Verification Phase}

This phase is extremely important because the entire protocol is based on entanglement. If entanglement is not guaranteed, then agreement cannot not be guaranteed either. The~verification process can lead to two dramatically different outcomes. If entanglement verification is successfully established, then the EPRQDBA protocol is certain to achieve agreement. Failure of verification implies absence of the necessary entanglement. This~may be attributed either to noisy quantum channels or insidious sabotage by an active adversary. Whatever the true reason is, the only viable solution is to abort the current execution of the protocol, and initiate the whole procedure again from scratch, after taking some corrective~measures.

One should not underestimate the significance of this phase. Many previous papers have comprehensively and thoroughly studied this phase. The~EPRQDBA protocol subscribes to the sophisticated methods that have already been described in the literature, such as~\cite{Fitzi2001,Cabello2003,Neigovzen2008,Feng2019,Qu2023}. The interested reader may consult these works and the references therein for further analysis.

\subsection{The $n$ player agreement phase} \label{subsec:The $n$ Player Agreement Phase}

The EPRQDBA protocol achieves detectable agreement during its third and last phase, aptly named \textbf{agreement phase}. The agreement phase begins when Alice and her $n - 1$ lieutenant generals LT$_0$, \dots, LT$_{n - 2}$ measure their quantum registers. By recalling the distribution scheme of Subsection \ref{subsec:The $n$ Player Entanglement Distribution Phase}, we see that certain important correlations among Alice's register and the registers of her lieutenant generals have been established.

\begin{definition} \label{def:$n$ Player Measured Contents of the Registers} \
	Let the bit vectors $\mathbf { a }$ and $\mathbf { l }^{ i }$ denote the contents of Alice and LT$_i$'s, $0 \leq i \leq n - 2$, registers after the measurement, and assume that their explicit form is as given below:
	\begin{align}
		\mathbf { a }
		&=
		\underbrace { a_{ ( n - 1 ) m - 1 } \dots a_{ ( n - 1 ) m - ( n - 1 ) } }_{ ( n - 1 ) \text{-tuple} } \
		\dots \
		\underbrace { a_{ n - 2 } \dots a_{ 0 } }_{ ( n - 1 ) \text{-tuple} }
		\ , \label{eq:Alice's Bit Vector a - 3}
		\\
		\mathbf { l }^{ i }
		&=
		\underbrace { l_{ ( n - 1 ) m - 1 }^{ i } \dots l_{ ( n - 1 ) m - ( n - 1 ) }^{ i } }_{ ( n - 1 ) \text{-tuple} } \
		\dots \
		\underbrace { l_{ n - 2 }^{ i } \dots l_{ 0 }^{ i } }_{ ( n - 1 ) \text{-tuple} }
		\ . \label{eq:Lieutenant General's i Bit Vector - 1}
	\end{align}
	The $k^{ th }$ $( n - 1 )$-tuple of $\mathbf { a }$, $0 \leq k \leq m - 1$, is the $( n - 1 )$-tuple of bits $a_{ ( n - 1 ) k + ( n - 2 ) }$ \dots $a_{ ( n - 1 ) k }$, and is designated by $\mathbf { a }_{ k }$. Similarly, the $k^{ th }$ $( n - 1 )$-tuple of $\mathbf { l }^{ i }$ is designated by $\mathbf { l }_{ k }^{ i }$. Hence, $\mathbf { a }$ and $\mathbf { l }^{ i }$ can be written succinctly as:
	\begin{align}
		\mathbf { a }
		&=
		\mathbf { a }_{ m - 1 } \ \mathbf { a }_{ m - 2 } \ \dots \ \mathbf { a }_{ 1 } \ \mathbf { a }_{ 0 }
		\ , \label{eq:Alice's Bit Vector a - 4}
		\\
		\mathbf { l }^{ i }
		&=
		\mathbf { l }_{ m - 1 }^{ i } \ \mathbf { l }_{ m - 2 }^{ i } \ \dots \ \mathbf { l }_{ 1 }^{ i } \ \mathbf { l }_{ 0 }^{ i }
		\ . \label{eq:Lieutenant General's i Bit Vector - 1}
	\end{align}
	We also use the notion of the \emph{uncertain} $( n - 1 )$-tuple, denoted by $\mathbf { u } = \sqcup \dots \sqcup$, where $\sqcup$ is a new symbol, different from $0$ and $1$. In contrast, $( n - 1 )$-tuples consisting of bits $0$ and/or $1$ are called \emph{definite} tuples.
\end{definition}

According to the distribution scheme, each $( n - 1 )$-tuple in Alice's bit vector $\mathbf { a }$ shares one $\ket{ \Psi^{ + } } = \frac { \ket{ 0 }_{ A } \ket{ 1 }_{ i } + \ket{ 1 }_{ A } \ket{ 0 }_{ i } } { \sqrt{ 2 } }$ pair with every lieutenant general LT$_i$, $0 \leq i \leq n - 2$. This is captured by the next Lemma \ref{lem:$n$ Player Tuple Differentiation Property}. Its proof is obvious and is omitted.

\begin{lemma} [Tuple Differentiation Property] \label{lem:$n$ Player Tuple Differentiation Property}
	The next property, termed \emph{tuple differentiation} property characterizes the corresponding bits and tuples of the bit vectors $\mathbf { a }$ and $\mathbf { l }^{ i }$, $0 \leq i \leq n - 2$.
	\begin{align}
		{ l }_{ k }^{ i } &= \overline{ a_{ k } }
		\ , \text{ iff } k \equiv i \bmod ( n - 1 ) \ ,
		\ k = 0, \dots, ( n - 1 ) m - 1 \ ,
		\ 0 \leq i \leq n - 2
		\ .
		\label{eq:Alice - Lieutenant General's i Bit Relation}
	\end{align}
	Consequently, each $\mathbf { a }_{ k }$ tuple, where $0 \leq k \leq m - 1$, is different from the corresponding tuple of every $\mathbf { l }^{ i }$, $0 \leq i \leq n - 2$.
	\begin{align}
		\mathbf { a }_{ k } &\neq \mathbf { l }_{ k }^{ i }
		\ , \text{ for every } k, \ 0 \leq k \leq m - 1 \ .
		\label{eq:Alice - Lieutenant General's i Tuple Differentiation Property}
	\end{align}
\end{lemma}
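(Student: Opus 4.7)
The plan is straightforward: both claims of the lemma follow directly from two facts, namely the anti-correlation of the $\ket{\Psi^+}$ Bell state under computational-basis measurement and the explicit distribution rule (\textbf{D}$_4$) prescribing which of Alice's qubits is entangled with which lieutenant general's qubit. The paper itself classifies the proof as obvious, so no deep technical step is needed; the work consists simply in making the case analysis explicit.

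First I would recall that $\ket{\Psi^+} = \frac{\ket{0}_A\ket{1}_B + \ket{1}_A\ket{0}_B}{\sqrt{2}}$ has zero amplitude on both $\ket{00}$ and $\ket{11}$, so any computational-basis measurement of its two halves necessarily returns opposite bits. Combining this with (\textbf{D}$_3$) and (\textbf{D}$_4$): Alice's position $k$ always holds the first qubit of EPR pair $k$, whereas LT$_i$'s position $k$ holds the second qubit of that same pair precisely when $k \equiv i \bmod (n-1)$, and otherwise carries an independently prepared $\ket{+}$ qubit uncorrelated with Alice's half. In the first case the Bell anti-correlation yields exactly $l_k^i = \overline{a_k}$, which is the bit-level assertion of the lemma.

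For the tuple-level consequence, I would fix $0 \leq k \leq m - 1$ and examine the $n-1$ positions that constitute the block $\mathbf{a}_k$, namely $(n-1)k,\ (n-1)k + 1,\ \dots,\ (n-1)k + (n-2)$. Being $n-1$ consecutive integers, they form a complete residue system modulo $n-1$; hence for each $i$ with $0 \leq i \leq n - 2$, exactly one index in this block, namely $j = (n-1)k + i$, is congruent to $i$ modulo $n-1$. At position $j$ the previous paragraph forces $l_j^i = \overline{a_j}$, and therefore $\mathbf{a}_k$ and $\mathbf{l}_k^i$ must disagree in at least that coordinate, which is precisely the claimed inequality.

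I do not foresee any real obstacle; the only mildly delicate point is notational. The displayed inequality in the lemma uses the symbol $\mathbf{b}_k$, which has not been introduced in the $n$-player setting. The natural reading, consistent with the preceding sentence and with the $3$-player analogue, is that this is a typographic carryover from Section~\ref{sec:The $3$ Player EPRQDBA Protocol} that should be interpreted as $\mathbf{l}_k^i$ for an arbitrary lieutenant index $i$, and my argument establishes the statement under that reading for every such $i$.
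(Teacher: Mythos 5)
Your proposal is correct and matches the paper's (implicit) argument exactly: the paper omits the proof as obvious, and the intended justification is precisely the $\ket{\Psi^{+}}$ anti-correlation under computational-basis measurement combined with rule (\textbf{D}$_4$), plus the observation that each block of $n-1$ consecutive positions is a complete residue system modulo $n-1$, so every $\mathbf{l}^{i}_{k}$ disagrees with $\mathbf{a}_{k}$ in at least one coordinate. Your reading of $\mathbf{b}_{k}$ as a typographic carryover for $\mathbf{l}^{i}_{k}$ is also the right one; the only nuance worth flagging is that the ``only if'' direction of the displayed equivalence holds only statistically (a $\ket{+}$ qubit gives the complementary bit with probability $1/2$), but that does not affect the tuple-level conclusion, which rests solely on the ``if'' direction.
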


\begin{definition} [$n$ Player Command Vectors] \label{def:$n$ Player Command Vectors} \
	Alice sends to every lieutenant general LT$_i$, $0 \leq i \leq n - 2$, either the command $0$ or the command $1$. Besides her order, as a ``proof,'' she also sends an appropriate \emph{command vector}. The idea is that the command vector for every lieutenant general is always different, even when the command is the same. For the order $0$, the command vector sent to every lieutenant general LT$_i$, $0 \leq i \leq n - 2$, is $\vmathbb{ 0 }_{ i }$, whereas for the order $1$ the corresponding command vector is $\mathds{ 1 }_{ i }$. The explicit form of the command vectors is the following.
	\begin{align}
		\vmathbb{ 0 }_{ i }
		&=
		\mathbf { v }_{ m - 1 } \ \dots \ \mathbf { v }_{ 0 }
		\ ,
		\ \text{ where } \
		\mathbf { v }_{ k }
		=
		\left\{
		\begin{matrix*}[l]
			\mathbf { a }_{ k } & \text{ if } a_{ ( n - 1 ) k + i } = 0 \\
			\mathbf { u } = \sqcup \dots \sqcup & \text{ if } a_{ ( n - 1 ) k + i } \neq 0
		\end{matrix*}
		\right.
		\ , \ 0 \leq k \leq m - 1 \ .
		\label{eq:Lieutenant General's i Command Vector 0}
		\\
		\mathds{ 1 }_{ i }
		&=
		\mathbf { v }_{ m - 1 } \ \dots \ \mathbf { v }_{ 0 }
		\ ,
		\ \text{ where } \
		\mathbf { v }_{ k }
		=
		\left\{
		\begin{matrix*}[l]
			\mathbf { a }_{ k } & \text{ if } a_{ ( n - 1 ) k + i } = 1 \\
			\mathbf { u } = \sqcup \dots \sqcup & \text{ if } a_{ ( n - 1 ) k + i } \neq 1
		\end{matrix*}
		\right.
		\ , \ 0 \leq k \leq m - 1 \ .
		\label{eq:Lieutenant General's i Command Vector 1}
	\end{align}
	A command vector besides tuples comprised entirely of $0$ and $1$ bits, also contains an approximately equal number of tuples consisting of $\sqcup$ characters. When we want refer to a command vector, but without providing further details, we will designate it by
	\begin{align}
		\mathbf { v }
		=
		\mathbf { v }_{ m - 1 } \ \dots \ \mathbf { v }_{ 0 }
		=
		\underbrace { v_{ ( n - 1 ) m - 1 } \dots v_{ ( n - 1 ) m - ( n - 1 ) } }_{ ( n - 1 ) \text{-tuple} } \
		\dots \
		\underbrace { v_{ n - 2 } \dots v_{ 0 } }_{ ( n - 1 ) \text{-tuple} }
		\ . \label{eq:Generic Command Vector for Lieutenant General i}
	\end{align}
\end{definition}

Given a command vector or a bit vector, we gather in a set the positions of the $( n - 1 )$-tuples that contain a specific combination of bits.

\begin{definition} [Tuple Designation] \label{def:$n$ Player Tuple Designation} \
	Given a command vector $\mathbf { v }$ or a bit vector $\mathbf { l }^{ i }$, $0 \leq i \leq n - 2$, we define the set $\mathbb{ T }_{ j \rightarrow y }^{ i \rightarrow x } ( \mathbf { v } )$ and $\mathbb{ T }_{ j \rightarrow y }^{ i \rightarrow x } ( \mathbf { l }^{ i } )$, respectively, of the positions of the $( n - 1 )$-tuples that contain bit $x \in \{ 0, 1 \}$ in the $i^{ th }$ place and bit $y \in \{ 0, 1 \}$ in the $j^{ th }$ place, $0 \leq i \neq j \leq n - 2$.
	
	Likewise, the set of the positions of their $( n - 1 )$-tuples that contain bit $x \in \{ 0, 1 \}$ in the $i^{ th }$ place is denoted by $\mathbb{ T }^{ i \rightarrow x } ( \mathbf { v } )$ and $\mathbb{ T }^{ i \rightarrow x } ( \mathbf { l }^{ i } )$, respectively.
\end{definition}

In view of Definition \ref{def:$n$ Player Command Vectors}, it is trivial to see that the assertions of the next Lemma \ref{lem:$n$ Player Loyal Command Vector Properties} are valid, where the notation $| S |$ is employed to designate the cardinality, i.e., the number of elements, of an arbitrary set $S$.

\begin{lemma} [$n$ Player Loyal Command Vector Properties] \label{lem:$n$ Player Loyal Command Vector Properties} \
	If Alice is loyal, her command vectors $\vmathbb{ 0 }_{ i }$ and $\mathds{ 1 }_{ i }$, $0 \leq i \leq n - 2$, satisfy the following properties.
\end{lemma}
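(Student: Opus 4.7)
My plan is to derive all the approximate cardinality claims directly from a single probabilistic observation combined with a standard concentration bound, exactly mirroring the 3-player case. The key observation is that Alice obtains her bit vector $\mathbf{a}$ by measuring $(n-1)m$ qubits, each of which is the ``Alice end'' of an independent $\ket{\Psi^{+}}$ pair, in the computational basis. Consequently, $a_0, a_1, \ldots, a_{(n-1)m-1}$ are mutually independent and uniformly distributed in $\{0,1\}$.

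First I would unfold Definition \ref{def:$n$ Player Command Vectors}. For a fixed lieutenant index $i$ and command bit $c \in \{0,1\}$, the $k^{th}$ tuple of Alice's command vector (namely $\vmathbb{0}_i$ when $c = 0$ or $\mathds{1}_i$ when $c = 1$) equals the definite tuple $\mathbf{a}_k$ if and only if $a_{(n-1)k+i} = c$, and equals the uncertain tuple $\mathbf{u}$ otherwise. Since $a_{(n-1)k+i}$ is a uniform bit, each of the $m$ tuples is definite independently with probability $\tfrac12$, so $|\mathbb{T}^{i\to c}(\cdot)|$ is $\mathrm{Binomial}(m, \tfrac12)$, with expectation $m/2$.

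Next, for positions $i \neq j$ in $\{0,\ldots,n-2\}$ and target bits $c, y \in \{0,1\}$, membership of index $k$ in $\mathbb{T}_{j\to y}^{i\to c}(\cdot)$ is the joint event $\{a_{(n-1)k+i} = c\} \cap \{a_{(n-1)k+j} = y\}$, which has probability $\tfrac14$ since the two indices $(n-1)k+i$ and $(n-1)k+j$ are distinct and hence the two bits are independent. Therefore $|\mathbb{T}_{j\to y}^{i\to c}(\cdot)|$ is $\mathrm{Binomial}(m, \tfrac14)$ with expectation $m/4$. A Chernoff bound then yields, for any tolerance $\varepsilon > 0$,
\begin{equation*}
\Pr\!\Bigl[\,\bigl|\,|\mathbb{T}_{j\to y}^{i\to c}(\cdot)| - m/4\,\bigr| > \varepsilon m \,\Bigr] \leq 2\,e^{-c_{\varepsilon} m},
\end{equation*}
for an absolute constant $c_{\varepsilon} > 0$. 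A union bound over the $O(n^2)$ many parameter choices $(i,j,c,y)$ shows that, with probability at least $1 - O(n^2 e^{-c_{\varepsilon} m})$, \emph{all} of the ``$\approx m/4$'' relations hold simultaneously; selecting $m$ to grow suitably with $n$ and the desired failure probability (e.g.\ $m = \Omega(\log n)$) makes the aggregated error negligible.

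The mathematical content is light, so the only real obstacle is careful bookkeeping: one must confirm that because the $i^{th}$ coordinate of the uncertain tuple $\mathbf{u}$ is $\sqcup$ rather than a bit, uncertain tuples are automatically excluded from $\mathbb{T}^{i\to c}(\cdot)$ and from every $\mathbb{T}_{j\to y}^{i\to c}(\cdot)$, so no separate case analysis is required. The choice of $m$ entering the concentration estimate is the same parameter that will be pinned down in the security analysis for the $n$-player extensions of Propositions \ref{prp:Loyal Alice} and \ref{prp:Traitor Alice}.
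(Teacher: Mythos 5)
Your proposal is correct and follows the same (implicit) route as the paper: the paper gives no proof, asserting the lemma is ``trivial to see'' from Definition~\ref{def:$n$ Player Command Vectors}, and the content of that triviality is precisely your observation that Alice's measured bits are i.i.d.\ uniform, so each tuple of $\vmathbb{ 0 }_{ i }$ or $\mathds{ 1 }_{ i }$ is definite with probability $\tfrac12$ and each prescribed $(i,j)$ bit pattern occurs with probability $\tfrac14$. Your Chernoff-plus-union-bound step is a legitimate formalization of the ``$\approx$'' that the paper leaves informal, and it correctly identifies that $m$ must grow to make the concentration (and hence the later security claims) hold.
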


\begin{minipage}{0.485 \textwidth}
	\begin{itemize}
		\item	$| \mathbb{ T }^{ i \rightarrow 0 } ( \vmathbb{ 0 }_{ i } ) | \approx \frac { m } { 2 }$.
		\item	$| \mathbb{ T }_{ j \rightarrow 0 }^{ i \rightarrow 0 } ( \vmathbb{ 0 }_{ i } ) | \approx \frac { m } { 4 }$.
		\item	$| \mathbb{ T }_{ j \rightarrow 1 }^{ i \rightarrow 0 } ( \vmathbb{ 0 }_{ i } ) | \approx \frac { m } { 4 }$.
	\end{itemize}
\end{minipage}
\begin{minipage}{0.485 \textwidth}
	\begin{itemize}
		\item	$| \mathbb{ T }^{ i \rightarrow 1 } ( \mathds{ 1 }_{ i } ) | \approx \frac { m } { 2 }$.
		\item	$| \mathbb{ T }_{ j \rightarrow 0 }^{ i \rightarrow 1 } ( \mathds{ 1 }_{ i } ) | \approx \frac { m } { 4 }$.
		\item	$| \mathbb{ T }_{ j \rightarrow 1 }^{ i \rightarrow 1 } ( \mathds{ 1 }_{ i } ) | \approx \frac { m } { 4 }$.
	\end{itemize}
\end{minipage}

As expected the protocol evolves in rounds. The actions of the $n$ generals in each round are explained below. In the exposition of Protocol \ref{alg:The $n$ Player EPRQDBA Protocol}, we have made a few plausible assumptions regarding the behavior of traitors, which we state below.

\begin{enumerate} [ left = 0.50 cm, labelsep = 0.50 cm ]
	\renewcommand\labelenumi{(\textbf{T}$_\theenumi$)}
		\item	If Alice is loyal, then the traitors do not possess any inside information. This means that for a disloyal lieutenant general it will be probabilistically impossible to forge a consistent command vector. On this fact actually relies the soundness of the protocol when Alice is loyal.
		\item	The disloyal lieutenant general manifest their traitorous behavior at the end of ($\mathbf{Round}_{ 2 }$), either by falsely claiming that their decision is to abort, or by sending contradictory orders accompanied by consistent command vectors to different lieutenant generals. However, their behavior is consistent in the sense that they always send identical information to the same lieutenant general, i.e., at the end of ($\mathbf{Round}_{ 3 }$) they transmit exactly the same decision they sent at the end of ($\mathbf{Round}_{ 2 }$).
\end{enumerate}

\begin{tcolorbox}
	[
		enhanced,
		breakable,
		grow to left by = 0.00 cm,
		grow to right by = 0.00 cm,
		colback = WordIceBlue,			
		enhanced jigsaw,				
		sharp corners,
		toprule = 0.01 pt,
		bottomrule = 0.01 pt,
		leftrule = 0.1 pt,
		rightrule = 0.1 pt,
		sharp corners,
		center title,
		fonttitle = \bfseries
	]
	\begin{algorithm}[H]
		\SetAlgorithmName{Protocol}{ }{ }
		\setcounter{algocf}{1}
		\caption { \textsc { The $n$ Player EPRQDBA Protocol } }
		\label{alg:The $n$ Player EPRQDBA Protocol}
	\end{algorithm}
	\begin{enumerate} [ left = 0.90 cm, labelsep = 0.50 cm ]
		\renewcommand\labelenumi{(\textbf{Round}$_\theenumi$)}
		\item	\textbf{Send} $\triangleright$ Alice sends to all of her $n - 1$ lieutenant generals LT$_0$, \dots, LT$_{n - 2}$ her order $c_{ i }$ (which is either $0$ or $1$) and the appropriate command vector $\mathbf { v }_{ i }$ as proof.
		\item	\textbf{Receive} $\triangleright$ Every LT$_i$, $0 \leq i \leq n - 2$, receives Alice's order $c_{ i }$ and command vector $\mathbf { v }_{ i }$.

		Subsequently, each LT$_i$, calls the \textsc{CheckAlice} Algorithm \ref{alg:The $n$ Player CheckAlice Algorithm} to check $\mathbf { v }_{ i }$ against his bit vector for inconsistencies.
		\begin{itemize}
			\item[$\star$]	If no inconsistencies are found, LT$_i$'s initial decision is $d_{ i }^{ ( 2 ) } = c_{ i }$.
			\item[$\star$]	If LT$_i$ detects inconsistencies, his initial decision is $d_{ i }^{ ( 2 ) } = \bot$.
		\end{itemize}
		\textbf{Send} $\triangleright$ Upon completion of the verification procedure, LT$_i$ sends to all other lieutenant generals LT$_j$, $0 \leq i \neq j \leq n - 2$, his initial decision $d_{ i }^{ ( 2 ) }$ together with $\mathbf { v }_{ i }$ as proof.
		\item	\textbf{Receive} $\triangleright$ Every LT$_i$, $0 \leq i \leq n - 2$, receives from every other LT$_j$, $0 \leq i \neq j \leq n - 2$, LT$_j$'s initial decision $d_{ j }^{ ( 2 ) }$ together with $\mathbf { v }_{ j }$ as proof.

		Afterwards, each LT$_i$, compares his $d_{ i }^{ ( 2 ) }$ to all other $d_{ j }^{ ( 2 ) }$, $0 \leq i \neq j \leq n - 2$.
		\begin{itemize}
			\item[$\star$]	\colorbox{WordLightGreen}{\textbf{Rule}$_{ 3, 1 }$}: If all $d_{ j }^{ ( 2 ) }$ coincide with $d_{ i }^{ ( 2 ) }$, then LT$_i$'s intermediary decision is $d_{ i }^{ ( 3 ) } = d_{ i }^{ ( 2 ) }$.
			\item[$\star$]	If there is \emph{at least one} $d_{ j }^{ ( 2 ) }$ such that $d_{ i }^{ ( 2 ) } \neq d_{ j }^{ ( 2 ) }$, then the following cases are considered.
			\begin{itemize}
				\item[$\square$]	\colorbox{WordLightGreen}{\textbf{Rule}$_{ 3, 2 }$}: If $d_{ i }^{ ( 2 ) } = 0 \ ( 1 )$ and \emph{all} different decisions are $\bot$, then LT$_i$'s intermediary decision is $d_{ i }^{ ( 3 ) } = d_{ i }^{ ( 2 ) }$.
				\item[$\square$]	If $d_{ i }^{ ( 2 ) } = 0 \ (1)$ and there exist $d_{ j }^{ ( 2 ) } = 1 \ ( 0 )$, then LT$_i$ uses the \textsc{CheckLTwCV} Algorithm \ref{alg:The $n$ Player CheckLTwCV Algorithm} to check the corresponding $\mathbf { v }_{ j }$.
				\begin{itemize}
					\item[$\diamond$]	\colorbox{WordLightGreen}{\textbf{Rule}$_{ 3, 3 }$}: If there is \emph{at least one} consistent $\mathbf { v }_{ j }$, then LT$_i$'s intermediary decision is $d_{ i }^{ ( 3 ) } = \bot$.
					\item[$\diamond$]	\colorbox{WordLightGreen}{\textbf{Rule}$_{ 3, 4 }$}: Otherwise, LT$_i$'s intermediary decision is $d_{ i }^{ ( 3 ) } = d_{ i }^{ ( 2 ) }$.
				\end{itemize}
				\item[$\square$]	If $d_{ i }^{ ( 2 ) } = \bot$ and there exist different $d_{ j }^{ ( 2 ) }$, then LT$_i$ uses the \textsc{CheckLTwBV} Algorithm \ref{alg:The $n$ Player CheckLTwBV Algorithm} to check the corresponding $\mathbf { v }_{ j }$.
				\begin{itemize}
					\item[$\diamond$]	\colorbox{WordLightGreen}{\textbf{Rule}$_{ 3, 5 }$}: If all the consistent $\mathbf { v }_{ j }$ correspond to the \emph{same} $d_{ j }^{ ( 2 ) }$, then LT$_i$'s intermediary decision is $d_{ i }^{ ( 3 ) } = d_{ j }^{ ( 2 ) }$.
					\item[$\diamond$]	\colorbox{WordLightGreen}{\textbf{Rule}$_{ 3, 6 }$}: Otherwise, LT$_i$'s intermediary decision is $d_{ i }^{ ( 3 ) } = d_{ i }^{ ( 2 ) }$.
				\end{itemize}
			\end{itemize}
		\end{itemize}
		\textbf{Send} $\triangleright$ Upon completion of the above comparison, LT$_i$ sends to all other lieutenant generals LT$_j$, $0 \leq i \neq j \leq n - 2$, his intermediary decision $d_{ i }^{ ( 3 ) }$ with proof those $\mathbf { v }_{ k }, \mathbf { v }_{ l }$ that enabled him to arrive at his decision.

		\textbf{Note.} The number of transmitted $\mathbf { v }_{ j }$(s) can be $0$ (\textbf{Rule}$_{ 3, 1 }$, \textbf{Rule}$_{ 3, 2 }$, \textbf{Rule}$_{ 3, 4 }$), $1$ (\textbf{Rule}$_{ 3, 3 }$, \textbf{Rule}$_{ 3, 5 }$), or even $2$, in case there are two opposite orders, but with consistent command vectors (\textbf{Rule}$_{ 3, 6 }$). It should be clear that this time LT$_i$ transmits his own intermediary decision, together with command vectors sent to him from other lieutenant generals.
		\item	\textbf{Receive} $\triangleright$ Every LT$_i$, $0 \leq i \leq n - 2$, receives from every other LT$_j$, $0 \leq i \neq j \leq n - 2$, LT$_j$'s intermediary decision $d_{ j }^{ ( 3 ) }$ together with a list of $0, 1$ or $2$ command vectors $\mathbf { v }_{ k }, \mathbf { v }_{ l }$ as proof.

		Subsequently, each LT$_i$, compares his $d_{ i }^{ ( 3 ) }$ to all other $d_{ j }^{ ( 3 ) }$, $0 \leq i \neq j \leq n - 2$.
		\begin{itemize}
			\item[$\star$]	\colorbox{WordLightGreen}{\textbf{Rule}$_{ 4, 1 }$}: If $d_{ i }^{ ( 3 ) } = \bot$ because LT$_i$ has been sent two consistent but contradictory command vectors, then LT$_i$'s final decision is $d_{ i }^{ ( 4 ) } = d_{ i }^{ ( 3 ) }$.
			\item[$\star$]	\colorbox{WordLightGreen}{\textbf{Rule}$_{ 4, 2 }$}: If $d_{ i }^{ ( 3 ) } = d_{ j }^{ ( 3 ) }$, for every $j, 0 \leq i \neq j \leq n - 2$, then $d_{ i }^{ ( 4 ) } = d_{ i }^{ ( 3 ) }$.
			\item[$\star$]	If If $d_{ i }^{ ( 3 ) } = 0 \ ( 1 )$ and there is \emph{at least one} $d_{ j }^{ ( 3 ) }$ such that $d_{ i }^{ ( 3 ) } \neq d_{ j }^{ ( 3 ) }$, then the following cases are considered.
			\begin{itemize}
				\item[$\square$]	If there exist $d_{ j }^{ ( 3 ) } = \bot \neq d_{ j }^{ ( 2 ) }$ (meaning that LT$_j$ has revised his intermediary decision from $0$ or $1$ to $\bot$), then, for every such $j$, LT$_i$ uses the \textsc{CheckLTwCV} Algorithm \ref{alg:The $n$ Player CheckLTwCV Algorithm} to check the command vectors supplied as proof.
				\begin{itemize}
					\item[$\diamond$]	\colorbox{WordLightGreen}{\textbf{Rule}$_{ 4, 3 }$}: If, for at least one $j$, they are consistent, then $d_{ i }^{ ( 4 ) } = \bot$.
					\item[$\diamond$]	\colorbox{WordLightGreen}{\textbf{Rule}$_{ 4, 4 }$}: Otherwise, LT$_i$'s final decision is $d_{ i }^{ ( 4 ) } = d_{ i }^{ ( 3 ) }$.
				\end{itemize}
				\item[$\square$]	If LT$_i$'s decision is $0$ ($1$) and \emph{at least one} different decision is $1$ ($0$), then LT$_i$ uses the \textsc{CheckLTwCV} Algorithm \ref{alg:The $n$ Player CheckLTwCV Algorithm} to check the command vectors of all lieutenant generals whose intermediary decision is $1$ ($0$) for inconsistencies.
				\begin{itemize}
					\item[$\diamond$]	\colorbox{WordLightGreen}{\textbf{Rule}$_{ 4, 5 }$}: If there exists \emph{at least one} command vector with no inconsistencies, then LT$_i$ aborts and terminates the protocol.
					\item[$\diamond$]	\colorbox{WordLightGreen}{\textbf{Rule}$_{ 4, 6 }$}: Otherwise, LT$_i$ clings to his intermediary decision.
				\end{itemize}
			\end{itemize}
		\end{itemize}
	\end{enumerate}
\end{tcolorbox}

The following algorithms apply Lemmata \ref{lem:$n$ Player Tuple Differentiation Property} and \ref{lem:$n$ Player Loyal Command Vector Properties} to determine whether a command vector is consistent, in which case they return TRUE. If they determine inconsistencies, they terminate and return FALSE. The names of Algorithms \ref{alg:The $n$ Player CheckLTwCV Algorithm} and \ref{alg:The $n$ Player CheckLTwBV Algorithm} are mnemonic abbreviations for ``Check LT with Command Vector'' and ``Check LT with Bit Vector.'' To clear up things, we explain their difference.

Without loss of generality, let us suppose that LT$_i$ received the order $0$ and a consistent command vector $\mathbf { v }_{ A }$ from Alice, while LT$_j$ claims that he received the order $1$. According to the EPRQDBA protocol, LT$_j$ must convince LT$_i$ by sending the command vector $\mathbf { v }$ he claims he received from Alice. LT$_i$ checks $\mathbf { v }$ for inconsistencies against his own $\mathbf { v }_{ A }$ by invoking the Algorithm \ref{alg:The $n$ Player CheckLTwCV Algorithm}. Now, suppose that LT$_i$ received an inconsistent command vector $\mathbf { v }_{ A }$ from Alice and his initial decision is to abort, while LT$_j$ claims that he received the order $1$. LT$_j$ must convince LT$_i$ by sending the command vector $\mathbf { v }$ he claims he received from Alice. In this situation, LT$_i$ does not have a consistent command vector $\mathbf { v }_{ A }$ that he could use, so LT$_i$ must check $\mathbf { v }$ for inconsistencies against his own bit vector $\mathbf { b }$ by invoking the Algorithm \ref{alg:The $n$ Player CheckLTwBV Algorithm}.

In Algorithms \ref{alg:The $n$ Player CheckAlice Algorithm}, \ref{alg:The $n$ Player CheckLTwCV Algorithm}, and \ref{alg:The $n$ Player CheckLTwBV Algorithm}, we employ the following notation.
\begin{itemize}
	\item	$i$, $0 \leq i \leq n - 2$, is the index of the lieutenant general who does the consistency checking.
	\item	$j$, $0 \leq i \neq j \leq n - 2$, is the index of the lieutenant general whose command vector is being checked for consistency.
	\item	$c$, where $c = 0$ or $c = 1$, is the command being checked for consistency.
	\item	$\mathbf { v }_{ A } = v_{ ( n - 1 ) m - 1 } \dots v_{ ( n - 1 ) m - ( n - 1 ) } \ \dots \ v_{ n - 2 } \dots v_{ 0 }$, is the command vector sent by Alice.
	\item	$\mathbf { v }$ is the command vector sent by the lieutenant general LT$_j$.
	\item	$\mathbf { l } = l_{ ( n - 1 ) m - 1 } \dots l_{ ( n - 1 ) m - ( n - 1 ) } \ \dots \ l_{ n - 2 } \dots l_{ 0 }$ is the bit vector of the lieutenant general LT$_i$ who does the consistency checking.
	\item	$\bigtriangleup$ is the symmetric difference of two sets, i.e., $ S \bigtriangleup S' = \left( S \setminus S' \right) \bigcup \left( S' \setminus S \right)$, for given sets $S$ and $S'$.
\end{itemize}

\begin{algorithm}[H]
	\setcounter{algocf}{3}
	\SetKw{Break}{break}
	\caption{ \textsc{ CheckAlice } ( i, c, $\mathbf { v }_{ A }$, $\mathbf { l }$ ) }
	\label{alg:The $n$ Player CheckAlice Algorithm}
	\If
	{
		$! \ ( \ | \ \mathbb{ T }^{ i \rightarrow c } ( \mathbf { v }_{ A } ) \ | \approx \frac { m } { 2 } \ )$
	}
	{
		\Return FALSE
	}
	\For {$k = 0$ \KwTo $m - 1$}
	{
		\If { $( v_{ ( n - 1 ) k + i } == l_{ ( n - 1 ) k + i } )$ }
		{
			\Return FALSE
		}
	}
	\Return TRUE
\end{algorithm}

\begin{algorithm}[H]
	\SetKw{Break}{break}
	\caption{ \textsc{CheckLTwCV} (i, j, c, $\mathbf { v }$, $\mathbf { v }_{ A }$) }
	\label{alg:The $n$ Player CheckLTwCV Algorithm}
	\If
	{
		$! \ ( \ | \ \mathbb{ T }_{ j \rightarrow c }^{ i \rightarrow c } ( \mathbf { v } ) \ | \approx \frac { m } { 4 } \ )$
	}
	{
		\Return FALSE
	}
	\If
	{
		$! \ ( \ | \ \mathbb{ T }_{ j \rightarrow c }^{ i \rightarrow \overline{ c } } ( \mathbf { v } ) \ | \approx \frac { m } { 4 } \ )$
	}
	{
		\Return FALSE
	}
	\If
	{
		$! \ ( \ | \
		\mathbb{ T }_{ j \rightarrow c }^{ i \rightarrow \overline{ c } } ( \mathbf { v }_{ A } )
		\bigtriangleup
		\mathbb{ T }_{ j \rightarrow c }^{ i \rightarrow \overline{ c } } ( \mathbf { v } )
		\ | \approx 0 \ )$
	}
	{
		\Return FALSE
	}
	\Return TRUE
\end{algorithm}

\begin{algorithm}[H]
	\SetKw{Break}{break}
	\caption{ \textsc{CheckLTwBV} (i, j, c, $\mathbf { v }$, $\mathbf { l }$) }
	\label{alg:The $n$ Player CheckLTwBV Algorithm}
	\If
	{
		$! \ ( \ | \ \mathbb{ T }_{ j \rightarrow c }^{ i \rightarrow c } ( \mathbf { v } ) \ | \approx \frac { m } { 4 } \ )$
	}
	{
		\Return FALSE
	}
	\If
	{
		$! \ ( \ | \ \mathbb{ T }_{ j \rightarrow c }^{ i \rightarrow \overline{ c } } ( \mathbf { v } ) \ | \approx \frac { m } { 4 } \ )$
	}
	{
		\Return FALSE
	}
	\For {$k = 0$ \KwTo $m - 1$}
	{
		\If { $( v_{ ( n - 1 ) k + i } == l_{ ( n - 1 ) k + i } )$ }
		{
			\Return FALSE
		}
	}
	\Return TRUE
\end{algorithm}

We now proceed to formally prove that the EPRQDBA protocol enables all the loyal generals to reach agreement.

\begin{proposition} [$n$ Player Loyal Alice] \label{prp:Loyal Alice General Case}
	If Alice is loyal, then the $n$ player EPRQDBA protocol will enable all loyal lieutenant generals to follow Alice's order.
\end{proposition}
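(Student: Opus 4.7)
The plan is to show round by round that every loyal lieutenant general finishes Round 3 with final decision equal to Alice's order $c$. First I want to establish that every loyal LT$_i$ passes \textsc{CheckAlice} on the command vector $\mathbf{v}_A$ received from Alice. By Lemma \ref{lem:$n$ Player Loyal Command Vector Properties} the count $|\mathbb{T}^{i \to c}(\mathbf{v}_A)| \approx \frac{m}{2}$, so the size test in Algorithm \ref{alg:The $n$ Player CheckAlice Algorithm} succeeds. For the bitwise loop, note that in every tuple $\mathbf{v}_k$ of $\mathbf{v}_A$ the $i$-th component is either the symbol $\sqcup$ (when $\mathbf{v}_k$ is uncertain) or the true bit $a_{(n-1)k+i}$ (when it is definite), while Lemma \ref{lem:$n$ Player Tuple Differentiation Property} forces $l^i_{(n-1)k+i} = \overline{a_{(n-1)k+i}}$; either way $v_{(n-1)k+i} \neq l^i_{(n-1)k+i}$, so the loop never returns FALSE. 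Hence every loyal LT$_i$ enters Round 3 with preliminary decision $c$, which means $G_c^{(2)}$ contains all loyal lieutenants while $G_{\overline{c}}^{(2)}$ and $G_\bot^{(2)}$ are populated solely by traitors.

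Round 3 is then analyzed by exhausting the sub-cases of \textbf{Rule}$_3$ from the vantage point of an arbitrary loyal LT$_i$. If $G_{\overline{c}}^{(2)} = \emptyset$, then \textbf{Rule}$_{3,1}$ (when $G_\bot^{(2)}$ is also empty) or \textbf{Rule}$_{3,2}$ directly fixes LT$_i$'s final decision to $c$. The nontrivial case is when $G_{\overline{c}}^{(2)} \neq \emptyset$: for every traitor LT$_j$ in that set, LT$_i$ runs Algorithm \ref{alg:The $n$ Player CheckLTwCV Algorithm} on the purported vector $\mathbf{v}$ that LT$_j$ claims to have received from Alice. The critical claim to prove is that \textsc{CheckLTwCV}$(i,j,\overline{c},\mathbf{v},\mathbf{v}_A)$ returns FALSE for every $\mathbf{v}$ that any coalition of traitors can construct; once this is granted, \textbf{Rule}$_{3,4}$ preserves LT$_i$'s preliminary decision $c$ as the final one.

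The forgery-detection claim is the main obstacle, and the strategy I would use for it is information-theoretic. An authentic command vector of type $\mathds{1}_j$ or $\vmathbb{0}_j$ keyed to the opposite order $\overline{c}$ would have its $\approx \frac{m}{2}$ definite tuples at precisely the indices $k$ with $a_{(n-1)k+j} = \overline{c}$, and each such tuple would coincide with Alice's true tuple $\mathbf{a}_k$. Traitor LT$_j$, however, knows $\mathbf{a}_k$ (through the legitimate command vector he himself received) only at the complementary set of indices, namely those with $a_{(n-1)k+j} = c$; the remaining positions of his register were filled by Alice with $\ket{+}$ qubits, so their measurements are independent of $\mathbf{a}$. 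Consequently any forged $\mathbf{v}$ must either leave most of the required tuples uncertain—failing the size tests of \textsc{CheckLTwCV}—or complete them with guesses, in which case the third test, contrasting $\mathbb{T}_{j\to \overline{c}}^{i\to c}(\mathbf{v}_A)$ with $\mathbb{T}_{j\to \overline{c}}^{i\to c}(\mathbf{v})$, exposes the statistical mismatch between Alice's true bits and the traitor's guesses. Collusion does not salvage the forgery because the subset of indices at which the coalition collectively knows the relevant bits of $\mathbf{a}$ is bounded away from the $\approx \frac{m}{2}$ needed. A Chernoff-style tail bound then shows that for sufficiently large $m$ every forged $\mathbf{v}$ is rejected with probability exponentially close to $1$.

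Putting everything together, Round 2 places every loyal LT$_i$ in $G_c^{(2)}$, and the Round 3 case analysis, together with the forgery-detection claim, keeps the final decision equal to $c$ regardless of which rule is triggered. Therefore every loyal lieutenant follows Alice's order, which is exactly the statement of the proposition.
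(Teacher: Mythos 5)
Your proof follows the same route as the paper's: Round 2 places every loyal lieutenant in $G_{c}^{(2)}$ because a loyal Alice's command vector passes \textsc{CheckAlice} (you verify both the cardinality test and the bitwise loop via the tuple differentiation property, which the paper merely asserts), and Round 3 reduces to showing that a traitor LT$_j$ in $G_{\overline{c}}^{(2)}$ --- who learned the tuples $\mathbf{a}_k$ only at positions with $a_{(n-1)k+j}=c$, his remaining register entries being independent $\ket{+}$ measurements --- cannot supply the $\approx \frac{m}{4}$ unknown bits $a_{(n-1)k+i}$ without being caught by \textsc{CheckLTwCV}, so that \textbf{Rule}$_{3,4}$ preserves the preliminary decision. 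The only differences are cosmetic: the paper quantifies the forgery failure by the explicit probability $1/\binom{m/2}{m/4}$ of guessing every bit correctly where you invoke a Chernoff-style bound and additionally observe that collusion does not help, and both arguments leave equally implicit how a ``statistically large'' symmetric difference is reconciled with the literal $\approx \frac{m}{4}$ acceptance threshold written into the third test of the algorithm.
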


\begin{proposition} [$n$ Player Traitor Alice] \label{prp:Traitor Alice General Case}
	If Alice is a traitor, then the $n$ player EPRQDBA protocol will enable all loyal lieutenant generals to reach agreement, in the sense of all following the same order or all aborting.
\end{proposition}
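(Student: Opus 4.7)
The plan is to mirror the $3$-player argument of Proposition \ref{prp:Traitor Alice} and lift it to the $n$-player setting by carrying out a case analysis on the loyal lieutenants' preliminary decisions after round ($\mathbf{R}_2$). Let $L_0^*$, $L_1^*$, $L_\bot^*$ denote the partition of the loyal lieutenants by those preliminary decisions, which is objective even though the sets $G_0^{(2)}, G_1^{(2)}, G_\bot^{(2)}$ observed in round ($\mathbf{R}_3$) by each LT$_i$ may differ across loyal observers, since traitorous LTs can send divergent claims to different recipients. The aim is to show that, no matter how the traitors act, the final decisions of all loyal lieutenants under Rule$_{3,1}$--Rule$_{3,8}$ coincide. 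Throughout I would invoke Lemma \ref{lem:$n$ Player Tuple Differentiation Property} to pin down what an Alice-issued CV must look like at any loyal LT's end, and Lemma \ref{lem:$n$ Player Loyal Command Vector Properties} for the tuple-cardinality statistics that the verification algorithms test against.

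I would handle four subcases. First, if exactly one of $L_0^*, L_1^*, L_\bot^*$ is non-empty, every loyal LT sees all loyal peers agree, and the only residual noise is a traitor's forged CV; the soundness argument for \textsc{CheckLTwCV} and \textsc{CheckLTwBV} (discussed below) guarantees that the forgery is rejected and the loyal LT preserves its preliminary decision via Rule$_{3,4}$ or Rule$_{3,6}$. Second, if both $L_0^*$ and $L_1^*$ are non-empty (with $L_\bot^*$ possibly empty), each loyal LT in $L_c^*$ receives from every loyal peer in $L_{\bar{c}}^*$ a genuine Alice-issued $\bar{c}$-CV; the statistics of that CV against the recipient's own $\mathbf{v}_A$ invoke Rule$_{3,3}$, so all loyal members of $L_0^* \cup L_1^*$ abort uniformly. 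Third, if $L_\bot^*$ coexists with exactly one of $L_0^*$ or $L_1^*$, the genuine $c$-CV relayed by any loyal peer in $L_c^*$ passes \textsc{CheckLTwBV} at every loyal $\bot$-LT via Rule$_{3,5}$, converting them all to the common order $c$. Fourth, when all three subsets are non-empty, I would combine the previous two arguments: loyal members of $L_0^*$ and $L_1^*$ abort by Rule$_{3,3}$, while loyal members of $L_\bot^*$ find a consistent CV in both $G_0^{(2)}$ and $G_1^{(2)}$ (namely the ones Alice actually issued to loyal members of $L_0^*$ and $L_1^*$), and hence by Rule$_{3,8}$ also abort, yielding a uniform abort.

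The main obstacle, and the step that supports every branch above, is the quantitative soundness statement: a traitorous lieutenant has no information about the entanglement-induced bits $a_{(n-1)k+i}$ of a loyal LT$_i$'s register, so any fabricated CV passes the $\approx m/4$ cardinality conditions, the position-by-position bit-difference check, and the symmetric-difference condition of \textsc{CheckLTwCV} only with probability $2^{-\Theta(m)}$. A union bound over the $O(n^2)$ possible forgery attempts keeps the overall failure probability exponentially small, and choosing $m$ sufficiently large upgrades the deterministic case analysis to hold with overwhelming probability. This is the same probabilistic core as in Proposition \ref{prp:Traitor Alice}, and I anticipate reusing the corresponding Chernoff/Hoeffding-style lemma from the Appendix after replacing pairs by $(n-1)$-tuples; once that lemma is in hand, the case analysis above becomes essentially deterministic and the proposition follows.
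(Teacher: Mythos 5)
Your case decomposition on the loyal lieutenants' preliminary decisions matches the paper's proof, and your cases where both $L_{0}^{*}$ and $L_{1}^{*}$ are non-empty (uniform abort via \textbf{Rule}$_{3,3}$ and \textbf{Rule}$_{3,8}$) and where $L_{\bot}^{*}$ coexists with exactly one order (uniform conversion via \textbf{Rule}$_{3,5}$) are handled essentially as in the Appendix. However, there is a genuine gap in your first case and in the ``main obstacle'' paragraph that supports it. You argue that a dissenting lieutenant's command vector must be a \emph{forgery} that fails \textsc{CheckLTwCV}/\textsc{CheckLTwBV} except with probability $2^{-\Theta(m)}$, because ``a traitorous lieutenant has no information about the entanglement-induced bits.'' That premise is the core of the \emph{Loyal Alice} proposition (Proposition \ref{prp:Loyal Alice General Case}), and it fails precisely in the setting of this proposition: here Alice is the traitor, she knows her entire measured bit vector $\mathbf{a}$, and she can therefore hand a colluding lieutenant a perfectly valid command vector for the \emph{opposite} order (e.g.\ a genuine $\mathds{1}_{j}$ even though LT$_j$ publicly received $\vmathbb{0}_{j}$). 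No counting or Chernoff-style argument can exclude this; there is also no such concentration lemma in the Appendix to reuse --- the paper only uses the inline $\binom{m/2}{m/4}^{-1}$ estimate, and only in the loyal-Alice proofs.

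Consequently, your conclusion for the first case --- that every loyal lieutenant ``preserves its preliminary decision via \textbf{Rule}$_{3,4}$ or \textbf{Rule}$_{3,6}$'' --- describes the wrong execution branch. The paper's proof instead splits on whether a valid conflicting command vector \emph{is} presented: if so, all loyal lieutenants abort together (\textbf{Rule}$_{3,3}$), or all switch to the presented order together (\textbf{Rule}$_{3,5}$ or \textbf{Rule}$_{3,7}$); if not, they all retain their preliminary decision (\textbf{Rule}$_{3,4}$, \textbf{Rule}$_{3,6}$, or \textbf{Rule}$_{3,8}$). Agreement holds in every branch, but because the branches lead to different common decisions, you cannot collapse them by claiming the conflicting vector is always rejected. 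To repair the proof you should drop the soundness argument entirely for this proposition and, as the paper does, verify uniformity of the loyal lieutenants' behaviour in both the ``valid conflicting vector exists'' and ``no valid conflicting vector exists'' branches of each case.
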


The next Theorem \ref{thr:$n$ Player Detectable Byzantine Agreement} is an immediate consequence of Propositions \ref{prp:Loyal Alice General Case} and \ref{prp:Traitor Alice General Case}.

\begin{theorem} [$n$ Player Detectable Byzantine Agreement] \label{thr:$n$ Player Detectable Byzantine Agreement}
	The $n$ player EPRQDBA protocol achieves detectable Byzantine agreement.
\end{theorem}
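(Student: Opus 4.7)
The plan is to derive the theorem directly from Propositions \ref{prp:Loyal Alice General Case} and \ref{prp:Traitor Alice General Case} by verifying, one by one, the three clauses of Definition \ref{def:Detectable Byzantine Agreement}. Because the two propositions form an exhaustive dichotomy on whether the commanding general is loyal or a traitor, no further analysis of Protocol \ref{alg:The $n$ Player EPRQDBA Protocol} itself is needed; the argument is essentially bookkeeping, and I would present it as a short case split on Alice's loyalty.

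First I would handle the Validity clause: if Alice is loyal, Proposition \ref{prp:Loyal Alice General Case} asserts that every loyal lieutenant general LT$_i$ follows Alice's order, which is strictly stronger than the disjunction ``follow the order or abort'' required by Validity. Next I would verify the first clause of Definition \ref{def:Detectable Byzantine Agreement}: if \emph{all} generals are loyal, we are still in the hypothesis of Proposition \ref{prp:Loyal Alice General Case}, so every lieutenant follows Alice's order, which is exactly (classical) Byzantine Agreement in this special case.

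For the Consistency clause I would split on Alice's status. If Alice is loyal, Proposition \ref{prp:Loyal Alice General Case} forces every loyal lieutenant to follow Alice's single order, and in particular all loyal lieutenants decide the same value, so Consistency is satisfied. If Alice is a traitor, Proposition \ref{prp:Traitor Alice General Case} yields that all loyal lieutenants land in exactly one of the three final sets $G_{0}^{(3)}$, $G_{1}^{(3)}$, or $G_{\bot}^{(3)}$, i.e.\ they either all follow the same order or they all abort, which is precisely Consistency.

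The ``hard part'' here is not the deduction itself but a matter of aligning quantifiers correctly: I must read Proposition \ref{prp:Traitor Alice General Case} as guaranteeing a \emph{single common} outcome for the loyal lieutenants, rather than pairwise agreement only, because otherwise a distribution of loyal lieutenants across more than one of $G_{0}^{(3)}$, $G_{1}^{(3)}$, $G_{\bot}^{(3)}$ would violate Consistency. The phrasing ``all following the same order or all aborting'' in the proposition is exactly this uniform-outcome reading, so once it is invoked in that form the theorem follows immediately.
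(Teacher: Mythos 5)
Your proposal is correct and matches the paper's approach: the paper simply states that the theorem is an immediate consequence of Propositions \ref{prp:Loyal Alice General Case} and \ref{prp:Traitor Alice General Case}, and your case split on Alice's loyalty, checking the three clauses of Definition \ref{def:Detectable Byzantine Agreement}, is exactly the bookkeeping the paper leaves implicit. Your remark about reading Proposition \ref{prp:Traitor Alice General Case} as guaranteeing a single common outcome (not merely pairwise agreement) is a fair and worthwhile clarification, but it does not change the route.
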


In closing we may ask what will happen if there are $n - 1$ traitors and only one loyal general. It might be that the $n - 1$ traitors are the $n - 1$ lieutenant generals, or Alice and $n - 2$ of her lieutenant generals. Again, this case turns to be very easy. The crucial assumption (\textbf{DBA}$_{ 2 }$) of any detectable Byzantine agreement protocol, stipulates that all loyal generals either follow the same order or abort the protocol (recall Definition \ref{def:Detectable Byzantine Agreement}). When there is only one loyal general, maybe Alice herself, or one of the lieutenant generals, any decision taken would be fine, as there is no other loyal party that to agree with.

\section{Discussion and conclusions} \label{sec:Discussion and Conclusions}

This paper introduced the quantum protocol EPRQDBA that achieves Detectable Byzantine Agreement for an arbitrary number $n$ of players.

The EPRQDBA protocol has a number of advantages to offer.

\begin{itemize}
	\item	It is completely general because it can handle any number $n$ of players.
	\item	It takes only a constant number of rounds, namely $3$, to complete, no matter how many players there are.
	\item	It requires only EPR (specifically $\ket{ \Psi^{ + } }$) pairs, irrespective of the number $n$ of players. This is probably the most important quality of the protocol. Without a doubt, EPR pairs are the easiest to produce among all maximally entangled states. In comparison, even $\ket{ GHZ_{ n } }$ states, would not be able to scale well as $n$ increases, not to mention other more difficult to create entangled states. 
\end{itemize}

All the above qualities advocate for its scalability and its potential applicability in practice. To drive further this point, we present the numerical characteristics of the EPRQDBA protocol in the following Table \ref{tbl:Numerical Characteristics of the EPRQDBA Protocol}.

Parameter $m$ is independent of the number of players $n$, which is another plus for the scalability of the protocol. Obviously, a proper value for must be selected in order to ensure that the probability that a traitor can forge a command vector is negligible. The reason why $m$ does not scale with $n$ is that the protocol always uses EPR pairs, that is, bipartite entanglement. The rules of the protocol dictate that every consistency check takes place between two command vectors, or between a command vector and a bit vector. Therefore, in every case the comparison involves just $2$ strings of bits, irrespective of the number of generals. Moreover, this comparison, even in the most general case, involves just $2$ bits, say $i$ and $j$, in every $n - 1$ tuple. Thus, the situation from a probabilistic point of view is always identical with the $3$ player case. In a final analysis, the probability that a traitor deceives a loyal general, is the probability of picking the one correct configuration out of many. The total number of configurations is equal to the number of ways to place $\approx \frac { m } { 4 }$ identical objects (either $0$ or $1$, depending on the order) into $\approx \frac { m } { 2 }$ distinguishable boxes (the uncertain tuples). The probability that a traitor places \emph{all} the $\approx \frac { m } { 4 }$ bits correctly in the $\approx \frac { m } { 2 }$ uncertain tuples is
\begin{align}
	P( \text{ traitor cheats } )
	\approx
	\frac { 1 }
	{ \binom { \ m / 2 \ } { \ m / 4 \ } }
	\ , \label{eq:Traitor Cheats}
\end{align}
which tends to zero as $m$ increases.

\begin{table}[H]
	\renewcommand{\arraystretch}{1.50}
	\caption{This table summarizes the numerical characteristics of the EPRQDBA protocol. The abbreviations ``bpm'' and ``tb'' stand for ``bits per message'' and ``total bits,'' respectively.}
	\label{tbl:Numerical Characteristics of the EPRQDBA Protocol}
	\centering
	\begin{tabular}
		{
			>{\centering\arraybackslash} m{3.00 cm} !{\vrule width 0.5 pt}
			>{\centering\arraybackslash} m{3.00 cm} 
		}
		\Xhline{4\arrayrulewidth}
		\multicolumn{2}{c}{Qubits}
		\\
		\Xhline{\arrayrulewidth}
		\# of $\ket{ \Psi^{ + } }$ pairs
		&
		\# of $\ket{ + }$ qubits
		\\
		\Xhline{3\arrayrulewidth}
		$( n - 1 ) m$
		&
		$( n - 2 ) ( n - 1 ) m$
		\\
		\Xhline{\arrayrulewidth}
	\end{tabular}
	\begin{tabular}
		{
			>{\centering\arraybackslash} m{1.00 cm} !{\vrule width 0.5 pt}
			>{\centering\arraybackslash} m{3.00 cm} !{\vrule width 0.5 pt}
			>{\centering\arraybackslash} m{3.50 cm} !{\vrule width 0.5 pt}
			>{\centering\arraybackslash} m{4.25 cm}
		}
		\Xhline{4\arrayrulewidth}
		\multicolumn{4}{c}{Bits \& Messages}
		\\
		\Xhline{\arrayrulewidth}
		Round
		&
		\# of messages
		&
		\# bits per message
		&
		Total \# of bits
		\\
		\Xhline{3\arrayrulewidth}
		$1$
		&
		$n - 1$
		&
		$( n - 1 ) m$
		&
		$( n - 1 )^{ 2 } m$
		\\
		\Xhline{\arrayrulewidth}
		$2$
		&
		$( n - 2 ) ( n - 1 )$
		&
		$( n - 1 ) m$
		&
		$( n - 2 ) ( n - 1 )^{ 2 } m$
		\\
		\Xhline{\arrayrulewidth}
		$3$
		&
		$( n - 2 ) ( n - 1 )$
		&
		$0 \leq \text{\# bpm} \leq 2 ( n - 1 ) m$
		&
		$0 \leq \text{\# tb} \leq 2 ( n - 2 ) ( n - 1 )^{ 2 } m$
		\\
		\Xhline{\arrayrulewidth}
		$4$
		&
		$0$
		&
		$0$
		&
		$0$
		\\
		\Xhline{1\arrayrulewidth}
	\end{tabular}
	\begin{tabular}
		{
			>{\centering\arraybackslash} m{1.00 cm} !{\vrule width 0.5 pt}
			>{\centering\arraybackslash} m{3.00 cm} !{\vrule width 0.5 pt}
			>{\centering\arraybackslash} m{3.50 cm} !{\vrule width 0.5 pt}
			>{\centering\arraybackslash} m{4.25 cm}
		}
		\Xhline{4\arrayrulewidth}
		\multicolumn{4}{c}{$m$ \& Probability}
		\\
		\Xhline{\arrayrulewidth}
		$m$
		&
		$\frac { m } { 4 }$
		&
		$\frac { m } { 2 }$
		&
		$P( \text{ traitor cheats } )$
		\\
		\Xhline{3\arrayrulewidth}
		$4$
		&
		\xinteval { 4 // 4 }
		&
		\xinteval { 4 // 2 }
		&
		\xintdeffloatfunc	Combinations ( m ) := ( m / 2 )! // ( ( m / 4 )! ( m / 4 )! );
		\xintdeffloatfunc	Probability ( m ) := 1 / Combinations ( m );
		\xintfloateval{ Probability ( 4 ) }
		\\
		\Xhline{\arrayrulewidth}
		$8$
		&
		\xinteval { 8 // 4 }
		&
		\xinteval { 8 // 2 }
		&
		\xintDigits*:=3;
		\xintdeffloatfunc	Combinations ( m ) := ( m / 2 )! // ( ( m / 4 )! ( m / 4 )! );
		\xintdeffloatfunc	Probability ( m ) := 1 / Combinations ( m );
		\xintfloateval{ Probability ( 8 ) }
		\\
		\Xhline{\arrayrulewidth}
		$16$
		&
		\xinteval { 16 // 4 }
		&
		\xinteval { 16 // 2 }
		&
		\xintDigits*:=3;
		\xintdeffloatfunc	Combinations ( m ) := ( m / 2 )! // ( ( m / 4 )! ( m / 4 )! );
		\xintdeffloatfunc	Probability ( m ) := 1 / Combinations ( m );
		$\xintfloateval{ Probability ( 16 ) }$
		\\
		\Xhline{\arrayrulewidth}
		$32$
		&
		\xinteval { 32 // 4 }
		&
		\xinteval { 32 // 2 }
		&
		\xintDigits*:=4;
		\xintdeffloatfunc	Combinations ( m ) := ( m / 2 )! // ( ( m / 4 )! ( m / 4 )! );
		\xintdeffloatfunc	Probability ( m ) := 1 / Combinations ( m );
		\def\xintfloatexprPrintOne[#1]#2{\xintTeXFromSci{\xintFloat[#1]{#2}}}
		$\xintfloateval{ Probability ( 32 ) }$
		\\
		\Xhline{\arrayrulewidth}
		$64$
		&
		\xinteval { 64 // 4 }
		&
		\xinteval { 64 // 2 }
		&
		\xintDigits*:=4;
		\xintdeffloatfunc	Combinations ( m ) := ( m / 2 )! // ( ( m / 4 )! ( m / 4 )! );
		\xintdeffloatfunc	Probability ( m ) := 1 / Combinations ( m );
		\def\xintfloatexprPrintOne[#1]#2{\xintTeXFromSci{\xintFloat[#1]{#2}}}
		$\xintfloateval{ Probability ( 64 ) }$
		\\
		\Xhline{4\arrayrulewidth}
	\end{tabular}
	\renewcommand{\arraystretch}{1.0}
\end{table}
\appendix \label{Appendix A}

\appendixpage

\section{Proofs of the Main Results} \label{appsec:Appendix - Main Results Proofs}

In this Appendix we formally prove the main results of this paper.

\begin{proposition} [Loyal Alice] \label{prp:Loyal Alice Appendix}
	If Alice is loyal, the $3$ player EPRQDBA protocol will enable the loyal lieutenant general to agree with Alice. Specifically, if Alice and Bob are loyal and Charlie is a traitor, Bob will follow Alice's order. Symmetrically, if Alice and Charlie are loyal and Bob is a traitor, Charlie will follow Alice's order.
\end{proposition}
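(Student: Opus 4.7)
The plan is to reduce by symmetry to the case where Alice and Bob are loyal and Charlie is the traitor, and (again by symmetry of the protocol in the bit value) to assume Alice's order is $0$, so that she dispatches the command vector $\vmathbb{0}_B$ to Bob and $\vmathbb{0}_C$ to Charlie. The first step I would carry out is the easy round $2$ verification: using the Pair Differentiation Property (Lemma \ref{lem:Pair Differentiation Property}) together with the Loyal Command Vector Properties (Lemma \ref{lem:Loyal Command Vector Properties}), every test inside \textsc{CheckAlice}$(1, 0, \vmathbb{0}_B, \mathbf{b})$ succeeds deterministically, so Bob's preliminary decision at the end of round $2$ is $0$.

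Next I would carry out a case analysis on what Charlie transmits in round $2$, as these are the only inputs available to Bob in round $3$. If Charlie reports $0$, \textbf{Rule}$_{3,1}$ locks Bob onto $0$; if Charlie reports $\bot$, \textbf{Rule}$_{3,2}$ also locks Bob onto $0$. The only nontrivial subcase is when Charlie reports $1$ together with some alleged command vector $\mathbf{v}$. Here Bob runs \textsc{CheckWCV}$(1, 0, 1, \mathbf{v}, \vmathbb{0}_B)$, and what must be shown is that this call returns FALSE except with probability negligible in $m$, so that \textbf{Rule}$_{3,4}$ fires and Bob keeps $0$.

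The core of the argument is therefore a probabilistic bound on Charlie's ability to forge a vector $\mathbf{v}$ passing all three checks of \textsc{CheckWCV}. The first two checks constrain the gross statistics of $\mathbf{v}$ and can in principle be met by a suitably chosen forgery. The decisive obstruction is the third check, which demands
\[
\bigl|\,\mathbb{P}_{0,1}(\vmathbb{0}_B)\,\triangle\,\mathbb{P}_{0,1}(\mathbf{v})\bigr|\approx \tfrac{m}{4}.
\]
I would argue that Charlie's view is statistically independent of the bits $a_{2k+1}$ that determine $\mathbb{P}_{0,1}(\vmathbb{0}_B)$: his qubits in odd positions are in the $\ket{+}$ state and thus carry no information about Alice's measurements, while his entangled partners sit only in the even-numbered positions. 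Consequently, from Charlie's perspective the set $\mathbb{P}_{0,1}(\vmathbb{0}_B)\subseteq\{0,\dots,m-1\}$ is distributed as a uniformly random subset of the $\approx m/2$ indices where $a_{2k+1}=0$. A standard counting argument then shows that for any fixed $\mathbf{v}$ Charlie produces, the probability that the symmetric-difference cardinality falls within the tolerance window of $m/4$ is bounded by $\binom{m/2}{m/4}^{-1}$ up to polynomial factors, which is exponentially small in $m$.

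The main obstacle I expect is making the independence claim airtight: one has to verify that conditioning on everything Charlie observes (his own measurement outcomes, the structure of $\vmathbb{0}_C$, and any transcript from the verification phase) does not leak information about the bits $a_{2k+1}$ beyond what is already captured by the marginals. Once this is established, the symmetric case where Charlie is loyal and Bob is the traitor follows mutatis mutandis by swapping the roles of odd and even positions and interchanging $\vmathbb{0}_B\leftrightarrow\vmathbb{0}_C$, which completes the proposition.
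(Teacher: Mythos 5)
Your overall strategy coincides with the paper's: the same symmetry reductions, the observation that \textsc{CheckAlice} passes for a loyal Bob, the reduction to the single dangerous subcase in which Charlie reports $1$ with a forged vector (the paper dispatches your \textbf{Rule}$_{3,1}$/\textbf{Rule}$_{3,2}$ subcases in one sentence), and a counting bound of the form $\binom{m/2}{m/4}^{-1}$ on Charlie's forgery. Your explicit worry about conditioning on Charlie's full view is a reasonable refinement that the paper simply asserts away.

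There is, however, a concrete error in your central probabilistic step. First, a small slip: conditional on Charlie's view, $\mathbb{P}_{0,1}(\vmathbb{0}_B)$ is a uniformly random subset of the $\approx m/2$ positions $k$ with $a_{2k}=1$ (the uncertain pairs of $\vmathbb{0}_C$, which Charlie \emph{does} know), not of the positions with $a_{2k+1}=0$, a set Charlie cannot even enumerate. More seriously, for any fixed forgery $\mathbf{v}$ with $|\mathbb{P}_{0,1}(\mathbf{v})|\approx m/4$ supported on those positions, the expected overlap with the true set is $\approx m/8$, so $|\mathbb{P}_{0,1}(\vmathbb{0}_B)\,\triangle\,\mathbb{P}_{0,1}(\mathbf{v})|$ \emph{concentrates at} $m/4$: the probability that it ``falls within the tolerance window of $m/4$'' is close to one, not exponentially small. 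The quantity $\binom{m/2}{m/4}^{-1}$ bounds the probability of an \emph{exact} reproduction, i.e., of the symmetric difference being zero, and that is precisely how the paper argues --- it bounds the probability that Charlie places all $\approx m/4$ zeros correctly and concludes that the symmetric difference will be significantly greater than zero, treating \emph{that} as the detectable inconsistency. As written, your bound is attached to the wrong event and so establishes nothing (indeed, read against the literal $\approx m/4$ condition displayed in \textsc{CheckWCV}, it would suggest the forgery typically \emph{passes} the third test); to make the counting argument do the intended work you must recast the forger's success event as ``symmetric difference near zero,'' as the paper's prose does.
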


\begin{proof} [Proof of Proposition \ref{prp:Loyal Alice Appendix}] \

	We assume that Alice is loyal and, without loss of generality, that Alice conveys the command $0$ to both Bob and Charlie, along with the appropriate command vectors as proof. These command vectors satisfy the properties asserted in Lemma \ref{lem:Loyal Command Vector Properties}.

	Let us also assume that Bob is loyal. Consequently, Bob will use the Algorithm \ref{alg:The $3$ Player CheckAlice Algorithm} to verify the validity of the received command vector $\vmathbb{ 0 }_{ B }$.

	According to the rules of the $3$ player EPRQDBA Protocol \ref{alg:The $3$ Player EPRQDBA Protocol}, the only way for Bob to change his initial decision and abort is if Charlie sends to Bob a valid command vector $\mathds{ 1 }_{ C }$. In such a case Charlie would be a traitor that has in reality received the command vector $\vmathbb{ 0 }_{ C }$ during ($\mathbf{Round}_{ 1 }$). Can Charlie construct a convincing $\mathds{ 1 }_{ C }$?

	Bob, having received the command vector $\vmathbb{ 0 }_{ B }$, knows the positions of all the pairs in $\mathbf { a }$ of the form $a_{ 2 k + 1 } a_{2 k}$, such that $a_{ 2 k + 1 } = 0$ and $a_{2 k} = 1$, $0 \leq k \leq m - 1$.
	\begin{align}
		\mathbb{ P }_{ 0, 1 } ( \vmathbb{ 0 }_{ B } )
		=
		\{ p_{ 1 }, p_{ 2 }, \dots, p_{ t } \}
		\ , \text{ where } t \approx \frac { m } { 4 }
		\ . \tag{ P\ref{prp:Loyal Alice Appendix}.i }
	\end{align}
	Having $\vmathbb{ 0 }_{ C }$, Charlie knows that the least significant bit of the $\approx \frac { m } { 2 }$ uncertain pairs in $\vmathbb{ 0 }_{ C }$ is $1$ and the most significant bit is $0$ or $1$, with equal probability $0.5$. However, Charlie cannot know with certainty if the most significant bit of a specific uncertain pair is $0$ or $1$. Therefore, when constructing $\mathds{ 1 }_{ C }$, Charlie can make two detectable mistakes.
	\begin{enumerate}
		\renewcommand\labelenumi{(\textbf{M}$_\theenumi$)}
		\item	Place a $0$ in a wrong pair that is not actually contained in $\mathbb{ P }_{ 0, 1 } ( \vmathbb{ 0 }_{ B } )$.
		\item	Place a $1$ in a wrong pair that appears in $\mathbb{ P }_{ 0, 1 } ( \vmathbb{ 0 }_{ B } )$.
	\end{enumerate}
	The situation from a probabilistic point of view resembles the probability of picking the one correct configuration out of many. The total number of configurations is equal to the number of ways to place $\approx \frac { m } { 4 }$ identical objects ($0$) into $\approx \frac { m } { 2 }$ distinguishable boxes. The probability that Charlie places \emph{all} the $\approx \frac { m } { 4 }$ bits $0$ correctly in the $\approx \frac { m } { 2 }$ uncertain pairs is
	\begin{align}
		P( \text{ Charlie places all $0$ correctly } )
		\approx
		\frac { 1 }
		{ \binom { \ m / 2 \ } { \ m / 4 \ } }
		\ , \tag{ P\ref{prp:Loyal Alice Appendix}.ii }
	\end{align}
	which tends to zero as $m$ increases. Put it another way, the cardinality of the symmetric difference $\mathbb{ P }_{ 0, 1 } ( \vmathbb{ 0 }_{ B } ) \bigtriangleup \mathbb{ P }_{ 0, 1 } ( \mathds{ 1 }_{ C } )$ will be significantly greater than $0$ from a statistical point of view.

	Hence, when Bob uses the Algorithm \ref{alg:The $3$ Player CheckWCV Algorithm} during ($\mathbf{Round}_{ 3 }$) to verify the validity of the command vector $\mathds{ 1 }_{ C }$ sent by Charlie, he will detect inconsistencies and stick to his preliminary decision. Ergo, Bob follows Alice's order.

	The case where Alice conveys the command $1$ is virtually identical.

	Finally, the situation where Alice and Charlie are loyal and Bob is a traitor, is completely symmetrical.
\end{proof}

\begin{proposition} [Traitor Alice] \label{prp:Traitor Alice Appendix}
	If Bob and Charlie are loyal and Alice is a traitor, the $3$ player EPRQDBA protocol will enable Bob and Charlie to reach agreement, in the sense that they will either follow the same order or abort.
\end{proposition}

\begin{proof}  [Proof of Proposition \ref{prp:Traitor Alice Appendix}] \

	Let us now consider the case where the commanding general Alice is a traitor, but both Bob and Charlie are loyal. We distinguish the following cases.

	\begin{itemize}
		\item	Alice sends a consistent command vector to Bob (Charlie) and an inconsistent command vector to Charlie (Bob). In this case, Bob (Charlie) will stick to his initial decision, but Charlie (Bob) will use the \textsc{CheckWBV} Algorithm \ref{alg:The $3$ Player CheckWBV Algorithm} to verify Bob's (Charlie's) command vector. Then, according to \textbf{Rule}$_{ 3, 5 }$ of the $3$ player EPRQDBA Protocol \ref{alg:The $3$ Player EPRQDBA Protocol}, Charlie (Bob) will change his decision to that of Bob (Charlie).
		\item	Alice sends to Bob and Charlie different orders with consistent command vectors. Let us assume, without loss of generality, that Alice sends the order $0$ together with a consistent command vector $\vmathbb{ 0 }_{ B }$ to Bob and the order $1$ together with a consistent command vector $\mathds{ 1 }_{ C }$ to Charlie.

		Bob knows the positions of almost all $00$ and $01$ pairs in $\mathbf { a }$. If Alice had forged even a single $10$ or $11$ pair, claiming to be either $00$ or $01$, then Bob, when using the \textsc{CheckAlice} Algorithm \ref{alg:The $3$ Player CheckAlice Algorithm}, he would have immediately detected the inconsistency. Therefore, Bob knows the sets
		\begin{align}
			\mathbb{ P }_{ 0, 0 } ( \vmathbb{ 0 }_{ B } )
			&=
			\{ p_{ 1 }, p_{ 2 }, \dots, p_{ t_{ 1 } } \} \ , \text{ where } t_{ 1 } \approx \frac { m } { 4 }
			\ , \ \text{ and } \tag{ P\ref{prp:Traitor Alice Appendix}.i }
			\\
			\mathbb{ P }_{ 0, 1 } ( \vmathbb{ 0 }_{ B } )
			&=
			\{ p'_{ 1 }, p'_{ 2 }, \dots, p'_{ t_{ 2 } } \} \ , \text{ where } t_{ 2 } \approx \frac { m } { 4 }
			\ , \tag{ P\ref{prp:Traitor Alice Appendix}.ii }
		\end{align}
		of the positions of the $00$ and $01$ pairs in $\mathbf { a }$.

		Symmetrically, Charlie knows the positions of almost all $01$ and $11$ pairs in $\mathbf { a }$:
		\begin{align}
			\mathbb{ P }_{ 0, 1 } ( \mathds{ 1 }_{ C } )
			&=
			\{ q_{ 1 }, q_{ 2 }, \dots, q_{ t_{ 3 } } \} \ , \text{ where } t_{ 3 } \approx \frac { m } { 4 }
			\ , \ \text{ and } \tag{ P\ref{prp:Traitor Alice Appendix}.iii }
			\\
			\mathbb{ P }_{ 1, 1 } ( \mathds{ 1 }_{ C } )
			&=
			\{ q'_{ 1 }, q'_{ 2 }, \dots, q'_{ t_{ 4 } } \} \ , \text{ where } t_{ 4 } \approx \frac { m } { 4 }
			\ . \tag{ P\ref{prp:Traitor Alice Appendix}.iv }
		\end{align}
		According to the rules of the protocol, Bob will sent to Charlie the command vector $\vmathbb{ 0 }_{ B }$, and, simultaneously, Charlie will send to Bob the command vector $\mathds{ 1 }_{ C }$. Both will use the \textsc{CheckWCV} Algorithm \ref{alg:The $3$ Player CheckWCV Algorithm} to verify the consistency of the others' command vector. Then, according to \textbf{Rule}$_{ 3, 3 }$, both Bob and Charlie will abort.
		\item	Alice sends to both Bob and Charlie inconsistent command vectors. Then, according to \textbf{Rule}$_{ 3, 1 }$ of the protocol, both Bob and Charlie agree to abort.
	\end{itemize}
	Ergo, both Bob and Charlie will either follow the same order or abort.
\end{proof}

\begin{proposition} [$n$ Player Loyal Alice] \label{prp:Loyal Alice General Case Appendix}
	If Alice is loyal, then the $n$ player EPRQDBA protocol will enable all loyal lieutenant generals to follow Alice's order.
\end{proposition}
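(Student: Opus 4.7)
The plan is to adapt the $3$ player argument of Proposition~\ref{prp:Loyal Alice Appendix}, with extra care to handle multiple colluding traitor lieutenants and the richer set of conflicting Round~$2$ announcements. Without loss of generality I would assume Alice broadcasts order $0$ together with the command vectors $\vmathbb{ 0 }_{ i }$, $0 \le i \le n-2$, which are consistent by Lemma~\ref{lem:$n$ Player Loyal Command Vector Properties}. The first step is to verify that every loyal LT$_{i}$ enters $(\mathbf{Round}_{ 3 })$ inside $G_{ 0 }^{ ( 2 ) }$: the tuple differentiation property of Lemma~\ref{lem:$n$ Player Tuple Differentiation Property} forces $v_{ ( n - 1 ) k + i } \neq l_{ ( n - 1 ) k + i }^{ i }$ at every position where Alice reveals a definite tuple, so the inner loop of Algorithm~\ref{alg:The $n$ Player CheckAlice Algorithm} never returns FALSE, and the cardinality test $| \mathbb{ T }^{ i \rightarrow 0 } ( \vmathbb{ 0 }_{ i } ) | \approx m / 2$ is satisfied by Lemma~\ref{lem:$n$ Player Loyal Command Vector Properties}. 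Thus each loyal LT$_{i}$ sets his preliminary decision to $0$.

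Next I would walk through the $(\mathbf{Round}_{ 3 })$ rules and show that none of them can pull LT$_{i}$ away from $0$. Rule$_{ 3, 1 }$ is trivial when no other lieutenant dissents. Rule$_{ 3, 2 }$ covers the case where some traitors announce $\bot$ but none announce $1$, so $G_{ 1 }^{ ( 2 ) } = \emptyset$ and LT$_{i}$ sticks with $0$. The only dangerous branch is when some traitor LT$_{j}$ places himself in $G_{ 1 }^{ ( 2 ) }$ with a forged command vector $\mathds{ 1 }_{ j }$: Rule$_{ 3, 3 }$ would force LT$_{i}$ to abort only if \textsc{CheckLTwCV}$( i, j, 1, \mathds{ 1 }_{ j }, \vmathbb{ 0 }_{ i } )$ returned TRUE on at least one such vector, whereas Rule$_{ 3, 4 }$ cements the decision $0$ otherwise. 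Consequently, the whole proposition reduces to proving that, except with negligible probability, no traitor-produced vector can pass \textsc{CheckLTwCV} against LT$_{i}$'s own $\vmathbb{ 0 }_{ i }$.

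This forgery bound is the step I expect to be the main obstacle, because it must hold under the worst case in which all $t$ traitors collude and freely pool every $\vmathbb{ 0 }_{ j' }$ and $\mathbf{ l }^{ j' }$ they possess. My plan is to identify the positions $k$ whose $(n-1)$-tuple $\mathbf{ a }_{ k }$ is uncertain in \emph{every} traitor's command vector, namely those positions with $a_{ ( n - 1 ) k + j' } = 1$ for every traitor $j'$; these form an $\approx m / 2^{ t }$ sized subset. At each such position Lemma~\ref{lem:$n$ Player Tuple Differentiation Property} shows that the bit $a_{ ( n - 1 ) k + i }$ is determined exclusively by LT$_{i}$'s private measurement, so the colluders literally have no information about it and must guess blindly. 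Replicating the counting argument of Proposition~\ref{prp:Loyal Alice Appendix}, the probability that such a blind guess yields the symmetric-difference profile required by Algorithm~\ref{alg:The $n$ Player CheckLTwCV Algorithm} is bounded by $1 / \binom{ m / 2^{ t } }{ m / 2^{ t + 1 } }$ up to lower-order factors, which is negligible for a suitable choice of $m$. A union bound over the at most $n - 2$ candidates in $G_{ 1 }^{ ( 2 ) }$ leaves the overall failure probability negligible; hence every forged $\mathds{ 1 }_{ j }$ is rejected, Rule$_{ 3, 4 }$ applies, and LT$_{i}$ finalises $0$. The symmetric argument for Alice's order $1$ completes the proof.
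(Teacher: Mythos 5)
Your proof is correct and follows the same skeleton as the paper's: show every loyal LT$_i$ passes \textsc{CheckAlice} and lands in $G_{0}^{(2)}$, reduce $(\mathbf{Round}_{3})$ to the question of whether a traitor in $G_{1}^{(2)}$ can forge a $\mathds{1}_{j}$ that passes \textsc{CheckLTwCV}, and kill the forgery with a balls-in-boxes count over the uncertain tuples. Where you genuinely diverge is in the adversary model for that last step. The paper's proof considers a \emph{single} traitor LT$_j$ working only from his own $\vmathbb{0}_{j}$, so the hidden coordinates number $\approx m/2$ and the forgery probability is $1/\binom{m/2}{m/4}$, independent of the number of traitors. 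You instead let all $t$ traitors pool their command vectors and bit vectors, correctly identify the positions hidden from the entire coalition as those with $a_{(n-1)k+j'}=1$ for every traitor $j'$ (an $\approx m/2^{t}$-sized set), and add a union bound over the candidates in $G_{1}^{(2)}$ (strictly, one should also union over the loyal $i$). This is the more faithful Byzantine threat model and is a real strengthening over the paper. The price is visible in your bound $1/\binom{m/2^{t}}{m/2^{t+1}}$: to keep it negligible, $m$ must grow roughly like $2^{t}$, which is in tension with the paper's later claim (Section \ref{sec:Discussion and Conclusions}) that $m$ can be chosen independently of $n$. That tension is a property of the protocol, not a flaw in your argument, but it is worth stating explicitly rather than hiding behind ``for a suitable choice of $m$''; the paper sidesteps it only by ignoring collusion.
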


\begin{proof} [Proof of Proposition \ref{prp:Loyal Alice General Case Appendix}] \

	We assume that Alice is loyal and, without loss of generality, we suppose that Alice conveys the command $0$ to all her lieutenant generals, along with the appropriate command vectors as proof. These command vectors satisfy the properties asserted in Lemma \ref{lem:$n$ Player Loyal Command Vector Properties}.

	Let us consider an arbitrary loyal lieutenant general LT$_i$, $0 \leq i \leq n - 2$. LT$_i$ will use the Algorithm \ref{alg:The $n$ Player CheckAlice Algorithm} to verify the validity of the received command vector $\vmathbb{ 0 }_{ i }$. Hence, LT$_i$ will have accepted order $0$ at the end of ($\mathbf{Round}_{ 2 }$).

	According to the rules of the $n$ player EPRQDBA Protocol \ref{alg:The $n$ Player EPRQDBA Protocol}, the only way for LT$_i$ to change his initial decision and abort is if another LT$_j$, $0 \leq i \neq j \leq n - 2$, sends to LT$_i$ his decision $1$ accompanied by a valid command vector $\mathds{ 1 }_{ j }$. Such an LT$_j$ would, of course, be a traitor that has in reality received the command vector $\vmathbb{ 0 }_{ j }$ during ($\mathbf{Round}_{ 1 }$). Can such an LT$_j$ construct a convincing $\mathds{ 1 }_{ j }$?

	LT$_i$, having received the command vector $\vmathbb{ 0 }_{ i }$, knows the positions of all the $( n - 1 )$-tuples in $\mathbf { a }$ that contain $0$ in the $i^{th}$ position and $1$ in the $j^{th}$ position.
	\begin{align}
		\mathbb{ T }_{ j \rightarrow 1 }^{ i \rightarrow 0 } ( \vmathbb{ 0 }_{ i } )
		=
		\{ p_{ 1 }, p_{ 2 }, \dots, p_{ t } \}
		\ , \text{ where } t \approx \frac { m } { 4 }
		\ . \tag{ P\ref{prp:Loyal Alice General Case Appendix}.i }
	\end{align}
	Knowing $\vmathbb{ 0 }_{ j }$, LT$_j$ knows the $( n - 1 )$-tuples of $\mathbf { a }$ that contain $0$ in their $j^{th}$ place, and the uncertain $( n - 1 )$-tuples. Both are $\approx \frac { m } { 2 }$ in number. LT$_j$ also knows that the $\approx \frac { m } { 2 }$ uncertain $( n - 1 )$-tuples contain $1$ in their $j^{th}$ position and $0$ or $1$, with equal probability $0.5$, in their $i^{th}$ position. However, LT$_j$ cannot know with certainty if a specific uncertain $( n - 1 )$-tuple contains $0$ or $1$ in the $i^{th}$ position. Therefore, when constructing $\mathds{ 1 }_{ j }$, LT$_j$ can make two detectable mistakes.
	\begin{enumerate}
		\renewcommand\labelenumi{(\textbf{M}$_\theenumi$)}
		\item	Place a $0$ in the $i^{th}$ place of a wrong $( n - 1 )$-tuple not contained in $\mathbb{ T }_{ j \rightarrow 1 }^{ i \rightarrow 0 } ( \vmathbb{ 0 }_{ i } )$.
		\item	Place a $1$ in the $i^{th}$ place of a wrong $( n - 1 )$-tuple that appears in $\mathbb{ T }_{ j \rightarrow 1 }^{ i \rightarrow 0 } ( \vmathbb{ 0 }_{ i } )$.
	\end{enumerate}
	The situation from a probabilistic point of view resembles the probability of picking the one correct configuration out of many. The total number of configurations is equal to the number of ways to place $\approx \frac { m } { 4 }$ identical objects into $\approx \frac { m } { 2 }$ distinguishable boxes. Hence, the probability that LT$_j$ places \emph{all} the $\approx \frac { m } { 4 }$ bits $0$ correctly in the $\approx \frac { m } { 2 }$ uncertain $( n - 1 )$-tuples is
	\begin{align}
		P( \text{ LT$_j$ places all $0$ correctly } )
		\approx
		\frac { 1 }
		{ \binom { \ m / 2 \ } { \ m / 4 \ } }
		\ , \tag{ P\ref{prp:Loyal Alice General Case Appendix}.ii }
	\end{align}
	which is practically zero for sufficiently large $m$. Thus, the cardinality of the symmetric difference
	\begin{align}
		\mathbb{ T }_{ j \rightarrow 1 }^{ i \rightarrow 0 } ( \vmathbb{ 0 }_{ i } )
		\bigtriangleup
		\mathbb{ T }_{ j \rightarrow 1 }^{ i \rightarrow 0 } ( \mathds{ 1 }_{ j } )
		=
		\left(
		\mathbb{ T }_{ j \rightarrow 1 }^{ i \rightarrow 0 } ( \vmathbb{ 0 }_{ i } )
		\setminus
		\mathbb{ T }_{ j \rightarrow 1 }^{ i \rightarrow 0 } ( \mathds{ 1 }_{ j } )
		\right)
		\bigcup
		\left(
		\mathbb{ T }_{ j \rightarrow 1 }^{ i \rightarrow 0 } ( \mathds{ 1 }_{ j } )
		\setminus
		\mathbb{ T }_{ j \rightarrow 1 }^{ i \rightarrow 0 } ( \vmathbb{ 0 }_{ i } )
		\right)
		\tag{ P\ref{prp:Loyal Alice General Case Appendix}.iii }
	\end{align}
	will be significantly greater than $0$ from a statistical point of view.

	Hence, when LT$_i$ uses the Algorithm \ref{alg:The $n$ Player CheckLTwCV Algorithm} during ($\mathbf{Round}_{ 3 }$) to verify the validity of the command vector $\mathds{ 1 }_{ j }$ sent by LT$_j$, he will detect inconsistencies and stick to his preliminary decision. Ergo, LT$_i$ and all loyal lieutenant generals will be included in $G_{ 0 }$ at the end of the protocol.

	The case where Alice conveys the command $1$ is identical.
\end{proof}

\begin{proposition} [$n$ Player Traitor Alice] \label{prp:Traitor Alice General Case Appendix}
	If Alice is a traitor, then the $n$ player EPRQDBA protocol will enable all loyal lieutenant generals to reach agreement, in the sense that they will either all follow the same order or all abort.
\end{proposition}

Proposition \ref{prp:Traitor Alice General Case Appendix} studies the situation where the commanding general Alice is a traitor. In this scenario, Alice has $3$ options regarding what to send to any one of her lieutenant generals. Specifically, she can send to lieutenant general LT$_i$, $0 \leq i \leq n - 2$, either:
\begin{enumerate}
	\item	the order $0$ together with a consistent command vector $\vmathbb{ 0 }_{ i }$, or
	\item	the order $1$ together with a consistent command vector $\mathds{ 1 }_{ i }$, or
	\item	an inconsistent command vector, so that LT$_i$'s initial decision would be to abort.
\end{enumerate}

We assume that disloyal lieutenant generals behave as specified in assumption ($\mathbf{T}_{ 2 }$). Moreover, if Alice is a traitor, then a disloyal lieutenant general may have $1$ or $2$ options when it comes to communicating with the other lieutenant generals. This analysis is based on the fact the it is probabilistically impossible for a lieutenant general to forge a consistent command vector, other than the one received directly from Alice. Accordingly, if he has received the consistent command vector $\vmathbb{ 0 }_{ i }$ ($\mathds{ 1 }_{ i }$), it is practically impossible to claim that he received the consistent command vector $\mathds{ 1 }_{ i }$ ($\vmathbb{ 0 }_{ i }$), or even $\vmathbb{ 0 }_{ k }$, for some $k \neq i$. Therefore, regarding the disloyal lieutenant general LT$_k$, $0 \leq k \leq n - 2$, the situation is as follows.
\begin{enumerate}
	\item	If he received from Alice an inconsistent command vector, then he is forced to behave honestly, since he can only say that his decision is to abort.
	\item	If he received from Alice the order $0$ ($1$) together with a consistent command vector $\vmathbb{ 0 }_{ k }$ ($\mathds{ 1 }_{ k }$), then he can either say that his decision is $0$ ($1$) and forward as proof the consistent command vector $\vmathbb{ 0 }_{ k }$ ($\mathds{ 1 }_{ k }$), or say that his decision is to abort.
\end{enumerate}

\begin{proof} [Proof of Proposition \ref{prp:Traitor Alice General Case Appendix}] \

	In terms of notation, to enhance the readability of this Proposition, we denote by $d_{ i }^{ ( 2 ) }$, $d_{ i }^{ ( 3 ) }$, and $d_{ i }^{ ( 4 ) }$ the initial, intermediary, and final decision of lieutenant general LT$_i$ at the end of ($\mathbf{Round}_{ 2 }$), ($\mathbf{Round}_{ 3 }$), and ($\mathbf{Round}_{ 4 }$). Analogously, we denote by $p_{ i }^{ ( 2 ) }$, $p_{ i }^{ ( 3 ) }$, and $p_{ i }^{ ( 4 ) }$ the initial, intermediary, and final proof of lieutenant general LT$_i$ at the end of ($\mathbf{Round}_{ 2 }$), ($\mathbf{Round}_{ 3 }$), and ($\mathbf{Round}_{ 4 }$). To capture the behavior of an arbitrary disloyal lieutenant general LT$_k$, we designate by $d_{ k \rightarrow i }^{ ( 2 ) }$, $d_{ k \rightarrow i }^{ ( 3 ) }$, and $d_{ k \rightarrow i }^{ ( 4 ) }$ the corresponding decisions he sends to LT$_i$at the end of ($\mathbf{Round}_{ 2 }$), ($\mathbf{Round}_{ 3 }$), and ($\mathbf{Round}_{ 4 }$), respectively, and by $p_{ k \rightarrow i }^{ ( 2 ) }$, $p_{ k \rightarrow i }^{ ( 3 ) }$, and $p_{ k \rightarrow i }^{ ( 4 ) }$ the corresponding proofs he sends to LT$_i$at the end of ($\mathbf{Round}_{ 2 }$), ($\mathbf{Round}_{ 3 }$), and ($\mathbf{Round}_{ 4 }$).

	\begin{itemize}
		\item	During ($\mathbf{Round}_{ 2 }$) all loyal lieutenant generals receive inconsistent command vectors from Alice.

		Accordingly, at the end of ($\mathbf{Round}_{ 2 }$), each loyal LT$_i$ communicates to every other lieutenant general his initial decision $d_{ i }^{ ( 2 ) } = \bot$ and his initial proof $p_{ i }^{ ( 2 ) } = \varepsilon$.

		Let us consider the possible actions of the disloyal lieutenant generals at the end of ($\mathbf{Round}_{ 2 }$).
		\begin{itemize}
			\item[$\square$]	At least one disloyal LT$_k$ sends to at least one loyal LT$_i$ his initial decision $d_{ k \rightarrow i }^{ ( 2 ) } = 0$ and his initial proof $p_{ k \rightarrow i }^{ ( 2 ) } = \vmathbb{ 0 }_{ k }$. Every other disloyal LT$_l$ tells each loyal LT$_j$ either
			\begin{itemize}
				\item[$\diamond$]	that his initial decision is $d_{ l \rightarrow j }^{ ( 2 ) } = \bot$ with initial proof $p_{ l \rightarrow j }^{ ( 2 ) } = \varepsilon$, or
				\item[$\diamond$]	that his initial decision is $d_{ l \rightarrow j }^{ ( 2 ) } = 0$, with initial proof $p_{ l \rightarrow j }^{ ( 2 ) } = \vmathbb{ 0 }_{ l }$.
			\end{itemize}
			\begin{tcolorbox}
				[
					colback = WordIceBlue,
					sharp corners,
					boxrule = 0.5 pt,
				]
				We clarify that each disloyal lieutenant general does not necessarily send the same decision to every loyal lieutenant general; he may send different decisions to different loyal lieutenant generals. The important point here is that at least one disloyal lieutenant general sends to at least one loyal lieutenant general the order $0$ along with a consistent command vector, and all the other disloyal lieutenant generals, either send $\bot$, or the same order $0$. The emphasis is on the ``same.'' There is nothing special about the specific order $0$; the case of the same order $1$ is entirely symmetrical.
			\end{tcolorbox}
			Then, during ($\mathbf{Round}_{ 3 }$), LT$_i$ will set his intermediary decision to $d_{ i }^{ ( 3 ) } = 0$, and, at the end of ($\mathbf{Round}_{ 3 }$), will communicate to every other lieutenant general his intermediary decision $d_{ i }^{ ( 3 ) } = 0$, together with his intermediary proof $p_{ i }^{ ( 3 ) } = \vmathbb{ 0 }_{ k }$.

			In this scenario, all loyal lieutenant generals to set their final decision to $0$ during ($\mathbf{Round}_{ 4 }$), thereby agreeing to follow order $0$ at the end of the protocol.
			\begin{tcolorbox}
				[
				colback = WordIceBlue,
				sharp corners,
				boxrule = 0.5 pt,
				]
				According to assumption ($\mathbf{T}_{ 2 }$), at the end of ($\mathbf{Round}_{ 3 }$), each disloyal LT$_k$ will transmit the same decision he sent at the end of ($\mathbf{Round}_{ 2 }$) to each other lieutenant general, i.e., $\bot$ or $0$. As a result, the decisions of the loyal lieutenant generals, will not be affected.
			\end{tcolorbox}
			\item[$\square$]	At least one disloyal LT$_k$ sends to at least one loyal LT$_i$ his initial decision $d_{ k \rightarrow i }^{ ( 2 ) } = 0$ and proof $p_{ k \rightarrow i }^{ ( 2 ) } = \vmathbb{ 0 }_{ k }$, and at least one disloyal LT$_l$ sends to at least one loyal LT$_j$ (conceivably $i = j$) his initial decision $d_{ l \rightarrow i }^{ ( 2 ) } = 1$ and proof $p_{ l \rightarrow j }^{ ( 2 ) } = \mathds{ 1 }_{ l }$. In this subcase, it does not matter what the remaining disloyal lieutenant generals do at the end of ($\mathbf{Round}_{ 2 }$).
			\begin{tcolorbox}
				[
					colback = WordIceBlue,
					sharp corners,
					boxrule = 0.5 pt,
				]
				The important point here is that the contradictory orders $0$ and $1$, both accompanied by consistent proofs, are sent to at least one loyal lieutenant general at the end of ($\mathbf{Round}_{ 2 }$).
			\end{tcolorbox}
			Then, during ($\mathbf{Round}_{ 3 }$), LT$_i$ will set his intermediary decision to $d_{ i }^{ ( 3 ) } = 0$, and, at the end of ($\mathbf{Round}_{ 3 }$), will communicate to every other lieutenant general his intermediary decision $d_{ i }^{ ( 3 ) } = 0$, together with his proof $p_{ i }^{ ( 3 ) } = \vmathbb{ 0 }_{ k }$. Symmetrically, during ($\mathbf{Round}_{ 3 }$), LT$_j$ will set his intermediary decision to $d_{ j }^{ ( 3 ) } = 1$, and, at the end of ($\mathbf{Round}_{ 3 }$), will communicate to every other lieutenant general his intermediary decision $d_{ j }^{ ( 3 ) } = 1$, together with his proof $p_{ j }^{ ( 3 ) } = \mathds{ 1 }_{ l }$. In the special case where $i = j$, during ($\mathbf{Round}_{ 3 }$), LT$_i$ will set his intermediary decision to $d_{ i }^{ ( 3 ) } = \bot$, and, at the end of ($\mathbf{Round}_{ 3 }$), will communicate to every other lieutenant general his intermediary decision $d_{ i }^{ ( 3 ) } = \bot$, together with his proof $p_{ i }^{ ( 3 ) } = \vmathbb{ 0 }_{ k }, \mathds{ 1 }_{ l }$. This will cause all other loyal lieutenant generals to set their final decision to $\bot$ during ($\mathbf{Round}_{ 4 }$), thereby achieving agreement by aborting at the end of the protocol.
			\item[$\square$]	Every disloyal LT$_k$ conveys to all loyal LT$_i$ that his initial decision is $d_{ l \rightarrow j }^{ ( 2 ) } = \bot$ with initial proof $p_{ l \rightarrow j }^{ ( 2 ) } = \varepsilon$.

			Then, during ($\mathbf{Round}_{ 3 }$), every loyal LT$_i$ will cling to his decision to abort, and, at the end of ($\mathbf{Round}_{ 3 }$), will communicate to every other lieutenant general his intermediary decision $d_{ i }^{ ( 3 ) } = \bot$, together with his proof $p_{ i }^{ ( 3 ) } = \varepsilon$. This will cause all other loyal lieutenant generals to set their final decision to $\bot$ during ($\mathbf{Round}_{ 4 }$), thereby achieving agreement by aborting at the end of the protocol. This will cause all loyal lieutenant generals to finalize their decision to abort during ($\mathbf{Round}_{ 4 }$), thereby achieving agreement by aborting at the end of the protocol.
			\begin{tcolorbox}
				[
					colback = WordIceBlue,
					sharp corners,
					boxrule = 0.5 pt,
				]
				According to assumption ($\mathbf{T}_{ 2 }$), at the end of ($\mathbf{Round}_{ 3 }$), each disloyal LT$_k$ will transmit the same decision $\bot$ he sent at the end of ($\mathbf{Round}_{ 2 }$) to each other lieutenant general. As a result, the decisions of the loyal lieutenant generals, will not be affected.
			\end{tcolorbox}
		\end{itemize}
		\item	During ($\mathbf{Round}_{ 2 }$) all loyal lieutenant generals receive and verify the same order from Alice. Without loss of generality, we may suppose that the order in question is $0$.

		Accordingly, at the end of ($\mathbf{Round}_{ 2 }$), each loyal lieutenant general LT$_i$ communicates to every other lieutenant general his initial decision $d_{ i }^{ ( 2 ) } = 0$ and his initial proof $p_{ i }^{ ( 2 ) } = \vmathbb{ 0 }_{ i }$.

		Let us consider the possible actions of the disloyal lieutenant generals at the end of ($\mathbf{Round}_{ 2 }$).
		\begin{itemize}
			\item[$\square$]	At least one disloyal LT$_k$ transmits to at least one loyal LT$_i$ his initial decision $d_{ k \rightarrow i }^{ ( 2 ) } = 1$ with a consistent initial proof $p_{ k \rightarrow i }^{ ( 2 ) } = \mathds{ 1 }_{ k }$.

			Then, during ($\mathbf{Round}_{ 3 }$), LT$_i$ will set his intermediary decision to $d_{ i }^{ ( 3 ) } = \bot$, and, at the end of ($\mathbf{Round}_{ 3 }$), will communicate to every other lieutenant general his intermediary decision $d_{ i }^{ ( 3 ) } = \bot$, together with his intermediary proof $p_{ i }^{ ( 3 ) } = \mathds{ 1 }_{ k }$. This will cause all other loyal lieutenant generals to set their final decision to $\bot$ during ($\mathbf{Round}_{ 4 }$), thereby achieving agreement by aborting at the end of the protocol.

			\item[$\square$]	Every disloyal LT$_k$ tells each loyal LT$_i$ either
			\begin{itemize}
				\item[$\diamond$]	that his initial decision is $d_{ k \rightarrow i }^{ ( 2 ) } = \bot$ with initial proof $p_{ k \rightarrow i }^{ ( 2 ) } = \varepsilon$, or
				\item[$\diamond$]	that his initial decision is $d_{ k \rightarrow i }^{ ( 2 ) } = 0$, with initial proof $p_{ k \rightarrow i }^{ ( 2 ) } = \vmathbb{ 0 }_{ k }$.
			\end{itemize}
			\begin{tcolorbox}
				[
					colback = WordIceBlue,
					sharp corners,
					boxrule = 0.5 pt,
				]
				We emphasize that each disloyal LT$_k$ does not necessarily send the same decision to every loyal lieutenant general; he may send different decisions to different loyal lieutenant generals.
			\end{tcolorbox}
			Then, during ($\mathbf{Round}_{ 3 }$), every LT$_i$ will cling to his initial decision and set his intermediary decision to $d_{ i }^{ ( 3 ) } = 0$, and, at the end of ($\mathbf{Round}_{ 3 }$), will communicate to every other lieutenant general his intermediary decision $d_{ i }^{ ( 3 ) } = 0$, together with his intermediary proof $p_{ i }^{ ( 3 ) } = \varepsilon$.

			In this scenario, all loyal lieutenant generals to set their final decision to $0$ during ($\mathbf{Round}_{ 4 }$), thereby agreeing to follow order $0$ at the end of the protocol.
			\begin{tcolorbox}
				[
					colback = WordIceBlue,
					sharp corners,
					boxrule = 0.5 pt,
				]
				According to assumption ($\mathbf{T}_{ 2 }$), at the end of ($\mathbf{Round}_{ 3 }$), each disloyal LT$_k$ will transmit the same decision he sent at the end of ($\mathbf{Round}_{ 2 }$) to each other lieutenant general, i.e., $\bot$ or $0$. As a result, the decisions of the loyal lieutenant generals, will not be affected.
			\end{tcolorbox}
		\end{itemize}
		\item	During ($\mathbf{Round}_{ 2 }$) at least one loyal lieutenant general LT$_i$ receives the order $0$ and the consistent command vector $\vmathbb{ 0 }_{ i }$ from Alice, and at least one loyal lieutenant general LT$_j$ receives the order $1$ and the consistent command vector $\mathds{ 1 }_{ j }$ from Alice. In this scenario, it does not matter what the remaining disloyal lieutenant generals do at the end of ($\mathbf{Round}_{ 2 }$).
		\begin{tcolorbox}
			[
				colback = WordIceBlue,
				sharp corners,
				boxrule = 0.5 pt,
			]
			The important point here is that the contradictory orders $0$ and $1$, both accompanied by consistent proofs, are sent to two loyal lieutenant generals at the end of ($\mathbf{Round}_{ 2 }$).
		\end{tcolorbox}
		Accordingly, at the end of ($\mathbf{Round}_{ 2 }$), LT$_i$ communicates to every other lieutenant general his initial decision $d_{ i }^{ ( 2 ) } = 0$ and his initial proof $p_{ i }^{ ( 2 ) } = \vmathbb{ 0 }_{ i }$, and LT$_j$ communicates to every other lieutenant general his initial decision $d_{ j }^{ ( 2 ) } = 1$ and his initial proof $p_{ j }^{ ( 2 ) } = \mathds{ 1 }_{ j }$.

		Then, during ($\mathbf{Round}_{ 3 }$), all loyal lieutenant generals set their intermediary decision to $\bot$, and, during ($\mathbf{Round}_{ 4 }$), all loyal lieutenant generals set their final decision to $\bot$, thereby achieving agreement by aborting at the end of the protocol.
		\item	During ($\mathbf{Round}_{ 2 }$) some loyal lieutenant generals receive and verify the same order from Alice, which, without loss of generality, we may suppose that is $0$, and the remaining loyal lieutenant generals receive inconsistent command vectors from Alice.

		Let us consider the possible actions of the disloyal lieutenant generals at the end of ($\mathbf{Round}_{ 2 }$).
		\begin{itemize}
			\item[$\square$]	Every other disloyal LT$_k$ tells each loyal LT$_i$ either
			\begin{itemize}
				\item[$\diamond$]	that his initial decision is $d_{ l \rightarrow k }^{ ( 2 ) } = \bot$ with initial proof $p_{ k \rightarrow i }^{ ( 2 ) } = \varepsilon$, or
				\item[$\diamond$]	that his initial decision is $d_{ l \rightarrow k }^{ ( 2 ) } = 0$, with initial proof $p_{ k \rightarrow i }^{ ( 2 ) } = \vmathbb{ 0 }_{ k }$.
			\end{itemize}
			Then, during ($\mathbf{Round}_{ 3 }$), every loyal LT$_i$ will set his intermediary decision to $d_{ i }^{ ( 3 ) } = 0$, and, during ($\mathbf{Round}_{ 4 }$), all loyal lieutenant generals set their final decision to $0$, thereby agreeing to follow order $0$ at the end of the protocol.
			\begin{tcolorbox}
				[
					colback = WordIceBlue,
					sharp corners,
					boxrule = 0.5 pt,
				]
				According to assumption ($\mathbf{T}_{ 2 }$), at the end of ($\mathbf{Round}_{ 3 }$), each disloyal LT$_k$ will transmit the same decision he sent at the end of ($\mathbf{Round}_{ 2 }$) to each other lieutenant general, i.e., $\bot$ or $0$. As a result, the decisions of the loyal lieutenant generals, will not be affected.
			\end{tcolorbox}
		\item[$\square$]	At least one disloyal LT$_k$ sends to at least one loyal LT$_i$ his initial decision $d_{ k \rightarrow i }^{ ( 2 ) } = 1$ and proof $p_{ k \rightarrow i }^{ ( 2 ) } = \mathds{ 1 }_{ k }$. In this subcase, it does not matter what the remaining disloyal lieutenant generals do at the end of ($\mathbf{Round}_{ 2 }$).
		\begin{tcolorbox}
			[
				colback = WordIceBlue,
				sharp corners,
				boxrule = 0.5 pt,
			]
			The important point here is that the contradictory order $1$, together with consistent proof, is sent to at least one loyal lieutenant general at the end of ($\mathbf{Round}_{ 2 }$).
		\end{tcolorbox}
		Then, during ($\mathbf{Round}_{ 3 }$), LT$_i$, having received two contradictory orders with consistent proofs, will set his intermediary decision to $d_{ i }^{ ( 3 ) } = \bot$, and, at the end of ($\mathbf{Round}_{ 3 }$), will communicate to every other lieutenant general his intermediary decision $d_{ i }^{ ( 3 ) } = \bot$, together with his proof $p_{ i }^{ ( 3 ) } = \vmathbb{ 0 }_{ j }, \mathds{ 1 }_{ k }$. This will cause all other loyal lieutenant generals to set their final decision to $\bot$ during ($\mathbf{Round}_{ 4 }$), thereby achieving agreement by aborting at the end of the protocol.
		\item[$\square$]	Every disloyal LT$_k$ conveys to all loyal LT$_i$ that his initial decision is $d_{ l \rightarrow j }^{ ( 2 ) } = \bot$ with initial proof $p_{ l \rightarrow j }^{ ( 2 ) } = \varepsilon$.

		Then, during ($\mathbf{Round}_{ 3 }$), every loyal LT$_i$ will set his intermediary decision to $0$, and, at the end of ($\mathbf{Round}_{ 3 }$), will communicate to every other lieutenant general his intermediary decision $d_{ i }^{ ( 3 ) } = 0$, together with his proof $p_{ i }^{ ( 3 ) } = \vmathbb{ 0 }_{ j }$. This will cause all other loyal lieutenant generals to set their final decision to $0$ during ($\mathbf{Round}_{ 4 }$), thereby achieving agreement by aborting at the end of the protocol. This will cause all loyal lieutenant generals to finalize their decision to abort during ($\mathbf{Round}_{ 4 }$), thereby agreeing to follow order $0$ at the end of the protocol.
		\begin{tcolorbox}
			[
				colback = WordIceBlue,
				sharp corners,
				boxrule = 0.5 pt,
			]
			According to assumption ($\mathbf{T}_{ 2 }$), at the end of ($\mathbf{Round}_{ 3 }$), each disloyal LT$_k$ will transmit the same decision $\bot$ he sent at the end of ($\mathbf{Round}_{ 2 }$) to each other lieutenant general. As a result, the decisions of the loyal lieutenant generals, will not be affected.
		\end{tcolorbox}
	\end{itemize}
\end{itemize}

Ergo, all loyal lieutenant generals will reach agreement.
\end{proof}

\bibliographystyle{ieeetr}
\bibliography{EPRQDBA}

\end{document}